\title{Internal Calculi for Separation Logics}
\titlerunning{Internal Calculi for Separation Logics \ifLongVersionWithAppendix \\ (version with the technical appendix) \fi}
\author{St\'ephane Demri}{LSV, CNRS, ENS Paris-Saclay, Universit\'e Paris-Saclay, France}{}{}{}
\author{Etienne Lozes}{I3S, Universit\'e C\^ote d'Azur, France}{}{}{}
\author{Alessio Mansutti}{LSV, CNRS, ENS Paris-Saclay Universit\'e Paris-Saclay, France}{}{}{}
\authorrunning{S.\ Demri, E.\ Lozes, A.\ Mansutti}
\keywords{Separation logic, internal calculus, adjunct/quantifier elimination}
\newif\ifLongVersionWithAppendix\LongVersionWithAppendixfalse
\newif\ifCSLProc\CSLProcfalse
\begin{document}
\maketitle
\begin{abstract}
We present a general approach to axiomatise separation logics
with heaplet semantics with no external features such as nominals/labels.
\LongVersionOnly{
It is not possible to rely completely on proof systems for Boolean BI as
the concrete semantics needs to be taken account and without extraneous
help, it is difficult to get a grip on  the constraints on heaps.
}
To start with, we design the first (internal) Hilbert-style axiomatisation for the quantifier-free
separation logic \slSW.
\LongVersionOnly{
The calculus is divided in three parts: the axiomatisation
of core formulae where Boolean combinations of core formulae capture the whole logic,
axioms and inference rules to simulate a bottom-up elimination of separating connectives, and finally
axioms and inference rules from propositional calculus and Boolean BI.
}
We instantiate the method by introducing a new separation logic with essential features: it is equipped with
the separating conjunction, the 
predicate $\ls$, and a natural guarded form
of first-order quantification. We apply our approach for its axiomatisation. 
\LongVersionOnly{that extends, for instance, Boolean combinations of symbolic heaps.}
As a by-product of our method, we also establish
the exact expressive power of this new logic and we show \pspace-completeness of its satisfiability problem.

\end{abstract}
\section{Introduction}\label{section:introduction}
{\bf \textsf{The virtue of axiomatising program logics.}}
Designing a Hilbert-style axiomatisation for your favourite logic is usually quite challenging.
This does not  lead necessarily to optimal decision procedures, but the completeness proof usually provides essential
insights to better understand the logic at hand. That is why many logics related to program verification
have been axiomatised, often requiring non-trivial completeness proofs.
By way of example, there exist axiomatisations for
the linear-time $\mu$-calculus~\cite{Kaivola95,Doumane17},
the modal $\mu$-calculus~\cite{Walukiewicz00}
or for
the alternating-time temporal logic ATL~\cite{Goranko&vanDrimmelen06}.
\LongVersionOnly{
the full computation tree logic CTL$^{*}$~\cite{Reynolds01},
for probabilistic extensions of $\mu$-calculus~\cite{Larsen&Mardare&Xue16}
or for a coalgebraic generalisation~\cite{Schroder&Venema18}.
}
Concerning the separation logics that extend Hoare-Floyd logic
to verify programs with mutable data structures
(see e.g.~\cite{OHearn&Pym99,Reynolds02,Ishtiaq&OHearn01,OHearn12,Pym&Spring&OHearn18}),
a Hilbert-style axiomatisation of Boolean BI has been introduced in~\cite{Galmiche&Larchey06},
but remained at the abstract level of Boolean BI. More recently, HyBBI~\cite{Brotherston&Villard14},
a hybrid version of Boolean BI has been introduced in order
to axiomatise various classes of separation logics; HyBBI
naturally considers  classes of abstract models (typically preordered partial monoids) but it does not fit
exactly the heaplet semantics of separation logics. Furthermore, the addition of nominals (in the sense
of hybrid modal logics, see e.g.~\cite{Areces&Blackburn&Marx01}) extends substantially the object language. Other  frameworks
to axiomatise classes of abstract separation logics can be found
in~\cite{Docherty&Pym18} and in~\cite{Houetal18}, respectively with labelled tableaux calculi
and with sequent-style proof systems.\\[3pt]
{\bf \textsf{Our motivations.}}
Since the birth of separation logics, there has been  a lot of interest in the study of decidability and computational complexity issues,
see e.g.~\cite{Bozga&Iosif&Perarnau10,Calcagno&Yang&OHearn01,Cooketal11,Brotherston&Kanovich18,DemriLM18,Mansutti18},
and comparatively a bit less attention to the design of proof systems, and even less with the puristic approach
that consists in discarding any external feature such as nominals or labels in the calculi.
The well-known advantages of such an approach include an exhaustive understanding of the expressive power of
the logic and discarding the use of any external artifact referring to semantical objects.
For instance, a complete tableaux calculus with labels for quantifier-free separation logic is designed in~\cite{Galmiche&Mery10} --with an extension of
the calculus to handle quantifiers,
whereas Hilbert-style calculi for abstract separation logics with nominals are defined in~\cite{Brotherston&Villard14}
(see also in~\cite{Hou&Tiu16} a proof system for a first-order abstract separation logic with an abstracted version of the
points-to predicate).
Similarly, display calculi for bunched logics are provided in~\cite{Brotherston12} and such calculi extend
Gentzen-style proof systems by allowing new structural connectives.
In this paper, we advocate a puristic approach and aim at designing Hilbert-style proof systems for quantifier-free
separation logic \slSW (which includes the separating conjunction $\separate$ and implication $\magicwand$, as well as all Boolean connectives) and more generally for other separation logics,  while remaining within the very logical
language.
Consequently, in this work we only focus on axiomatising the separation logics,
and  we have no claim for  practical applications in the field of program verification with separation logics.
Aiming at internal calculi is  a non-trivial
task as the general frameworks for abstract separation logics make use of labels, see e.g.~\cite{Docherty&Pym18,Houetal18}.
We cannot fully rely on label-free calculi for BI, see e.g.~\cite{Pym02,Galmiche&Larchey06}, as separation
logics are usually understood
as Boolean BI interpreted on models of heap memory and therefore require calculi that
handle specifically the stack-and-heap models.
Finally, we know many translations from separation logics into
logics or theories, see e.g.~\cite{Calcagno&Gardner&Hague05,Piskac&Wies&Zufferey13,Brochenin&Demri&Lozes12},
but completeness cannot in general be inherited by sublogics as the proof system should
only use the sublogic and therefore the axiomatisation of sublogics may lead
to different methods.\\[3pt]
{\bf \textsf{Our contribution.}}
Though our initial motivation is to design an internal Hilbert-style axiomatisation for \slSW,
 we go beyond this, and we propose a method to axiomatise other separation logics assuming that
key properties are satisfied. Hence, we
consider a broader perspective and
we use our approach on two separation logics: quantifier-free separation logic and a new separation
logic that admits a form of guarded first-order quantification.
Our results are not limited to (internal) axiomatisation, as we provide a complexity analysis based
on the properties of the derivations in the proof system.  Let us be a bit more precise.

In Section~\ref{section:PSL}, we provide the first Hilbert-style proof system
for \slSW
that uses axiom schemas and rules involving only formulae of this logic.
Each formula of \slSW is  equivalent to a Boolean combination of {\em core formulae}: simple formulae of the logic expressing elementary properties
about the models~\cite{Lozes04}.
Though core formulae (also called {\em test formulae}) have been handy in several occasions for
establishing complexity results for separation logics, see e.g.~\cite{Demrietal17,DemriLM18,Echenim&Iosif&Peltier19},
in the paper, these formulae are instrumental for the axiomatisation.
Indeed, we distinguish the axiomatisation of Boolean combinations of core formulae
from the transformation of  formulae into such Boolean combinations.
Thus, we show how to introduce axioms to transform every formula into
a Boolean combination of core formulae, together with axioms to deal with these simple formulae.
Schematically, for a valid formula $\aformula$, we conclude $\vdash \aformula$  from
$\vdash \aformula'$ and $\vdash \aformula' \Leftrightarrow \aformula$, where
$\aformula'$ is a Boolean combination of core formulae.
Another difficulty arises as we have to design an axiomatisation for such Boolean combinations.
So, the calculus is divided in three parts: the axiomatisation
of  Boolean combinations of core formulae,
axioms and inference rules to simulate a bottom-up elimination of separating connectives, and finally
 axioms and inference rules from propositional calculus and Boolean BI.
\LongVersionOnly{Therefore, part of the axiomatisation
is dedicated to such Boolean combinations and another part is dedicated to establish the  equivalence.}
Such an approach that consists in first axiomatising a syntactic fragment of the whole logic (in our case, the core formulae), is best described in~\cite{Doumane17}
(see also~\cite{Walukiewicz00,WangC13,Luck18,Demri&Fervari&Mansutti19}).

\LongVersionOnly{
Section~\ref{section:PSL} concludes by recapitulating the key ingredients of our method to axiomatise \slSW
with the intention to reuse it for other separation logics. For instance, it is essential to be able to design
a family of core formulae whose Boolean combinations capture the expressive power of the logic  and such that
the core formulae are all expressible in the logic, which allows us to provide a calculus without any external
help.
}

In Section~\ref{section:IntervalSL},
our intention is to add standard features to the logic such as first-order quantification and inductive predicates, and to apply
our method for axiomatisation.
As $\seplogic{\separate,\magicwand,\ls}$ (i.e. $\seplogic{\separate,\magicwand}$ enriched with the predicate $\ls$)
is already non-finitely axiomatisable~\cite{Demri&Lozes&Mansutti18bis}, we need to fine-tune the logical formalism.
That is why, we introduce a new separation logic \intervalSL that admits the separating
conjunction $\separate$ (no $\magicwand$) and a guarded form of first-order quantification.
In the formula $\inpath{\avariable}{\avariablebis}{\avariableter} \aformula$, the variable $\avariableter$ is existentially quantified over the set of locations in the minimal non-empty path from $\avariable$ to $\avariablebis$, if any.
\LongVersionOnly{holds whenever $\avariablebis$ is reachable from $\avariable$ in at least one step,
and there is a value $\avariableter$ along the
minimal path between $\avariable$ and  $\avariablebis$ such that  $\aformula$ holds.}
The logic \intervalSL contains the symbolic heap fragment~\cite{Berdine&Calcagno&OHearn04,Cooketal11} but also richer logics such as
\slSRp from~\cite{DemriLM18}.
Hence, the logic \intervalSL
captures the list segment predicate $\ls$ but also allows us to quantify in a guarded  form
over locations in a minimal path, which makes it a promising language.
We provide an internal Hilbert-style axiomatisation for \intervalSL, illustrating
the flexibility of our method.
\LongVersionOnly{even for a logic with a  built-in inductive predicate $\ls$ and a restricted
form of first-order quantification.}
It requires the  design of
an adequate family of core formulae that captures \intervalSL. The axiomatisation of Boolean combinations
of core formulae reveals to be challenging, and the elimination of guarded quantification or separating conjunction
happens also to require complex developments.
\LongVersionOnly{Hence, the presentation of our method for axiomatisation a sophisticated
and expressive separation logic witnesses its strength and generality.}
We analyse the derivations from
the calculus to establish a small model property for the logic and, together with a symbolic model-checking algorithm, prove that the satisfiability problem for \intervalSL is in \pspace.

\cut{
\noindent%
\ifLongVersionWithAppendix%
{\em Omitted proofs are in the technical appendices~\ref{appendix:Section3},~\ref{appendix:Section4} and~\ref{appendix:DerivedTautologies}.}
\else%
{\em Omitted proofs are in the appendix of the long version of the paper available online~\cite{Demri&Lozes&Mansutti19}.}
\fi
}

\section{Preliminaries}\label{section:preliminaries}
{\bf \textsf{Quantifier-free separation logic \slSW.}}
We present the quantifier-free separation logic \slSW, that includes standard features such as
the separating conjunction $\separate$ and the separating implication $\magicwand$.
\LongVersionOnly{and closure under Boolean connectives.}
\LongVersionOnly{
Section~\ref{section:PSL} is dedicated to its internal Hilbert-style axiomatisation
whereas Section~\ref{section:IntervalSL} is  dedicated to the Hilbert-style axiomatisation of the newly introduced
separation logic  \intervalSL.
}
Let $\PVAR = \set{\avariable, \avariablebis, \ldots}$ be a countably infinite set of \defstyle{program variables}.
The formulae $\aformula$ of \slSW and its atomic formulae
$\aatomicformula$ are built from  the grammars below (where $\avariable, \avariablebis \in \PVAR$ and the connectives $\Rightarrow$, $\Leftrightarrow$ and $\vee$ are defined as usually).
\begin{nscenter}
$
\aatomicformula ::= \avariable = \avariablebis \ \mid \
                    \avariable \Ipto \avariablebis \ \mid \
                    \emptyconstant
  \qquad\qquad
\aformula ::= \aatomicformula \ \mid \  \neg \aformula \ \mid \ \aformula \wedge \aformula
                 \ \mid \ \aformula \separate \aformula  \ \mid \ \aformula \magicwand \aformula.
$
\end{nscenter}
In the heaplet semantics, the formulae of \slSW are interpreted on  \defstyle{memory states} that are pairs
$\pair{\astore}{\aheap}$  where
${\astore: \PVAR \rightarrow \LOC}$ is
a variable valuation (the \defstyle{store}) from the set of program variables to a
countably infinite set of \defstyle{locations} $\LOC = \set{\alocation_0,\alocation_1, \alocation_2, \ldots}$ whereas
$\aheap: \LOC \to_{\fin} \LOC$ is a partial function with finite domain (the \defstyle{heap}).
We write $\domain{\aheap}$ to denote its domain and $\range{\aheap}$ to denote its range.
A \defstyle{memory cell} of $\aheap$ is understood as a pair of locations $\pair{\alocation}{\alocation'}$
such that $\alocation \in \domain{\aheap}$ and $\alocation'  = \aheap(\alocation)$.
As usual, the heaps $\aheap_1$ and $\aheap_2$ are said to be \defstyle{disjoint}, written $\aheap_1 \perp \aheap_2$,
if $\domain{\aheap_1} \cap \domain{\aheap_2} = \emptyset$;  when this holds, we write $\aheap_1 + \aheap_2$ to denote the heap
corresponding to the disjoint union of the graphs of $\aheap_1$ and $\aheap_2$, hence $\domain{\aheap_1 + \aheap_2} = \domain{\aheap_1} \uplus \domain{\aheap_2}$.
\LongVersionOnly{When the domains of $\aheap_1$ and $\aheap_2$  are not disjoint, the composition $\aheap_1 + \aheap_2$ is not defined.}
Moreover, we write $\aheap' \sqsubseteq \aheap$ to denote that $\domain{\aheap'} \subseteq \domain{\aheap}$ and for all locations
$\alocation \in \domain{\aheap'}$, we have $\aheap'(\alocation) = \aheap(\alocation)$.
Given a heap $\aheap$, we define a family of  $(\aheap^\alength)_{\alength \in \Nat}$ of partial functions such that
$\aheap^0$ is the identity function on $\LOC$, $\aheap^1 = \aheap$ and
for all $\alength \geq 2$ and $\alocation \in \LOC$, we have
$\aheap^{\alength}(\alocation) \egdef \aheap(\aheap^{\alength-1}(\alocation))$, assuming that $\aheap^{\alength-1}(\alocation)$ is defined and belongs to $\domain{\aheap}$, otherwise
$\aheap^{\alength}(\alocation)$ is undefined.
\LongVersionOnly{
Given a heap $\aheap$, we define a family of heaps $(\aheap^\alength)_{\alength \in \Nat}$ as follows understood that
$\aheap^\alength$ is obtained from  $\alength$ functional composition(s) of $\aheap$. By definition, $\aheap^0$ is the identity function on $\LOC$,
$\aheap^1 \egdef \aheap$ and for all $\alength \geq 2$ and $\alocation \in \LOC$, we have
$\aheap^{\alength}(\alocation) \egdef \aheap(\aheap^{\alength-1}(\alocation))$, assuming that $\aheap^{\alength-1}(\alocation)$ is defined and belongs to $\domain{\aheap}$, otherwise
$\aheap^{\alength}(\alocation)$ is undefined.
}
The satisfaction relation $\models$ is defined as follows
(omitting standard clauses for $\neg$, $\wedge$):
\begin{nscenter}
\begin{tabular}[t]{cc}
  $\pair{\astore}{\aheap} \models \avariable = \avariablebis$ $\equivdef$ $\astore(\avariable) = \astore(\avariablebis)$ &
  $\pair{\astore}{\aheap} \models \emp$ $\equivdef$ $\domain{\aheap} = \emptyset$ \\
  \multicolumn{2}{l}{$\pair{\astore}{\aheap} \models \avariable {\Ipto} \avariablebis$ $\equivdef$
  $\astore(\avariable)\in\domain{\aheap}$ and $\aheap(\astore(\avariable)) = \astore(\avariablebis)$} \\
  \multicolumn{2}{l}{$\pair{\astore}{\aheap} \models \aformula_1 \separate \aformula_2$ $\equivdef$ $\exists\aheap_1,\aheap_2.$
  $\aheap_1 \bot \aheap_2$, $(\aheap_1 + \aheap_2) = \aheap$,
  $\pair{\astore}{\aheap_1}  \models \aformula_1$ and  $\pair{\astore}{\aheap_2}  \models \aformula_2$}\\
  \multicolumn{2}{l}{$\pair{\astore}{\aheap} \models \aformula_1 \magicwand \aformula_2$ $\equivdef$ $\forall\aheap_1.$
  ($\aheap_1 \bot \aheap$ and $\pair{\astore}{\aheap_1} \models \aformula_1$)
  implies $\pair{\astore}{\aheap + \aheap_1} \models \aformula_2$.}
\end{tabular}
\end{nscenter}
We denote with $\bot$ the contradiction $\avariable \neq \avariable$, and with $\top$ its negation $\neg\bot$.
The septraction operator $\septraction$ (kind of dual of $\magicwand$), defined
by $\aformula\septraction\aformulabis \egdef \neg(\aformula\magicwand\neg\aformulabis)$,
has the following semantics:
\begin{nscenter}
$\pair{\astore}{\aheap}\models \aformula\septraction\aformulabis$
$\iff$ there is a heap $\aheap'$ such that $\aheap\bot\aheap'$, $\pair{\astore}{\aheap'}\models\aformula$, and $\pair{\astore}{\aheap + \aheap'}\models\aformulabis$.
\end{nscenter}
Moreover, we introduce the following (important) shortcuts:
\begin{itemize}[nosep,before=\vspace{3pt},after=\vspace{3pt}]
\item $\alloc{\avariable}$ which is satisfied by $\pair{\astore}{\aheap}$ iff $\astore(\avariable) \in \domain{\aheap}$.
It is defined as $(\avariable \Ipto \avariable) \magicwand \false$.
\item
$\size \geq \inbound$ which is satisfied by $\pair{\astore}{\aheap}$ iff $\card{\domain{\aheap}} \geq \inbound$, where $\inbound \in \Nat$ and $\card{\aset}$ denotes the cardinality of the set $\aset$.
This shortcut is inductively defined as
$\sizegeq{0} \egdef \true$, $\sizegeq{1} \egdef \lnot \emp$ and, for each $\inbound \in \Nat$, $\sizegeq{\inbound{+}2} \egdef \lnot\emp \separate \sizegeq{\inbound{+}1}$.
\end{itemize}
We use $\size {=} \inbound$ as a shorthand for $\size {\geq }\inbound \land \lnot \size {\geq} \inbound {+} 1$.
A formula $\aformula$ is \defstyle{valid} if $\pair{\astore}{\aheap}\models\aformula$ for all $\pair{\astore}{\aheap}$ (and we write $\models \aformula$).
For a complete description of separation logic, see e.g.~\cite{Reynolds02}.\\[3pt]
\noindent
{\bf \textsf{Hilbert-style proof systems.}}
A \defstyle{Hilbert-style proof system} $\aproofsystem$ is defined as a set of \defstyle{derivation step schemata}
 $((\aformulaschema_1,\dots,\aformulaschema_n),\aformulaschemabis)$ with $n\geq 0$,
where $\aformulaschema_1,\dots,\aformulaschema_n,\aformulaschemabis$ are \defstyle{formula schemata}.
When $n \geq 1$, $((\aformulaschema_1,\dots,\aformulaschema_n),\aformulaschemabis)$ is called an \defstyle{inference rule},
otherwise it is an \defstyle{axiom}. As usual, formula schemata generalise the notion of formulae by allowing metavariables
for formulae (typically $\aformula, \aformulabis, \aformulater$), for program variables (typically $\avariable, \avariablebis, \avariableter$)
or for any type of syntactic objects in formulae, depending on the context.
 The set of formulae \defstyle{derivable} from
$\aproofsystem$ is the least set $S$ such that for all
 $((\aformulaschema_1,\dots,\aformulaschema_n),\aformulaschemabis)\in\aproofsystem$ and for all
substitutions $\asubstitution$ such that
$\aformulaschema_1\asubstitution,\dots,\aformulaschema_n\asubstitution\in S$, $\aformulaschemabis\asubstitution\in S$.
We write $\pfentails{\aproofsystem}{\aformula}$ if $\aformula$ is
derivable from $\aproofsystem$.
A proof system $\aproofsystem$
is \defstyle{sound} if all derivable formulae are valid. $\aproofsystem$
is \defstyle{complete} if
all valid formulae are derivable.
$\aproofsystem$ is \defstyle{strongly complete}
iff for all sets of formulae $\Gamma$ and formulae $\aformula$,
we have $\Gamma \models \aformula$ (semantical entailment) iff
$\pfentails{\aproofsystem\cup \Gamma}{\aformula}$.

Interestingly enough, there is no strongly complete proof system
for separation logic, as strong completeness implies compactness
and separation logic is not compact.
Indeed, $\set{\size\geq \inbound \mid \inbound\in\Nat}$ is unsatisfiable, as heaps have finite domains, but all finite subsets of it
are satisfiable.
Even for the weaker notion of completeness, deriving an Hilbert-style axiomatisation for \slSW remains challenging.
Indeed, the satisfiability problem  for \slSW reduces to its validity problem,
making \slSW an unusual logic from a proof-theoretical point of view.
Let us develop a bit further this point.
Let $\aformula$ be a formula  with program variables in $\asetvar \subseteq_{\fin} \PVAR$, and let $\approx$ be an equivalence relation on $\asetvar$.
  The formula
  $\aformulabis_{\approx} \egdef (\emp \land \bigwedge_{\avariable \approx \avariablebis} \avariable = \avariablebis \land  \bigwedge_{\substack{\avariable \not\approx \avariablebis}} \avariable \neq \avariablebis )\implies (\aformula \septraction \true)$
  can be shown to be valid iff
  for every store $\astore$ agreeing on $\approx$, there is a heap $\aheap$ such that $\pair{\astore}{\aheap} \models \aformula$.
  It is known
  that for  all stores $\astore,\astore'$ agreeing on $\approx$, and every heap $\aheap$,
$\pair{\astore}{\aheap}$ and $\pair{\astore'}{\aheap}$
  satisfy the same set of formulae having variables from $\asetvar$.
  Since the antecedent of $\aformulabis_{\approx}$ is satisfiable, we conclude that
   $\aformulabis_{\approx}$ is valid iff there are a store $\astore$ agreeing on $\approx$ and a heap $\aheap$ such that $\pair{\astore}{\aheap} \models \aformula$.
   To check whether $\aformula$ is satisfiable, it is sufficient to find an equivalence relation $\approx$ on $\asetvar$ such that
   $\aformulabis_{\approx}$ is valid. As the number of equivalence relations on $\asetvar$ is finite, we obtain a Turing reduction from satisfiability to validity.
Consequently, it is not possible to define sound and complete axiom systems for any extension of \slSW admitting an undecidable validity problem
(as long as there is a reduction from satisfiability to validity, as above).
A good example is
$\seplogic{\separate,\magicwand,\ls}$~\cite{Demri&Lozes&Mansutti18bis} (extension of \slSW with  $\ls$).
Indeed, in order to obtain a sound and complete axiom system, the validity problem has to be recursively enumerable (r.e.).
However,  this would imply that the satisfiability problem is also r.e.. As
$\aformula$ is not valid iff
$\lnot \aformula$ is satisfiable, we then conclude that the set of valid formulae is recursive, hence decidable, a contradiction.

It is worth also noting that quantifier-free $\slSW$ axiomatised below admits a \pspace-complete validity problem, see e.g.~\cite{Calcagno&Yang&OHearn01},
and should not be confused with propositional separation logic with the stack-heap models shown undecidable in~\cite[Corollary 5.1]{Brotherston&Kanovich14}
(see also~\cite{DemriDeters15bis}),
in which there are propositional variables interpreted by sets of
memory states.
\section{Hilbert-style proof system for \slSW}\label{section:PSL}

\begin{figure*}
\begin{axiombox}[label=axioms:corePSL]{$\coresys$: Axioms for Boolean combinations of core formulae}
\begin{footnotesize}
\begin{enumerate}[align=left,leftmargin=*]
\begin{minipage}{0.4\linewidth}
\item[\axlab{A^\corepedix}{coreAx:EqRef}] $\avariable = \avariable$
\item[\axlab{A^\corepedix}{coreAx:EqSub}] $\aformula \land \avariable = \avariablebis \implies \aformula\completesubstitute{\avariable}{\avariablebis}$
\item[\axlab{A^\corepedix}{coreAx:PointAlloc}] $\avariable \Ipto \avariablebis \implies \alloc{\avariable}$
\end{minipage}%
\begin{minipage}{0.6\linewidth}
\item[\axlab{A^\corepedix}{coreAx:PointInj}] $\avariable \Ipto \avariablebis \land \avariable \Ipto \avariableter \implies \avariablebis = \avariableter$
\item[\axlab{I^\corepedix}{coreAx:Size}] $\size \geq \inbound{+}1 \implies \size \geq \inbound$
\item[\axlab{I^\corepedix}{coreAx:AllocSize}] $\bigwedge_{\avariable \in \asetvar}(\alloc{\avariable} \land \bigwedge_{\avariablebis \in \asetvar \setminus \{\avariable\}} \avariable \neq \avariablebis) \implies \size \geq \card{\asetvar}$
\end{minipage}
\end{enumerate}
\margindown
\end{footnotesize}
\end{axiombox}
\vspace{-5pt}
\begin{axiombox}[label=axioms:starPSL]{Axioms and inference rule for the separating conjunction}
\begin{footnotesize}
\begin{enumerate}[align=left,leftmargin=*]
\addtocounter{enumi}{6}
\begin{minipage}{0.42\linewidth}
\item[\axlab{A^\separate}{starAx:Commute}] $(\aformula \separate \aformulabis) \iff (\aformulabis \separate \aformula)$
\item[\axlab{A^\separate}{starAx:Assoc}] $(\aformula \separate \aformulabis) \separate \aformulater \iff \aformula \separate (\aformulabis \separate \aformulater)$
\item[\axlab{I^\separate}{starAx:DistrOr}] $(\aformula \lor \aformulabis) \separate \aformulater \implies (\aformula \separate \aformulater) \lor (\aformulabis \separate \aformulater)$
\item[\axlab{I^\separate}{starAx:False}] $(\bot \separate \aformula) \iff \bot$
\item[\axlab{A^\separate}{starAx:Emp}] $\aformula \iff \aformula \separate \emp$
\item[\axlab{I^\separate}{starAx:StarAlloc}] $\alloc{\avariable} \separate \true \implies \alloc{\avariable}$
\item[\axlab{I^\separate}{starAx:DoubleAlloc}] $(\alloc{\avariable} \separate \alloc{\avariable})\iff\bot$
\end{minipage}
\begin{minipage}{0.58\linewidth}
\item[\axlab{A^\separate}{starAx:MonoCore}] $\aelement \separate \true \implies \aelement \assuming{\aelement \text{ is } \lnot\emp,\ \avariable = \avariablebis,\ \avariable \neq \avariablebis \text{ or } \avariable \Ipto \avariablebis}$
\item[\axlab{A^\separate}{starAx:AllocNeg}] $\lnot\alloc{\avariable} \separate \lnot\alloc{\avariable} \implies \lnot\alloc{\avariable}$
\item[\axlab{A^\separate}{starAx:PointsNeg}] $(\alloc{\avariable} \land \lnot \avariable \Ipto \avariablebis) \separate \true \implies \lnot \avariable \Ipto \avariablebis$
\item[\axlab{A^\separate}{starAx:AllocSizeOne}] $
\alloc{\avariable} \implies
(\alloc{\avariable} \land \size = 1) \separate \true$
\item[\axlab{A^\separate}{starAx:SizeOne}] $\lnot \emp \implies \size = 1 \separate \true$
\item[\axlab{A^\separate}{starAx:SizeNeg}] $\lnot \size \geq \inbound_1 \separate \lnot \size \geq \inbound_2 \implies \lnot \size \geq \inbound_1{+}\inbound_2{\dotminus}1$
\item[\axlab{A^\separate}{starAx:SizeTwo}] $\alloc{\avariable}\wedge\alloc{\avariablebis}\wedge \avariable\neq\avariablebis\implies \size\geq2$
\end{minipage}
\end{enumerate}
\begin{itemize}[align=left,leftmargin=*]
\item[\rulelab{\textbf{$\separate$-Intro}}{rule:starinference}]
$\inference{\aformula \implies \aformulater}{\aformula \separate \aformulabis \implies \aformulater \separate \aformulabis }{}$
\qquad\qquad\qquad\quad where $a\dotminus b=a-b$ if $a\geq b$, $0$ otherwise.
\end{itemize}
\margindown
\end{footnotesize}
\end{axiombox}
\vspace{-5pt}
\begin{axiombox}[label=axioms:magicwandPSL]{Axioms and inference rules for the separating implication}
\begin{footnotesize}
\begin{enumerate}[align=left,leftmargin=*]
\addtocounter{enumi}{20}
\begin{minipage}{0.56\linewidth}
\item[\axlab{A^{\magicwand}}{wandAx:Size}]$(\size = 1 \land \bigwedge_{\avariable \in \asetvar}\lnot \alloc{\avariable}) \septraction \true\!\assuming{\asetvar \subseteq_{\fin} \PVAR}$
\item[\axlab{A^{\magicwand}}{wandAx:PointsTo}] $\lnot \alloc{\avariable} \implies ((\avariable \Ipto \avariablebis \land \size = 1) \septraction \true)$
\end{minipage}%
\begin{minipage}{0.44\linewidth}
  \item[\rulelab{\textbf{$\separate$-Adj}}{rule:staradj}]
  $\inference{\aformula \separate \aformulabis \implies \aformulater}{\aformula \implies (\aformulabis \magicwand \aformulater)}{}$
  \rulelab{\textbf{$\magicwand$-Adj}}{rule:magicwandadj}
  $\inference{\aformula \implies (\aformulabis \magicwand \aformulater)}{\aformula \separate \aformulabis \implies \aformulater}{}$
\end{minipage}
\item[\axlab{A^{\magicwand}}{wandAx:Alloc}] $\lnot \alloc{\avariable} \implies ((\alloc{\avariable} \land \size = 1 \land \bigwedge_{\avariablebis \in \asetvar}\lnot \avariable \Ipto \avariablebis ) \septraction \true) \assuming{\asetvar \subseteq_{\fin} \PVAR}$
\end{enumerate}
\margindown
\end{footnotesize}
\end{axiombox}
\vspace{-0.4cm}
\end{figure*}
We define a proof system for \slSW, namely $\magicwandsys$,
by relying on its \defstyle{core formulae}: simple \slSW
formulae capturing  essential properties of the models, see e.g.~\cite{Lozes04bis,Yang01}.
It is known that  every \slSW  formula is logically equivalent to a Boolean combination of
core formulae~\cite{Lozes04bis}.
However, as every core formula is  an \slSW formula, we stay
in the original  language and we can derive an axiomatisation of \slSW by
extending the axiom system of propositional calculus with three sets of axioms and inference rules:
the axioms and inference rules of the propositional logic of core formulae (System~\ref{axioms:corePSL}),
the axioms and inference rules witnessing that every formula of the form
$\aformula_1 \separate \aformula_2$, where $\aformula_1, \aformula_2$ are Boolean combinations of core formulae
is logically equivalent to a Boolean combination of core formulae (System~\ref{axioms:starPSL}),
and the axioms and inference rules to eliminate formulae whose outermost connective is the separating implication
$\magicwand$ (System~\ref{axioms:magicwandPSL}).
\LongVersionOnly{ (typically, by taking advantage of inference rules expressing that $\separate$ and $\magicwand$
are adjuncts).
Still, all of this can be performed in different ways, and below we provide a modular approach that can be
extended to richer logics, typically with the list segment predicate $\ls$ and with a guarded
form of first-order quantification, see Section~\ref{section:IntervalSL}.}
The  \defstyle{core formulae}  are expressions of the form
$\avariable = \avariablebis$,
$\alloc{\avariable}$,
$\avariable \Ipto \avariablebis$ and
$\size \geq \inbound$,
where $\avariable,\avariablebis \in \PVAR$ and $\inbound \in \Nat$.
As previously shown, these formulae are from \slSW and are used in the axiom system as abbreviations.
Given
$\asetvar\subseteq_\fin \PVAR$  and $\bound \in \Nat$,
we define $\coreformulae{\asetvar}{\bound}$ as
the set $\{ \avariable = \avariablebis,\ \alloc{\avariable},\ \avariable \Ipto \avariablebis,\ \size \geq \inbound \mid \avariable,\avariablebis \in \asetvar,\ \inbound \in \interval{0}{\bound}\}$.
$\boolcomb{\coreformulae{\asetvar}{\bound}}$ is the set of Boolean combinations of formulae from $\coreformulae{\asetvar}{\bound}$, whereas
$\conjcomb{\coreformulae{\asetvar}{\bound}}$ is the set of conjunctions of literals built upon $\coreformulae{\asetvar}{\bound}$
(a literal being a core formula or its negation).
Given $\aformula = \aliteral_1 \land \dots \land \aliteral_n \in \conjcomb{\coreformulae{\asetvar}{\bound}}$, every $\aliteral_i$ being a literal, $\literals{\aformula} \egdef \{\aliteral_1,\dots,\aliteral_n\}$.
$\aformulabis \inside \aformula$ stands for $\literals{\aformulabis} \subseteq \literals{\aformula}$.
We write $\aformulater \inside \orliterals{\aformula}{\aformulabis}$,
$\orliterals{\aformula}{\aformulabis} \inside \aformulater$ and
$\aformulater \inside \andliterals{\aformula}{\aformulabis}$
for
``$\aformulater \inside \aformula$ or $\aformulater \inside \aformulabis$'',
``$\aformula \inside \aformulater$ or $\aformulabis \inside \aformulater$'', and
``$\aformulater \inside \aformula$ and $\aformulater \inside \aformulabis$'', respectively.\\[3pt]
\noindent
{\bf \textsf{Example.}}
\begin{figure}
\centering
\scalebox{0.9}{
\nddim{3.1ex}{3ex}{1ex}{1em}{1.6em}{.5em}{2.5em}{.2mm}
$
\begin{nd}
\have {825} {\emp \implies \lnot \size \geq 1} \by{\ref{axiom:dubneg} and def. of $\size \geq 1$}{}
\have {850} {\alloc{\avariable} \land \size = 1 \implies \lnot \size \geq 2} \by{\ref{axiom:andelim}}{}
\have {9} {\emp \separate (\alloc{\avariable} \land \size = 1 )\implies \lnot \size \geq 1 \separate \lnot \size \geq 2} \by{\ref{rule:starintroLR}}{825,850}
\have {10} {\lnot \size \geq 1 \separate \lnot \size \geq 2 \implies \lnot \size \geq 2} \by{\ref{starAx:SizeNeg}}{}
\have {11} {\big(\emp \separate (\alloc{\avariable} \land \size = 1)\big) \implies \lnot \size \geq 2} \by{\ref{rule:imptr}}{9,10}
\have {12} {\emp \implies \big((\alloc{\avariable} \land \size = 1) \magicwand \lnot \size \geq 2\big)} \by{\ref{rule:staradj} rule}{11}
\end{nd}
$}
\vspace{-0.1cm}
\caption{Proof of \ $\emp \implies \big((\alloc{\avariable} \land \size = 1) \magicwand \lnot \size \geq 2\big)$.}\label{figure:aproof}
\vspace{-0.3cm}
\end{figure}
%
To show the flavour of the axioms and the rules, in Figure~\ref{figure:aproof}
\ifCSLProc \else (as well as in Appendix~\ref{appendix-paragraphe-an-example}) \fi
we present a proof in $\coresys(\separate,\magicwand)$.
In the proof, a line ``$j\, \mid\, \aformulater \ \ A, i_1,\dots,i_k$''
states that $\aformulater$ is a theorem denoted by the index $j$ and derivable by the axiom or the rule $A$.
If $A$ is a rule, the indices $i_{1},\dots,i_{k} < j$ denote the theorems used as premises in order to derive $\aformulater$.
In the example,
we use the rule~\ref{rule:staradj}, which together with the rule~\ref{rule:magicwandadj} states that the connectives
$\separate$ and $\magicwand$ are  adjoint operators,
as well as the axiom~\ref{starAx:SizeNeg}, stating that $\card{\domain{\aheap}} \leq \inbound_1 {+} \inbound_2$ holds whenever a heap $\aheap$ can be split into two subheaps whose domains have less than $\inbound_1{+}1$ and $\inbound_2{+}1$ elements, respectively.
We also use the following theorems and rules, which can be shown derivable/admissible in the forthcoming calculus:
\begin{nscenter}
  \scalebox{0.9}{
  \lemmalab{\textbf{($\land$Er)}}{axiom:andelim}\,
  $
  \aformulabis \land \aformula \implies \aformula
  $}
  \hfill
    \scalebox{0.9}{
    \lemmalab{\textbf{($\lnot\lnot$E)}}{axiom:dubneg}\,
    $
    \lnot\lnot \aformula \implies \aformula
    $}%
  \hfill
  \scalebox{0.9}{
  \rulelab{\textbf{$\separate$-Ilr}}{rule:starintroLR}
  $
  \inference{\aformula \implies \aformula'\quad \aformulabis \implies \aformulabis'}{\aformula \separate \aformulabis \implies \aformula' \separate \aformulabis'}{}
  $}%
  \hfill
  \scalebox{0.9}{
  \rulelab{\textbf{$\implies$-Tr}}{rule:imptr}
  $
  \inference{\aformula \implies \aformulater\quad\aformulater \implies \aformulabis}{\aformula \implies \aformulabis}
  $}
\end{nscenter}

\subsection{A simple calculus for the core formulae}
\label{subsection:axiomCoreFormulae}
To axiomatise \slSW, we start by introducing the proof
system $\coresys$ (presented in System~\ref{axioms:corePSL}) dedicated to Boolean combinations of core formulae.
$\coresys$ and all the subsequent proof systems contain
the axiom schemata and modus ponens for the propositional calculus.
The  axioms $I^?_i$ in System $n$ are necessary for the fragment the System $n$ governs, but are admissible when the
axioms/rules from the System $n+1$ are present.
In~\ref{coreAx:EqSub},
$\aformula\completesubstitute{\avariable}{\avariablebis}$ stands for
the formula obtained from $\aformula$ by replacing with $\avariable$ every occurrence of $\avariablebis$.
Let $\pair{\astore}{\aheap}$ be a memory state.
The axioms state that $=$ is an equivalence relation (first two axioms),
$\aheap(\astore(\avariable)) = \astore(\avariablebis)$ implies $\astore(\avariable) \in \domain{\aheap}$ (axiom~\ref{coreAx:PointAlloc}) and
that $\aheap$ is a (partial) function (axiom~\ref{coreAx:PointInj}).
Furthermore, there are two intermediate axioms about size formulae: \ref{coreAx:Size} states that if $\domain{\aheap}$ has at least $\inbound{+}1$ elements,
then it has at least $\inbound$ elements,
whereas \ref{coreAx:AllocSize} states instead that if there are $\inbound$ distinct memory cells corresponding to program variables, then indeed $\domain{\aheap} \geq \inbound$.
It is easy to check that $\coresys$ is sound (right-to-left direction of Theorem~\ref{theo:corePSLcompl}, below).
%
In order to establish its completeness with respect to $\boolcomb{\coreformulae{\asetvar}{\bound}}$, we first establish that $\coresys$ is
complete for a fragment of  $\boolcomb{\coreformulae{\asetvar}{\bound}}$, made of \defstyle{core types}.
\LongVersionOnly{
Then, we show that every formula in $\boolcomb{\coreformulae{\asetvar}{\bound}}$ is provably equivalent to a disjunction of core types
(Lemma~\ref{prop:corePSLone}).
}
Let $\asetvar {\subseteq_{\fin}} \PVAR$, $\bound \in \Nat^+$ and $\widehat{\bound} {=} \bound{+}\card{\asetvar}$.
We write $\coretype{\asetvar}{\bound}$ to denote the set of \defstyle{core types} defined by
$
\scaledformulasubset{
  \aformula \in \conjcomb{\coreformulae{\asetvar}{\widehat{\bound}}}
}{
  \bmat[
    \forall\aformulabis{\in} \coreformulae{\asetvar}{\widehat{\bound}},\  \orliterals{\aformulabis}{\lnot \aformulabis} \inside \aformula, \text{ and } (\aformulabis \land \lnot \aformulabis) \not\inside \aformula
  ]
}{0.9}{1}$.
\cut{
\begin{nscenter}
$
\scaledformulasubset{
  \aformula \in \conjcomb{\coreformulae{\asetvar}{\widehat{\bound}}}
}{
  \bmat[
    \forall\aformulabis{\in} \coreformulae{\asetvar}{\widehat{\bound}},\ \orliterals{\aformulabis}{\lnot\aformulabis} \inside \aformula
  ]
}{0.9}{1}$
\end{nscenter}
}
Note that if $\aformula \in \coretype{\asetvar}{\bound}$, then $\aformula$ is a conjunction such that
for every  $\aformulabis \in \coreformulae{\asetvar}{\widehat{\bound}}$, there is exactly one literal in $\aformula$ built upon $\aformulabis$.

\begin{restatable}{lemma}{propcorePSLtwo}
\label{prop:corePSLtwo}
Let $\aformula \in \coretype{\asetvar}{\bound}$.
We have $\neg \aformula$ is valid iff $\vdash_{\coresys} \neg \aformula$.
\end{restatable}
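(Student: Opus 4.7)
The right-to-left direction is soundness, which holds since each axiom of $\coresys$ is easily checked to be valid on memory states. For the forward direction, I argue contrapositively: I show that any $\coresys$-consistent core type $\aformula$ (i.e.\ one with $\not\vdash_{\coresys}\neg\aformula$) admits a model $\pair{\astore}{\aheap}$, so that $\neg\aformula$ is not valid.

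The plan is to read off such a model from $\aformula$, using the axioms of $\coresys$ to guarantee coherence at each step. First, declare $\avariable \approx \avariablebis$ iff $(\avariable = \avariablebis) \inside \aformula$: reflexivity comes from \ref{coreAx:EqRef}, while symmetry and transitivity come from \ref{coreAx:EqSub}, since otherwise propositional reasoning over a couple of literals of $\aformula$ would yield $\vdash_{\coresys}\neg\aformula$. Pick pairwise distinct locations $\alocation_{[\avariable]}$ for the $\approx$-classes and set $\astore(\avariable) \egdef \alocation_{[\avariable]}$. Second, build $\aheap$: for each $\approx$-class $[\avariable]$ with $\alloc{\avariable} \inside \aformula$, if some $\avariable \Ipto \avariablebis \inside \aformula$, set $\aheap(\astore(\avariable)) \egdef \astore(\avariablebis)$; otherwise, assign $\aheap(\astore(\avariable))$ to a fresh location outside $\range{\astore}$. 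Well-definedness of the first branch follows from \ref{coreAx:PointInj} combined with \ref{coreAx:EqSub}: two positive points-to literals with source in $[\avariable]$ must have $\approx$-equivalent targets. Axiom \ref{coreAx:PointAlloc} then guarantees that every positive points-to literal of $\aformula$ is accounted for. Third, let $m$ be the number of allocated classes and $\inbound^*$ the largest $\inbound \leq \widehat{\bound}$ with $(\size \geq \inbound) \inside \aformula$; by \ref{coreAx:AllocSize} applied to representatives of the allocated classes (and the pairwise disequalities supplied by $\approx$), consistency forces $(\size \geq m) \inside \aformula$, hence $m \leq \inbound^*$. Pad $\aheap$ with $\inbound^* - m$ new cells whose domains and ranges consist of fresh, pairwise distinct locations outside $\range{\astore}$.

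The verification that each literal of $\aformula$ holds at $\pair{\astore}{\aheap}$ is then mostly routine: equalities and disequalities come from Step~1, positive points-to and allocation literals from Step~2, and size literals from Step~3 (using \ref{coreAx:Size} together with the maximality of $\inbound^*$ so that $\card{\domain{\aheap}} = \inbound^*$ matches all size literals up to $\widehat{\bound}$). The slightly delicate case is a negative literal $\lnot\, \avariable \Ipto \avariablebis \inside \aformula$ when $\avariable$ is allocated with another positive literal $\avariable \Ipto \avariableter \inside \aformula$: one argues that $\avariableter \not\approx \avariablebis$, else \ref{coreAx:EqSub} would also place $\avariable \Ipto \avariablebis$ in $\aformula$, contradicting consistency.

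The main obstacle is the size bookkeeping: the padding cells must not inadvertently allocate any program variable, which is why they are chosen outside $\range{\astore}$; and the bound $m \leq \inbound^*$, needed to make the padding count well-defined, is precisely the content of \ref{coreAx:AllocSize} once the pairwise disequalities produced by $\approx$ are supplied. Beyond this, the construction is a systematic translation of the truth assignment encoded in the core type into an actual memory state.
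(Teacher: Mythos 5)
Your proposal is correct and takes essentially the same route as the paper: the right-to-left direction is soundness, and for the converse one shows that any core type whose negation is not derivable is satisfiable, by reading off a store from the equality literals (made an equivalence via \ref{coreAx:EqRef} and \ref{coreAx:EqSub}), a heap from the allocation/points-to literals (coherent by \ref{coreAx:PointAlloc}, \ref{coreAx:PointInj}), and padding with fresh cells to match the size literals (justified by \ref{coreAx:Size} and \ref{coreAx:AllocSize}). No gaps worth noting.
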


By classical reasoning, one can show that every $\aformula \in \boolcomb{\coreformulae{\asetvar}{\bound}}$
is provably equivalent to a disjunction of core types. Together with Lemma~\ref{prop:corePSLtwo}, this implies that $\coresys$ is complete.

\begin{restatable}{theorem}{theocorePSLcompl}
\stmtdesc{Adequacy}
\label{theo:corePSLcompl}
A Boolean combination of core formulae $\aformula$ is valid iff $\prove_{\coresys} \aformula$.
\end{restatable}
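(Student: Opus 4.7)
The plan is to handle the two directions separately. The right-to-left (soundness) direction is the routine check that every axiom of $\coresys$ is valid and every derivation step preserves validity: the axioms \ref{coreAx:EqRef}--\ref{coreAx:PointInj} express basic properties of equality together with the functionality of heaps; \ref{coreAx:Size} states monotonicity of $\card{\domain{\aheap}}$ in the threshold $\inbound$; and \ref{coreAx:AllocSize} records the obvious lower bound on $\card{\domain{\aheap}}$ forced by a set of pairwise distinct, allocated program variables. Modus ponens together with the inherited propositional schemata is manifestly sound.

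For the left-to-right direction I would reduce to Lemma~\ref{prop:corePSLtwo} via a canonical normal form. First, I show by purely propositional reasoning inside $\coresys$ that every $\aformula \in \boolcomb{\coreformulae{\asetvar}{\bound}}$ is provably equivalent to a disjunction $\bigvee_i \tau_i$ of core types from $\coretype{\asetvar}{\bound}$. Explicitly: rewrite $\aformula$ in disjunctive normal form; for each disjunct and for each $\aformulabis \in \coreformulae{\asetvar}{\widehat{\bound}}$ whose literal is not yet present, conjoin the propositional tautology $\aformulabis \lor \lnot \aformulabis$ and distribute; finally drop, via propositional reasoning, any resulting conjunct that contains both $\aformulabis$ and $\lnot\aformulabis$. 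Since $\coreformulae{\asetvar}{\bound} \subseteq \coreformulae{\asetvar}{\widehat{\bound}}$, all these manipulations stay inside the relevant fragment and yield a provably equivalent disjunction of elements of $\coretype{\asetvar}{\bound}$.

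Now suppose $\aformula$ is valid. Then $\lnot\aformula$ is unsatisfiable, and applying the normal-form construction to $\lnot\aformula$ produces a disjunction $\bigvee_i \tau_i$, provably equivalent to $\lnot\aformula$ in $\coresys$, with each $\tau_i$ itself unsatisfiable. Lemma~\ref{prop:corePSLtwo} then gives $\vdash_{\coresys} \lnot\tau_i$ for every $i$, whence by propositional reasoning $\vdash_{\coresys} \lnot\bigvee_i \tau_i$, then $\vdash_{\coresys} \lnot\lnot\aformula$, and finally $\vdash_{\coresys} \aformula$. I do not expect real difficulty in the theorem itself once Lemma~\ref{prop:corePSLtwo} is in place; the genuine work is hidden in that lemma, which is responsible for showing that the intermediate axioms \ref{coreAx:Size} and \ref{coreAx:AllocSize}, together with the enlarged bound $\widehat{\bound} = \bound + \card{\asetvar}$, suffice to refute every unsatisfiable core type. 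Without those two axioms, unsatisfiable core types forcing impossibly many distinct allocated locations would not be provably refutable, and the argument would break down at exactly the step that invokes Lemma~\ref{prop:corePSLtwo}.
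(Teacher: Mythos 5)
Your proposal is correct and follows essentially the same route as the paper: soundness by direct inspection of the axioms, and completeness by propositionally rewriting (the negation of) any Boolean combination of core formulae into a provably equivalent disjunction of core types and then refuting each unsatisfiable core type via Lemma~\ref{prop:corePSLtwo}. The only work beyond propositional reasoning is indeed concentrated in that lemma, exactly as you observe.
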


\subsection{A constructive elimination of $\separate$  to axiomatise \slSA
}\label{section:starelimination}
\LongVersionOnly{In order to define a proof system for \slSA,}
We enrich $\coresys$ by adding axioms and inference rule
that handle  $\separate$  (System~\ref{axioms:starPSL}).
The axioms deal with the commutative monoid properties of $\pair{\separate}{\emp}$ and its distributivity
over $\vee$ (as for Boolean BI, see e.g.~\cite{Galmiche&Larchey06}).
In~\ref{starAx:MonoCore}, the notation $\aformula\!\!\assuming{\mathcal{B}}$
refers to the axiom schema $\aformula$ assuming that the Boolean condition $\mathcal{B}$ holds.
The rule~\ref{rule:starinference}
\LongVersionOnly{,
sometimes called ``frame rule'' by analogy with the
rule of the same name in program logic,}
states that logical equivalence is a congruence
for $\separate$.
This allows us to remove the intermediate
axioms~\ref{coreAx:Size} and~\ref{coreAx:AllocSize}
from the proof system.
Hence, we call $\coresys(\separate)$ the proof system obtained from $\coresys$ by adding all schemata from
System~\ref{axioms:starPSL} and removing~\ref{coreAx:Size} and~\ref{coreAx:AllocSize}. It is easy to check that $\starsys$ is sound.
More importantly, $\starsys$ enjoys the $\separate$ elimination property with respect to core types.

\begin{restatable}{lemma}{lemmastarPSLelim}\label{lemma:starPSLelim}
  Let $\aformula$ and $\aformulabis$ in $\coretype{\asetvar}{\bound}$. There is a conjunction of core formulae literals $\aformulater \in \conjcomb{\coreformulae{\asetvar}{2\bound}}$ such that $\prove_{\starsys} \aformula \separate \aformulabis \iff \aformulater$.
\end{restatable}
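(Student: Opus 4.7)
The plan is to determine $\aformulater$ by a semantic combination of the two core types literal-by-literal, and then to justify the equivalence $(\aformula \separate \aformulabis) \iff \aformulater$ in $\starsys$ by proving each direction separately. First I would inspect $\aformula$ and $\aformulabis$ for incompatibilities. If they disagree on a pure literal ($\avariable = \avariablebis$ or $\avariable \neq \avariablebis$), their separating conjunction is unsatisfiable; using \ref{starAx:MonoCore} to promote the offending pure literal out of one conjunct, \ref{coreAx:EqSub} to derive a contradiction, and \ref{rule:starinference} as glue, I can obtain $(\aformula \separate \aformulabis) \iff \bot$. Similarly, if some $\alloc{\avariable}$ appears positively in both conjuncts (which, via \ref{coreAx:PointAlloc}, subsumes the case where $\avariable \Ipto \avariablebis$ appears in both), axiom \ref{starAx:DoubleAlloc} combined with \ref{rule:starinference} yields $(\aformula \separate \aformulabis) \iff \bot$. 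In all these incompatible cases I take $\aformulater \egdef \avariable \neq \avariable$.

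In the compatible case I would construct $\aformulater \in \conjcomb{\coreformulae{\asetvar}{2\bound}}$ whose literals mirror the semantics of $\separate$: the pure literals are those common to $\aformula$ and $\aformulabis$; $\alloc{\avariable}$ (respectively $\avariable \Ipto \avariablebis$) appears positively in $\aformulater$ iff it appears positively in at least one of the two conjuncts; and each size literal $\size \geq \inbound$ for $\inbound \in [0, 2\bound]$ is taken positively iff $\inbound$ can be written as $\inbound_1 + \inbound_2$ with $\aformula$ forcing $\size \geq \inbound_1$ and $\aformulabis$ forcing $\size \geq \inbound_2$. For the forward direction $(\aformula \separate \aformulabis) \implies \aformulater$, each literal of $\aformulater$ is derivable from a dedicated axiom of $\starsys$ promoted through \ref{rule:starinference}: positive pure and spatial literals percolate via \ref{starAx:MonoCore}; negative spatial literals use \ref{starAx:AllocNeg} and \ref{starAx:PointsNeg}; negative size literals use \ref{starAx:SizeNeg}; and positive size literals follow from \ref{starAx:SizeTwo} together with the size literals imported from each conjunct via \ref{starAx:MonoCore}.

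The main obstacle is the converse $\aformulater \implies (\aformula \separate \aformulabis)$, which requires exhibiting a syntactic decomposition of $\aformulater$ mirroring the semantic split of the heap. My plan is to peel off unit cells one at a time: for each variable allocated in $\aformula$ I would extract a unit component of the form $\alloc{\avariable} \land \size = 1$ via \ref{starAx:AllocSizeOne}, and for each anonymous cell needed to match the size budget of $\aformula$ I would use \ref{starAx:SizeOne}. Using associativity (\ref{starAx:Assoc}), commutativity (\ref{starAx:Commute}) and distributivity over disjunction (\ref{starAx:DistrOr}) to regroup the extracted cells into a single left component, I would invoke Theorem \ref{theo:corePSLcompl} to argue that this component is provably equivalent to $\aformula$, and that the residue is provably equivalent to $\aformulabis$; a final application of \ref{rule:starinference} closes the derivation. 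The most delicate bookkeeping is tracking how the size, allocation and points-to literals evolve through the peeling steps, but the axioms of $\coresys$ (in particular the intermediate axiom \ref{coreAx:AllocSize}) suffice to justify the intermediate claims at each stage.
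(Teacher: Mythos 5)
Your overall strategy matches the paper's: build $\aformulater$ literal-by-literal (your table of rules is essentially the paper's explicit conjunction) and derive the forward implication with \ref{rule:starinference}, \ref{starAx:MonoCore}, \ref{starAx:AllocNeg}, \ref{starAx:PointsNeg}, \ref{starAx:SizeNeg}. But there is a genuine gap before you even get to the equivalence: you never treat the case where a \emph{single} core type is unsatisfiable on its own. Your incompatibility scan only catches cross-conflicts (disagreement on a pure literal, double allocation), yet core types are arbitrary complete sets of literals, so $\aformula$ may contain, e.g., both $\avariable \Ipto \avariablebis$ and $\lnot\alloc{\avariable}$, or $\alloc{\avariable} \land \alloc{\avariablebis} \land \avariable \neq \avariablebis \land \lnot\size\geq 2$. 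In such cases $\aformula \separate \aformulabis$ is unsatisfiable, while your literal-by-literal $\aformulater$ (which imports $\avariable \Ipto \avariablebis$ but keeps $\lnot\alloc{\avariable}$ only when it occurs on both sides) can be satisfiable; the claimed equivalence is then not even valid, hence not derivable in the sound system $\starsys$. The paper dispatches this case first, deriving $\prove_{\coresys} \aformula \Rightarrow \bot$ from Lemma~\ref{prop:corePSLtwo} and concluding with \ref{rule:starinference} and \ref{starAx:False}; your proposal never invokes Lemma~\ref{prop:corePSLtwo}, and this step cannot be skipped.

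The second, more substantial gap is the converse direction, which is precisely where the work of this lemma lies. Peeling cells with \ref{starAx:AllocSizeOne} and \ref{starAx:SizeOne} only produces residues of the form $\top$ (e.g.\ $\alloc{\avariable} \Rightarrow (\alloc{\avariable}\land\size=1)\separate\top$), and \ref{rule:starinference} only \emph{weakens} a component (from $\aformula\Rightarrow\aformulater$ it gives $\aformula\separate\aformulabis\Rightarrow\aformulater\separate\aformulabis$), so your sketch has no mechanism for strengthening those $\top$ residues to $\aformulabis$, nor for ensuring that the anonymous cell extracted by \ref{starAx:SizeOne} is not one of the cells that $\aformulabis$ requires at a program variable. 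Appealing to Theorem~\ref{theo:corePSLcompl} does not close this: that theorem concerns valid Boolean combinations of core formulae, whereas the claims you need (``the regrouped component is provably equivalent to $\aformula$'', ``the residue to $\aformulabis$'') sit under a $\separate$ and are not of that shape. The paper's point is that the semantic argument of Echenim--Iosif--Peltier must be replayed \emph{syntactically} inside $\starsys$, and your outline leaves exactly that step unproved. (A minor point: \ref{starAx:SizeTwo} relates distinct allocated variables to $\size\geq 2$; the conjunct $\size\geq\inbound_1{+}\inbound_2$ is instead obtained from the inductive definition of $\size\geq\inbound$ together with \ref{rule:starinference}, \ref{starAx:Commute} and \ref{starAx:Assoc}.)
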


\begin{proof}(sketch)
  Let $\aformula,\aformulabis{\in} \coretype{\asetvar}{\bound}$. If
  $\aformula$ is unsatisfiable, then $\prove_{\coresys} \aformula\implies \bot$, by Lemma~\ref{prop:corePSLtwo}. By the rule~\ref{rule:starinference}
and the axiom~\ref{starAx:False},
  we get $\prove_{\starsys}\aformula*\aformulabis\implies\bot$ and we take $\aformulater = \bottom$.
  Assume now both $\aformula$ and $\aformulabis$ to be satisfiable.
  Then $\aformula \separate \aformulabis$ can be shown provably equivalent to:
  \begin{nscenter}
    \setlength{\tabcolsep}{3pt}
    \begin{tabular}{rlcl}
      &
      $\bigwedge \formulasubset{\avariable \sim \avariablebis \inside \orliterals{\aformula}{\aformulabis}}{\bmat[\sim \in \{=,\neq\}]}$
      & $\land$ &
      $\bigwedge \aformulasubset{\alloc{\avariable}\inside \orliterals{\aformula}{\aformulabis}}$
      \\ $\land$ &
      $\bigwedge \aformulasubset{\avariable \Ipto \avariablebis \inside \orliterals{\aformula}{\aformulabis}}$
      & $\land$ &
      $\bigwedge \aformulasubset{\lnot\alloc{\avariable} \inside\andliterals{\aformula}{\aformulabis}}$
      \\ $\land$ &
      $\bigwedge \formulasubset{\bottom}{\bmat[\alloc{\avariable}\inside\andliterals{\aformula}{\aformulabis}]}$
      & $\land$ &
      $\bigwedge \formulasubset{\lnot \avariable \Ipto \avariablebis}{\bmat[\alloc{\avariable}\land\lnot\avariable\Ipto\avariablebis\inside \orliterals{\aformula}{\aformulabis}]}$
      \\ $\land$ &
      $\bigwedge \formulasubset{\size\geq\inbound_1{+}\inbound_2}{\bmat[\size\geq\inbound_1\inside\aformula\\\size\geq\inbound_2\inside\aformulabis]}$
      & $\land$ &
      $\bigwedge \formulasubset{\lnot\size\geq\inbound_1{+}\inbound_2{\dotminus}1}{\bmat[\lnot\size\geq\inbound_1\inside\aformula\\\lnot\size\geq\inbound_2\inside\aformulabis]}$
    \end{tabular}
  \end{nscenter}
  This equivalence is reminiscent to the one in~\cite[Lemma 3]{Echenim&Iosif&Peltier19} that is proved semantically.
  In a way, because $\starsys$ will reveal to be complete, the restriction of the proof
  of \cite[Lemma 3]{Echenim&Iosif&Peltier19} to \slSA can actually be
  replayed completely syntactically within $\starsys{}$.
\end{proof}
By the distributivity axiom~\ref{starAx:DistrOr}, this result is extended from core types to arbitrary Boolean combinations of core formulae.
$\starsys$ is therefore complete for \slSA, i.e.\ the logic obtained from \slSW by removing $\magicwand$ and adding
the formulae
 $\alloc{\avariable}$ (only core formulae requiring $\magicwand$).
Then, to prove that a formula $\aformula\in\slSA$ is valid, we repeatedly apply the $\separate$ elimination bottom-up, starting from the leaves of $\aformula$ (which
are Boolean combinations of core formulae) and obtaining a Boolean combination of core formulae $\aformulabis$ that is equivalent to $\aformula$.
We then rely on the completeness of $\coresys$ (Theorem~\ref{theo:corePSLcompl}) to prove that $\aformulabis$ is valid.

\begin{restatable}{theorem}{theostarCompleteness}\label{theo:starCompleteness}
A formula $\aformula$ in \slSA{} is valid iff $\prove_{\starsys} \aformula$.
\end{restatable}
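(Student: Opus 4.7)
My plan is to prove soundness by routine case analysis and completeness by structural induction, showing every \slSA{} formula is provably equivalent in $\starsys$ to a Boolean combination of core formulae, and then closing via Theorem~\ref{theo:corePSLcompl}.

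For soundness (derivability implies validity), I verify that each schema in System~\ref{axioms:corePSL} (minus the removed intermediate axioms) and System~\ref{axioms:starPSL} is valid under the heaplet semantics: the commutative monoid properties of $(\separate,\emp)$, distributivity over $\vee$, and the various facts about $\size$ and allocation are all direct. The congruence rule \ref{rule:starinference} preserves validity since $\separate$ is monotone in each argument. For completeness, assume $\aformula \in \slSA$ is valid. I proceed by structural induction on $\aformula$ to exhibit $\asetvar\subseteq_{\fin}\PVAR$, $\bound \in \Nat$ and $\aformulabis \in \boolcomb{\coreformulae{\asetvar}{\bound}}$ with $\prove_{\starsys} \aformula \iff \aformulabis$. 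Atomic cases are immediate, since every \slSA{} atom (including the shorthand $\alloc{\avariable}$) is itself a core formula; the Boolean cases ($\lnot$, $\land$) follow from the propositional calculus embedded in $\starsys$ together with congruence.

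The interesting case is $\aformula = \aformula_1 \separate \aformula_2$. By the induction hypothesis, $\prove_{\starsys} \aformula_i \iff \aformulabis_i$ with each $\aformulabis_i$ a Boolean combination of core formulae. Using \ref{rule:starinference} together with the commutativity axiom \ref{starAx:Commute} (to obtain the reverse side), I derive $\prove_{\starsys} \aformula \iff (\aformulabis_1 \separate \aformulabis_2)$. By classical reasoning in the style of Theorem~\ref{theo:corePSLcompl}, each $\aformulabis_i$ can be rewritten as a disjunction of core types $\bigvee_j \aformulater^j_i$. Combining \ref{starAx:DistrOr} with \ref{starAx:Commute}, \ref{rule:starinference}, and the propositional calculus (to secure the converse inclusion), I push the disjunctions outwards to get a disjunction $\bigvee_{j,k}(\aformulater_1^j \separate \aformulater_2^k)$ of products of core types. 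Lemma~\ref{lemma:starPSLelim} then reduces each conjunct to an element of $\conjcomb{\coreformulae{\asetvar}{2\bound}}$, yielding the required Boolean combination of core formulae equivalent to $\aformula$.

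To conclude, validity of $\aformula$ transfers to $\aformulabis$, and Theorem~\ref{theo:corePSLcompl} gives $\prove_{\coresys} \aformulabis$. The only $\coresys$-axioms not present in $\starsys$ are the intermediate ones \ref{coreAx:Size} and \ref{coreAx:AllocSize}, both of which are derivable in $\starsys$ via Lemma~\ref{lemma:starPSLelim}: for instance, $\sizegeq{\inbound{+}1}$ unfolds to $\lnot\emp \separate \sizegeq{\inbound}$, and since every core type in $\coreformulae{\asetvar}{\bound}$ contains the literal $\sizegeq{0}$, the lemma produces a conjunction containing $\sizegeq{\inbound}$, yielding \ref{coreAx:Size}; \ref{coreAx:AllocSize} follows similarly using \ref{starAx:AllocSizeOne} and \ref{starAx:DoubleAlloc}. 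Hence $\prove_{\starsys} \aformulabis$ and therefore $\prove_{\starsys} \aformula$. The main obstacle is this separating conjunction case: because \ref{rule:starinference} and \ref{starAx:DistrOr} are one-directional, some care is needed to piece together a two-directional rewriting so that $\aformulabis_1 \separate \aformulabis_2$ can be brought into the disjunction-of-products-of-core-types shape required to invoke Lemma~\ref{lemma:starPSLelim}, all while staying strictly inside $\starsys$ rather than appealing to semantic reasoning.
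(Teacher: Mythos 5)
Your proposal is correct and follows essentially the same route as the paper: bottom-up elimination of $\separate$ using Lemma~\ref{lemma:starPSLelim} extended to Boolean combinations via \ref{starAx:DistrOr} (with \ref{rule:starinference} and propositional reasoning supplying the converse directions), followed by an appeal to Theorem~\ref{theo:corePSLcompl}, with the removed intermediate axioms \ref{coreAx:Size} and \ref{coreAx:AllocSize} recovered as derivable in $\starsys$ exactly as the paper claims.
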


\vspace{-0.2cm}
\subsection{A constructive elimination of $\magicwand$  to axiomatise \slSW
}\label{section:magicwandelimination}
\vspace{-0.1cm}
The proof system  $\magicwandsys$ is defined as $\starsys$ augmented with
the axioms and  inference rules from System~\ref{axioms:magicwandPSL} dedicated to separating implication.
The axioms involving $\septraction$ (kind of dual of $\magicwand$ introduced in Section~\ref{section:preliminaries}) express that it is always
possible to extend a given heap with an extra cell, and that the address and the content of this cell can be fixed arbitrarily
(provided it is not already allocated). The adjunction rules are from the Hilbert-style axiomatisation
of Boolean BI~\cite[Section 2]{Galmiche&Larchey06}.
One can observe that the axioms~\ref{starAx:DistrOr},~\ref{starAx:False},~\ref{starAx:StarAlloc} and~\ref{starAx:DoubleAlloc} are derivable
in $\coresys(\separate,\magicwand)$.
\LongVersionOnly{Moreover, the whole proof system is sound.}
It is easy to check that $\magicwandsys$ is sound.
Analogously, $\magicwandsys$ enjoys the $\magicwand$ elimination property,
stated below by means of $\septraction$.
\begin{restatable}{lemma}{lemmamagicwandPSLelim}\label{lemma:magicwandPSLelim}
  Let $\aformula$ and $\aformulabis$ in $\coretype{\asetvar}{\bound}$. There is a conjunction of core formulae literals $\aformulater \in \conjcomb{\coreformulae{\asetvar}{\bound}}$
  such that $\prove_{\magicwandsys} (\aformula\septraction\aformulabis) \iff \aformulater$.
\end{restatable}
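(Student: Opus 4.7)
\textbf{The plan is to mirror the proof of Lemma~\ref{lemma:starPSLelim}, adapting the constructions to handle $\septraction$ in place of $\separate$.} We begin by disposing of trivial cases. If $\aformula$ is unsatisfiable then by Lemma~\ref{prop:corePSLtwo} we have $\prove_{\coresys}\lnot\aformula$, and so $\prove_{\magicwandsys}(\aformula\septraction\aformulabis)\iff\bottom$ follows by combining this with the fact, derivable from axiom~\ref{starAx:False} and the adjunction rules, that $\bottom\septraction\aformulabis\iff\bottom$. We likewise reduce to $\aformulater = \bottom$ the cases of \emph{joint inconsistency} between $\aformula$ and $\aformulabis$ -- for instance $\alloc{\avariable}\inside\aformula$ together with $\lnot\alloc{\avariable}\inside\aformulabis$, or $\avariable\Ipto\avariablebis\inside\aformula$ together with $\lnot\avariable\Ipto\avariablebis\inside\aformulabis$, or size literals jointly forcing an overall heap size larger than $\widehat{\bound}$. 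Each of these is handled by pushing the contradiction into a $\separate$-composition via rule~\ref{rule:magicwandadj}, simplifying with Lemma~\ref{lemma:starPSLelim}, and concluding through $\coresys$.

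\textbf{In the main case,} we define $\aformulater$ as the conjunction in $\conjcomb{\coreformulae{\asetvar}{\bound}}$ of those core literals that express the frame constraints imposed on the current heap $\aheap$ by the existence of a witness $\aheap'$ satisfying $\aformula$ and composing with $\aheap$ into one satisfying $\aformulabis$. Concretely, $\aformulater$ inherits every equality or inequality literal of $\aformula$ and $\aformulabis$; it includes $\lnot\alloc{\avariable}$ for each $\avariable$ with $\alloc{\avariable}\inside\aformula$ (whose cell is contributed by $\aheap'$, hence absent from $\aheap$); it includes $\alloc{\avariable}$ together with the points-to literal forced by $\aformulabis$ for each $\avariable$ with $\lnot\alloc{\avariable}\inside\aformula$ but $\alloc{\avariable}\inside\aformulabis$ (whose cell must already sit in $\aheap$); and its size literals are obtained by subtracting the contribution of $\aheap'$ from the bounds of $\aformulabis$, in the same bookkeeping style as in Lemma~\ref{lemma:starPSLelim}.

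\textbf{The equivalence} then splits into two directions. For $\aformula\septraction\aformulabis \implies \aformulater$, equivalent to $\lnot\aformulater \implies (\aformula\magicwand\lnot\aformulabis)$, we use rule~\ref{rule:magicwandadj} and reduce to showing $\lnot\aformulater\separate\aformula \implies \lnot\aformulabis$, which is a validity about Boolean combinations of core formulae obtainable via Lemma~\ref{lemma:starPSLelim} and Theorem~\ref{theo:corePSLcompl}. For $\aformulater \implies \aformula\septraction\aformulabis$, the extension heap $\aheap'$ must be built syntactically by iterating the existence axioms~\ref{wandAx:Size},~\ref{wandAx:PointsTo} and~\ref{wandAx:Alloc}, each of which adjoins a single cell of prescribed shape. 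At each iteration the current $\septraction$-goal is turned, via rule~\ref{rule:magicwandadj}, into a constraint on a growing $\separate$-composition, collapsed through Lemma~\ref{lemma:starPSLelim} into a Boolean combination of core formulae, and rewrapped into a new $\septraction$-goal through rule~\ref{rule:staradj}.

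\textbf{The main obstacle} lies in this second, constructive direction. Because the existence axioms contribute a single cell at a time, the accumulation of cells needed to realise the whole of $\aformula$ must be carefully controlled -- both to guarantee disjointness from $\aheap$ and from each previously added cell, and to keep the total contribution of $\aheap'$ within the size bounds imposed by $\aformula$. The size reasoning is particularly delicate when $\aformula$ imposes an upper bound on $\card{\domain{\aheap'}}$ while $\aformulabis$ imposes a lower bound on $\card{\domain{\aheap+\aheap'}}$; there, axiom~\ref{starAx:SizeNeg} combined with the adjunction rules becomes the key tool for matching the budget of cells recorded in $\aformulater$ against what the existence axioms can actually produce, and for certifying that the iterated construction terminates with a $\separate$-composition provably equivalent to $\aformulater$.
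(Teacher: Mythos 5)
Your overall architecture is the same as the paper's: identify an explicit conjunction of core literals expressing exactly the constraints that the existence of a $\aformula$-extension imposes on the current heap, send the degenerate cases to $\bottom$, and establish the equivalence syntactically --- one direction via the adjunction rules, $\separate$-elimination (Lemma~\ref{lemma:starPSLelim} extended to Boolean combinations) and the adequacy of the core calculus (Theorem~\ref{theo:corePSLcompl}), the other by materialising the extension heap one cell at a time with the axioms~\ref{wandAx:Size},~\ref{wandAx:PointsTo} and~\ref{wandAx:Alloc}. That is indeed how $\magicwandsys$ is meant to be used.

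The genuine gap is in your concrete definition of $\aformulater$: as listed it is too weak, so the direction $\aformulater \implies \aformula\septraction\aformulabis$ is not even valid, and by soundness no iteration of the existence axioms can derive it. You must also include every literal that $\aformulabis$ forces \emph{downward} onto the current heap because $\aheap \sqsubseteq \aheap + \aheap'$: each $\lnot\alloc{\avariable} \inside \aformulabis$ and each $\lnot\, \avariable \Ipto \avariablebis \inside \aformulabis$ must reappear as a conjunct of $\aformulater$ (and likewise the upper-bound size literals $\lnot\size\geq\inbound_2{\dotminus}\inbound_1$ obtained from $\size\geq\inbound_1\inside\aformula$ and $\lnot\size\geq\inbound_2\inside\aformulabis$, in case your phrase ``subtracting the contribution of $\aheap'$'' was only meant for the lower bounds). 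Concretely, if $\lnot\alloc{\avariable}$ occurs in both $\aformula$ and $\aformulabis$, your $\aformulater$ says nothing about $\avariable$, yet a heap allocating $\avariable$ falsifies $\aformula\septraction\aformulabis$, since every extension still allocates $\avariable$; the paper's conjunction contains precisely these extra conjuncts. A smaller slip: to pass from $\lnot\aformulater \separate \aformula \implies \lnot\aformulabis$ to $\lnot\aformulater \implies (\aformula \magicwand \lnot\aformulabis)$ you need rule~\ref{rule:staradj}, not~\ref{rule:magicwandadj} (which goes in the opposite direction); the same correction applies to your handling of the inconsistent cases, and you should also cover the case where $\aformulabis$ alone is unsatisfiable, which your case analysis does not mention but the same machinery handles.
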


\begin{proof} (sketch)
If either $\aformula$ or $\aformulabis$ is unsatisfiable, then one can show that $\prove_{\magicwandsys}\aformula\septraction\aformulabis \implies \bot$.
Otherwise,
$\aformula \septraction \aformulabis$ can be shown provably equivalent to
\begin{nscenter}
  \setlength{\tabcolsep}{2pt}
  \begin{tabular}{rlclcl}
  &$\bigwedge\formulasubset{\avariable \sim \avariablebis \inside \orliterals{\aformula}{\aformulabis}}{\bmat[\sim \in \{=,\neq\}]}$
  &
  $\land$
  &
  $\bigwedge \aformulasubset{\lnot\alloc{\avariable} \inside \aformulabis}$
  &
  $\land$
  &
  $ \bigwedge \aformulasubset{\lnot \avariable {\Ipto} \avariablebis \inside \aformulabis}$
  \\
  $\land$ &
  $\bigwedge \formulasubset{\alloc{\avariable}}{\bmat[\lnot\alloc{\avariable}\inside\aformula\\ \alloc{\avariable}\inside\aformulabis]}$
  &
  $\land$
  &
  $
  \bigwedge \formulasubset{\avariable \Ipto \avariablebis}{\bmat[\lnot\alloc{\avariable}\inside\aformula\\ \avariable \Ipto \avariablebis \inside \aformulabis]}$
  &
  $\land$
  &
  $
  \bigwedge \formulasubset{\lnot\alloc{\avariable}}{\bmat[\alloc{\avariable}\inside\aformula]}
  $
  \\
  $\land$ &
  $\bigwedge \formulasubset{\size\geq\inbound_2{+}1{\dotminus}\inbound_1}{\bmat[\lnot\size\geq\inbound_1\inside\aformula\\\size\geq\inbound_2\inside\aformulabis]}$
  &
  $\land$
  &
  $
  \bigwedge \formulasubset{\bottom}{\bmat[\avariable\Ipto\avariablebis\inside\aformula\\ \lnot\avariable \Ipto \avariablebis\inside\aformulabis]}$
  &
  $\land$
  &
  $
  \bigwedge \formulasubset{\bottom}{\bmat[\alloc{\avariable}\land\lnot\avariable\Ipto\avariablebis\inside\aformula\\ \avariable \Ipto \avariablebis\inside\aformulabis]}$
  \\
  $\land$ &
  $\bigwedge \formulasubset{\lnot\size\geq\inbound_2{\dotminus}\inbound_1}{\bmat[\size\geq\inbound_1\inside\aformula\\\lnot\size\geq\inbound_2\inside\aformulabis]}$
  &
  $\land$
  &
  $
  \bigwedge \formulasubset{\bottom}{\bmat[\alloc{\avariable}\inside\aformula\\ \lnot\alloc{\avariable}\inside\aformulabis]}$
  \end{tabular}
\end{nscenter}
  where $a\dotminus b$ stands for $a-b$ if $a\geq b$, $0$ otherwise.
  Again, this equivalence is reminiscent to the one in~\cite[Lemma 4]{Echenim&Iosif&Peltier19}  proved semantically.
  Herein, the proof is completely syntactical.
  \LongVersionOnly{
  In a way, because $\magicwandsys{}$ will reveal to be complete, the restriction of the proof
  of \cite[Lemma 4]{Echenim&Iosif&Peltier19}  to $\slSW{}$ can  be
  replayed within $\magicwandsys{}$.
  }
\end{proof}
Again, this result for core types can be extended to arbitrary Boolean combinations of core formulae, as we show that the distributivity of $\septraction$ over disjunctions is provable in $\magicwandsys$.
As a consequence of this development, we achieve one of the main results of the paper.

\begin{restatable}{theorem}{theoPSLcompleteAx}\label{theo:PSLcompleteAx}
$\magicwandsys$ is sound and complete for \slSW.
\end{restatable}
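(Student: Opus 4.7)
The plan is as follows. Soundness of $\magicwandsys$ is routine: each axiom schema is valid and each inference rule preserves validity under the heaplet semantics, as verified pointwise by unfolding the satisfaction clauses. For completeness, I would prove by structural induction on $\aformula \in \slSW$ the reduction statement: \emph{there exists a Boolean combination of core formulae $\aformulater$ such that $\prove_{\magicwandsys} \aformula \iff \aformulater$}. Granting this, if $\aformula$ is valid then so is $\aformulater$ (by soundness applied to the equivalence), so Theorem~\ref{theo:corePSLcompl} yields $\prove_{\coresys} \aformulater$, hence $\prove_{\magicwandsys} \aformulater$ (since $\coresys \subseteq \magicwandsys$), and finally $\prove_{\magicwandsys} \aformula$ by modus ponens on the proven equivalence.

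The base case (atomic formulae are themselves core) and the Boolean cases are immediate from propositional reasoning and congruence. For $\aformula_1 \separate \aformula_2$, the inductive hypothesis combined with classical reasoning rewrites each $\aformula_i$ into a provably equivalent disjunction of core types $\bigvee_j \aformula_{i,j}$; commutativity~\ref{starAx:Commute}, distributivity~\ref{starAx:DistrOr}, and the congruence rule~\ref{rule:starinference} then yield $\prove_{\magicwandsys} \aformula_1 \separate \aformula_2 \iff \bigvee_{j,k}(\aformula_{1,j} \separate \aformula_{2,k})$, and Lemma~\ref{lemma:starPSLelim} reduces each disjunct to a conjunction of core literals, closing this case.

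For $\aformula_1 \magicwand \aformula_2$, I would first derive the provable equivalence $(\aformula_1 \magicwand \aformula_2) \iff \neg(\aformula_1 \septraction \neg\aformula_2)$ from the adjunction rules~\ref{rule:staradj},~\ref{rule:magicwandadj} together with classical logic, reducing the problem to the treatment of $\septraction$. Applying the inductive hypothesis, both arguments become Boolean combinations of core formulae and hence disjunctions of core types; pushing $\septraction$ inward over these disjunctions in both argument positions isolates pairs of core types to which Lemma~\ref{lemma:magicwandPSLelim} applies disjunct-wise, producing a Boolean combination of core formulae.

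The main obstacle is establishing, purely within $\magicwandsys$, the distributivity of $\septraction$ over $\vee$ in both arguments: unlike the $\separate$ case, where~\ref{starAx:DistrOr} is taken as an axiom, the analogous property for $\septraction$ is not postulated and must be derived by combining the adjunction rules with Boolean reasoning and the congruence rule~\ref{rule:starinference}. Once these distributivity laws are secured, the bottom-up elimination of $\separate$ and $\magicwand$ proceeds uniformly over the formula tree, and invoking Theorem~\ref{theo:corePSLcompl} at the root of the reduction closes the argument.
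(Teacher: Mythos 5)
Your proposal is correct and follows essentially the same route as the paper: soundness by direct semantic checking, then a bottom-up, provable reduction of every formula to a Boolean combination of core formulae via Lemmata~\ref{lemma:starPSLelim} and~\ref{lemma:magicwandPSLelim} (treating $\magicwand$ through its dual $\septraction$ and the adjunction rules), concluded by Theorem~\ref{theo:corePSLcompl}. The obstacle you flag is exactly the one the paper addresses, namely deriving within $\magicwandsys$ the distributivity of $\septraction$ over disjunctions (together with the $\magicwand$-congruences obtained from \ref{rule:staradj}/\ref{rule:magicwandadj}), so no gap remains.
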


\noindent
{\bf \textsf{What's next?}}
To provide further evidence that our method is robust, we shall apply it
to axiomatise other separation logics, for instance
by  adding the list segment predicate $\ls$~\cite{Berdine&Calcagno&OHearn04}
(or inductive predicates in general)
or first-order quantification.
\cut{
A key step in our approach is first to show  that
the logic admits a characterisation in terms of core formulae and such formulae
need to be designed adequately.}
Of course, the set of valid formulae must be
r.e.,
which discards any attempt with $\seplogic{\separate,\magicwand,\ls}$ or with
the first-order version of \slSW~\cite{DemriLM18,Brochenin&Demri&Lozes12}.
In Section~\ref{section:IntervalSL}, we introduce an extension of  $\seplogic{\separate,\ls}$
 and we axiomatise it with our method, whose main ingredients are recalled below.
\cut{
\noindent
{\bf Big-step vs. small-step axiom schemas.} $\magicwandsys$ is able to simulate the bottom-up elimination of
separating connectives (see Lemmata~\ref{lemma:starPSLelim} and~\ref{lemma:magicwandPSLelim})
when the arguments are Boolean combinations of core formulae.
To do so, $\magicwandsys$ contains axiom schemas that perform such an elimination in multiple ``small-step'' derivations, e.g.\ by deriving a single $\alloc{\avariable}$ predicate from $\alloc{\avariable}\separate\true$ (Axiom~\ref{starAx:StarAlloc}).
Alternatively, it would have been possible to include ``big-step'' axiom schemas that, given two core types, derive in the equivalent formula in one single derivation step (see the proof sketch of Lemma~\ref{lemma:starPSLelim} and the axiom~\ref{starAx2:StarElim} in Section~\ref{section:IntervalSL}).
The main difference is that small-step axioms provide
a simpler understanding of the essential properties of the logic.

\noindent
{\bf What's next? }
To provide further evidence that our method is robust, it is desirable to
apply it to axiomatise other separation logics, for instance
by  adding the list segment predicate $\ls$~\cite{Berdine&Calcagno&OHearn04}
(or more generally user-defined inductive predicates)
or by adding first-order quantification. A key step in our approach is first to show  that
the logic admits a characterisation in terms of core formulae and such formulae
need to be designed adequately. Of course, it is required that the set of valid formulae is recursively
enumerable, which discards any attempt with $\seplogic{\separate,\magicwand,\ls}$ or with
the first-order version of \slSW~\cite{DemriLM18,Brochenin&Demri&Lozes12}.
Below, we introduce an extension of  $\seplogic{\separate,\ls}$
 and we axiomatise it with our method.
}
\subsection{Ingredients of the method}\label{subsection:ingredients}
The Hilbert-style axiomatisation of \slSW has culminated
with Theorem~\ref{theo:PSLcompleteAx} that states the adequateness of  $\magicwandsys$.
Below, we would like to recapitulate the key ingredients of the proposed method, not only to provide
a vade-mecum for axiomatising other separation logics (which we illustrate on the newly
introduced logic  \intervalSL in Section~\ref{section:IntervalSL}), but also to identify the
essential features  and where variations are still possible.

\noindent
{\bf \textsf{Core formulae.}} To axiomatise \slSW internally, the core formulae have played an essential
role. The main properties of these formulae is that their Boolean combinations capture the full logic \slSW~\cite{Lozes04bis}
and all the core formulae can be expressed in \slSW.
Generally speaking, our axiom system naturally leads to a form of constructive completeness, as
advocated in~\cite{Doumane17,Luck18}: the axiomatisation provides
proof-theoretical means to transform any formula into an equivalent Boolean combination
of core formulae, and it contains also a part dedicated to the derivation of valid Boolean combinations
of core formulae (understood as a syntactical fragment of \slSW).
What is specific to each logic is the design of the set of core formulae and in the case of \slSW,
this was already known since~\cite{Lozes04bis}.

\noindent
{\bf \textsf{Big-step vs.\ small-step axiom schemas.}} $\magicwandsys$ simulates the bottom-up elimination of
separating connectives (see Lemmata~\ref{lemma:starPSLelim} and~\ref{lemma:magicwandPSLelim})
when the arguments are two Boolean combinations of core formulae.
To do so, $\magicwandsys$ contains axiom schemas that perform such an elimination in multiple ``small-step'' derivations, e.g.\ by deriving a single $\alloc{\avariable}$ predicate from $\alloc{\avariable}\separate\true$ (axiom~\ref{starAx:StarAlloc}).
Alternatively, it would have been possible to include ``big-step'' axiom schemas that, given the two Boolean combinations of core formulae, derive the equivalent formula in one single derivation step. Instances of this are given in the proof sketch of Lemma~\ref{lemma:starPSLelim}, and later in Section~\ref{section:IntervalSL} (axiom~\ref{starAx2:StarElim}).
The main difference is that small-step axioms provide
a simpler understanding of the key properties of the logic.

\cut{
\noindent
{\bf Relationships with axiomatisation by reduction axioms.}
As described above, in order to axiomatise \slSW, axiom schemas and inference rules are dedicated
to transform formulae into Boolean combinations of core formulae, and part of $\magicwandsys$ is dedicated
to axiomatise Boolean combinations of core formulae, which leads to completeness. In a sense, the calculus
$\magicwandsys$ provides a constructive means to transform any formula into a logically equivalent
Boolean combination of core formulae.
A similar proof schema has been already used for axiomatising so-called dynamic epistemic
logics, see e.g.~\cite{vanBenthem2011ldii,WangC13}, with the introduction of reduction axioms.
In a nutshell, every formula  containing a dynamic operator is provably reduced to a formula without such an operator.
Completeness is then established thanks to the completeness of the underlying `basic' language,
which in our case concerns Boolean combination of core formulae, though presently the axiomatisation of core formulae
requires also some work, and can be very challenging (see e.g. Section~\ref{section:IntervalSL}).
Hilbert-style axiomatisations following similar high-level principles for the
modal separation logics MSL($\separate$,$\Diamond$) and MSL($\separate$,$\langle \neq \rangle$)
introduced in~\cite{Demri&Fervari18}, have been designed in~\cite{Demri&Fervari&Mansutti19}.
}
\section{How to axiomatise internally the  separation logic \intervalSL}\label{section:IntervalSL}
Though core formulae are handful for several existing  separation logics, see e.g. recently~\cite{DemriLM18,Mansutti18,Echenim&Iosif&Peltier19},
we would like to test our method with first-order quantification and reachability predicates, standard features in
 specifications. However,  $\seplogic{\separate,\magicwand,\ls}$ is already known to be
 non-finitely axiomatisable, see the developments in Section~\ref{section:preliminaries}.
So, we need to downgrade our ambitions and we suggest to consider a new logic with guarded quantification and $\ls$
and this is  \intervalSL presented below. Note that the idea of having guarded quantification with second-order
features is not new, see e.g. in~\cite{Gradel&Walukiewicz99}  extensions of the guarded fragment of first-order logic
with fixed points, but herein, this is done in the framework of separation logics and their
axiomatisation.
In short, we introduce the new separation logic \intervalSL that admits the
connective
$\separate$,  the list segment predicate $\ls$ (implicitly) and a guarded form of first-order quantification
involving $\ls$.
It contains  the symbolic heap fragment~\cite{Berdine&Calcagno&OHearn04,Cooketal11} but also richer logics such as
\slSRp (see e.g.~\cite{DemriLM18}).
As a by-product of our completeness proof, we are  able to characterise the complexity of the satisfiability problem for \intervalSL.

\subsection{A guarded logic with $\ls$: \intervalSL}
\label{section-introduction-to-guarded-logic}

Formulae of \intervalSL are defined according to the grammar below (where $\avariable, \avariablebis,\avariableter \in \PVAR$):
\begin{minipage}{0.12\textwidth}
   \vspace{1cm}

   \noindent
   \hfill
   \scalebox{0.9}{
    \begin{tikzpicture}
      \coordinate (cir) at (0,0);
      \def\radius{0.45cm}

      \draw (cir) ++(90:\radius) node[highlightnode] (k) {};
      \draw (cir) ++(0:\radius) node[dot] (g) {};
      \draw (cir) ++(270:\radius) node[highlightnode,label={[yshift=-15pt]{$\scriptstyle{\avariablebis}$}}] (u) {};

      \node[highlightnode] (m) [above = 0.4cm of k] {};
      \node[highlightnode,label=right:{$\scriptstyle{\avariable}$}] (i) [above left=0.4cm and 0.6cm of m] {};
      \draw  (cir) ++(180:\radius) node[highlightnode] (ze) {};

      \node[dot] (j) [above right=0.4cm and 0.6cm of m] {};

      \draw[pto] (j.center) -- (m);
      \draw[pto] (i.center) -- (m);
      \draw[pto] (m.center) -- (k);

      \draw[pto] (cir) ++(90:\radius) arc (90:175:\radius);
      \draw[pto] (cir) ++(180:\radius) arc (180:265:\radius);
      \draw[pto] (cir) ++(-90:\radius) arc (-90:-3:\radius);
      \draw[pto] (cir) ++(0:\radius) arc (0:85:\radius);

      \begin{pgfonlayer}{bg}
        \draw[highlightpath] (i.center) -- (m.center);
        \draw[highlightpath] (cir) ++(90:\radius) arc (90:175:\radius);
        \draw[highlightpath] (cir) ++(180:\radius) arc (180:265:\radius);
        \draw[highlightpath] (m.center) -- (k.center);
      \end{pgfonlayer}
    \end{tikzpicture}
    }
    \hfill\,
\end{minipage}%
\begin{minipage}{0.88\textwidth}
\vspace{2pt}
\noindent\hfill
$
\aformula := \avariable = \avariablebis \mid \avariable \Ipto \avariablebis \mid \emp \mid \lnot \aformula \mid \aformula \land \aformula \mid \aformula \separate \aformula \mid
\inpath{\avariable}{\avariablebis}{\avariableter} \aformula \qquad
$\hfill\,
\vspace{2pt}

\noindent All the syntactic ingredients are standard except the quantifier (denoted
with $\weirdexists$).
Intuitively (the formal definition is provided below), $\inpath{\avariable}{\avariablebis}{\avariableter} \aformula$ is a guarded form of
quantification that is intended to hold true whenever $\avariablebis$ is reachable from $\avariable$ in at least one step, and there is a location
$\alocation$ along the
minimal path between $\avariable$ and  $\avariablebis$ so that the formula $\aformula$ holds whenever $\alocation$ is assigned to $\avariableter$. The figure on the left highlights the possible assignments of $\avariableter$%
\parfillskip=0pt
\end{minipage}%

\noindent
(arrows represent the heap).
Given a heap $\aheap$ and ${\alocation_1,\alocation_2\, {\in}\, \LOC}$, we define $\minpath{\alocation_1}{\alocation_2}{\aheap}$ as the set of locations in the shortest path from
$\alocation_1$ to $\alocation_2$ ($\alocation_2$ possibly excluded). Formally:
\begin{nscenter}
$\minpath{\alocation_1}{\alocation_2}{\aheap} \egdef \left\{ \alocation\in\LOC\ \middle|\
\begin{aligned}
&\text{there are } \alength_1\geq 0 \text{ and } \alength_2 \geq 1 \text{ such that } \aheap^{\alength_1}(\alocation_1) = \alocation, \\
& \aheap^{\alength_2}(\alocation) = \alocation_2 \text{ and, for every } \alength \in \interval{1}{\alength_1{+}\alength_2{-}1},  \aheap^{\alength}(\alocation_1)
\neq \alocation_2
\end{aligned}\right\}$
\end{nscenter}
For example,  $\minpath{\alocation}{\alocation}{\aheap} = \emptyset$ holds iff
$\alocation$ is not in a cycle. Otherwise, $\minpath{\alocation}{\alocation}{\aheap}$ contains all the locations in
the cycle containing $\alocation$. By definition, the minimal paths are preserved when considering heap extensions.
Then, the satisfaction relation $\models$ is completed with
\begin{nscenter}
\begin{tabular}{l}
$\pair{\astore}{\aheap} \models \inpath{\avariable}{\avariablebis}{\avariableter}\ \aformula$ $\equivdef$
$
\minpath{\astore(\avariable)}{\astore(\avariablebis)}{\aheap}\neq\emptyset$ and $\exists
\alocation\, \in\, \minpath{\astore(\avariable)}{\astore(\avariablebis)}{\aheap}\cup\{\astore(\avariablebis)\}$
s.t.\ $\pair{\astore[\avariableter \gets \alocation]}{\aheap} \models \aformula$.
\end{tabular}
\end{nscenter}
We define $\forallinpath{\avariable}{\avariablebis}{\avariableter} \aformula \egdef
      \neg \inpath{\avariable}{\avariablebis}{\avariableter} \neg \aformula$.
In a separation logic {\em lingua} admitting first-order quantification of program variables over the set of locations $\LOC$,
and a
predicate $\reachplus(\avariable,\avariablebis)$ (reachability in at least one step, as in~\cite{DemriLM18}),
the formula $\inpath{\avariable}{\avariablebis}{\avariableter} \aformula$ is equivalent to
\begin{nscenter}
$
\reachplus(\avariable,\avariablebis)
\wedge
\exists \ \avariableter \
\aformula \wedge
(
\avariableter = \avariable \vee \avariableter = \avariablebis \vee
 ((\reachplus(\avariable,\avariableter) \wedge \neg \reachplus(\avariable,\avariablebis))\separate
\reachplus(\avariableter,\avariablebis))).
$
\end{nscenter}
Obviously, \intervalSL does not allow unrestricted
first-order
quantification but it can faithfully define
the
reachability predicates
classically
studied in separation logic~\cite{DemriLM18,Reynolds02}.
$\reachplus(\avariable,\avariablebis)$ is definable as $\inpath{\avariable}{\avariablebis}{\avariableter} \top$,
and allows us to define $\ls(\avariable,\avariablebis)$ and $\reach(\avariable,\avariablebis)$ as shown in~\cite{DemriLM18}:
$\ls(\avariable,\avariablebis) \egdef (\avariable = \avariablebis \wedge \emp) \vee (\avariable \neq \avariablebis \wedge \reachplus(\avariable,\avariablebis)
\wedge \neg (\neg \emp \separate \reachplus(\avariable,\avariablebis)))$,
whereas
$\reach(\avariable,\avariablebis) \egdef \avariable = \avariablebis \lor \reachplus(\avariable,\avariablebis)$.
There are two features of \intervalSL, we would like to emphasize.
First, it is possible to enforce a heap domain of exponential size.
\cut{
Two other key aspects of this logic should be stated.
First, its formulae may be satisfiable only by memory states whose heap domain has an exponential
amount of locations.
}
Indeed, we define the formula $\mbox{\sc R}^n(\avariable,\avariablebis)$ of size linear in $n$, but enforcing the existence of a path of length at least $2^n$ between two distinct locations corresponding to
$\avariable$ and $\avariablebis$, respectively.
$\mbox{\sc R}^0(\avariable,\avariablebis) \egdef \avariable \neq \avariablebis \wedge
\inpath{\avariable}{\avariablebis}{\avariableter} \top$,
\LongVersionOnly{
(existence of a path of length at least one between $\avariable$ and $\avariablebis$),
}
whereas for $n \geq 0$, $\mbox{\sc R}^{n+1}(\avariable,\avariablebis)$ is defined as
\begin{nscenter}
$\avariable {\neq} \avariablebis \wedge
\inpath{\avariable}{\avariablebis}{\avariableter}\,
\forallinpath{\avariable}{\avariablebis}{\avariableter'}\, \forallinpath{\avariable}{\avariablebis}{\avariableter''}
\left((\avariableter' = \avariable \wedge \avariableter'' = \avariableter) \lor
     (\avariableter' = \avariableter \wedge \avariableter'' = \avariablebis) \Rightarrow \mbox{\sc R}^n(\avariableter',\avariableter'')\right).$
\end{nscenter}
Nevertheless, in Section~\ref{subsection:PSpace} we show how the satisfiability and validity problems for
\intervalSL  are in \pspace.
Another interesting feature of \intervalSL is illustrated by its ability to state that from two
locations corresponding to program variables (say $\avariable, \avariablebis$),
it is possible to reach a different location, which in turn reaches another location corresponding to a program variable
(say $\avariableter$). This can be done with the
formula $\inpath{\avariable}{\avariableter}{\avariablesix}(\reachplus({\avariablebis},{\avariablesix}) \land \bigwedge_{\avariablefifth \in \{\avariable,\avariablebis,\avariableter\}} \avariablesix \neq \avariablefifth)$.
Thus, the logic is able to express that two paths meet at a specific location. This naturally leads to the notion of meet-points, introduced next in order to define the core formulae for \intervalSL.
\subsection{Core formulae are back!}
In order to axiomatise internally \intervalSL with our method, we need to possess
a set of core formulae that captures \intervalSL.
Below, we design
such core formulae and establish its appropriateness.
They make intensive use of meet-point terms, a concept introduced in~\cite{DemriLM18} but that will play a crucial
role herein.
Informally, given a memory state $\pair{\astore}{\aheap}$,
a meet-point between $\astore({\avariable})$ and $\astore({\avariablebis})$ leading to $\astore({\avariableter})$  is a location
$\alocation$ such that (\rom{1}) $\alocation$  reaches $\astore({\avariableter})$, (\rom{2}) both locations
$\astore(\avariable)$ and $\astore({\avariablebis})$ reach $\alocation$,
and (\rom{3}) there is no location
\begin{minipage}{0.83\textwidth}
\noindent $\alocation'$ satisfying these properties and reachable from $\astore({\avariable})$ in strictly
fewer steps. A meet-point term of the form $\ameetvar{\avariable}{\avariablebis}{\avariableter}$, where
$\avariable,\avariablebis,\avariableter \in \PVAR$, is then an expression that, given a memory state $\pair{\astore}{\aheap}$, is intended to be interpreted by a
meet-point between $\astore(\avariable)$ and $\astore(\avariablebis)$ leading to $\astore(\avariableter)$ (if it exists).
The figure on the right shows
some of the meet-points between $\avariable$ and other program variables, highlighting their distribution in a
memory state.
In particular,%
\parfillskip=0pt
\end{minipage}%
\begin{minipage}{0.17\textwidth}
   \noindent\hfill
    \begin{tikzpicture}
      \coordinate (cir) at (0,0);
      \def\radius{0.4cm}

      \draw (cir) ++(90:\radius) node[highlightnode,label={[xshift=-15pt,yshift=-6pt]{$\scriptstyle{\ameetvar{\avariable}{\avariablefour}{\avariableter}}$}}] (k) {};

      \draw (cir) ++(270:\radius) node[dot,label={[yshift=-15pt]{$\scriptstyle{{\avariableter}}$}}] (u) {};

      \node[highlightnode,label={[xshift=17pt,yshift=-18pt]$\substack{
      \ameetvar{\avariable}{\avariablebis}{\avariableter}\\\ameetvar{\avariablebis}{\avariable}{\avariableter}}$}] (m) [above = 0.6cm of k] {};
      \node[dot,label=right:{$\scriptstyle{\avariable}$}] (i) [above left=0.4cm and 0.6cm of m] {};
      \node[dot,label=left:{$\scriptstyle{\avariablebis}$}] (j) [above right=0.4cm and 0.6cm of m] {};

      \draw  (cir) ++(0:\radius) node (wow) {};
      \node[dot,label={[xshift=6pt,yshift=-7pt]{$\scriptstyle{\avariablefour}$}}] (z) [right = 0.6cm of k] {};

      \node[highlightnode,label={[xshift=-15pt,yshift=-10pt]$\scriptstyle{
      \ameetvar{\avariablefour}{\avariable}{\avariableter}}$}] (ze) [right = 0cm of wow] {};

      \draw[pto] (i.center) -- (m);
      \draw[pto] (j.center) -- (m);
      \draw[pto] (m.center) -- (k);
      \draw[pto] (z.center) -- (ze);

      \draw[pto] (k) .. controls +(180:0.8) and +(180:0.8) .. (u);
      \draw[pto] (ze) .. controls +(90:0.3) and +(0:0.4) .. (k);
      \draw[pto] (u) .. controls +(0:0.4) and +(-90:0.3) .. (ze);
    \end{tikzpicture}
\end{minipage}

\noindent notice how in the figure, $\ameetvar{\avariable}{\avariablefour}{\avariableter}$ is different from $\ameetvar{\avariablefour}{\avariable}{\avariableter}$, which happens
because of the condition (\rom{3}) and as the two corresponding locations are in a cycle. We call this type of meet-points \emph{asymmetric}.
We now formalise these concepts.
Given $\asetvar \subseteq \PVAR$, we write $\atermset{\asetvar}$ to denote the set
 $\asetvar \cup \set{ \ameetvar{\avariable}{\avariablebis}{\avariableter} \mid \avariable,\avariablebis,\avariableter \in \asetvar}$.
Elements of $\atermset{\PVAR}$ are
called \defstyle{terms}.
The terms of the form $\ameetvar{\avariable}{\avariablebis}{\avariableter}$ are syntactic constructs called \defstyle{meet-point terms}.
Terms are denoted with $\aterm,\aterm_1,\aterm_2,\dots$, when we do not need to distinguish between variables and meet-point terms.
In order to give a semantics to these objects, we interpret the terms
on memory states by means of the interpretation function
$\semantics{.}_{\astore,\aheap}\!: \atermset{\PVAR} \to \LOC$ such that $\semantics{\avariable}_{\astore,\aheap} \egdef
 \astore(\avariable)$ for each $\avariable \in \PVAR$, and
$\semantics{\ameetvar{\avariable}{\avariablebis}{\avariableter}}_{\astore,\aheap}$ is defined and takes the value $\alocation$ iff
there are $\alength_1,\alength_2 \geq 0$ such that
\begin{itemize}[nosep]
\item $\aheap^{\alength_1}(\astore(\avariable)) = \aheap^{\alength_2}(\astore(\avariablebis)) = \alocation$ and there is $\alength \geq 0$ such that  $\aheap^\alength(\alocation) = \astore(\avariableter)$;
\item for every $\alength_1' \in \interval{0}{\alength_1-1}$ and $\alength_2' \geq 0$,
$\aheap^{\alength_1'}(\astore(\avariable)) \neq \aheap^{\alength_2'}(\astore(\avariablebis))$.
\end{itemize}
%
%
%
%
%
One last object is needed in order to define the core formulae.
Given a memory state $\pair{\astore}{\aheap}$
\LongVersionOnly{a finite set of terms $\asetmeetvar \subseteq_{\fin} \atermset{\PVAR}$} and
a finite set of pairs of terms $\asetpath \subseteq_{\fin} \atermset{\PVAR} \times \atermset{\PVAR}$,
we write
\LongVersionOnly{
$\pathset{\alocation}{\astore,\aheap}{\asetmeetvar}$ to denote
the set of locations reachable from $\alocation$ without going through locations
(different from $\alocation$) corresponding to terms in $\asetmeetvar$ and
}
$\remset{\astore,\aheap}{\asetpath}$ to denote the subset of $\domain{\aheap}$
made of the locations that are not in the path between two locations corresponding to terms in a pair of $\asetpath$.
Formally: $\remset{\astore,\aheap}{\asetpath} \egdef \domain{\aheap} \setminus
\big(\bigcup_{\pair{\aterm_1}{\aterm_2}\in\asetpath} \minpath{\semantics{\aterm_1}_{\astore,\aheap}}{\semantics{\aterm_2}_{\astore,\aheap}}{\aheap}\big)$.
\LongVersionOnly{
\begin{itemize}
\itemsep 0 cm
\item
$\pathset{\alocation}{\astore,\aheap}{\asetmeetvar} \egdef \{ \alocation' \in \domain{\aheap} \mid \exists \alength \geq 0\ \aheap^\alength(\alocation) = \alocation' \text{ and } \aheap^{\alength'}(\alocation)
\not\in\semantics{\asetmeetvar}_{\astore,\aheap} \text{ for every } 0 < \alength' \leq \alength \}$
where $\semantics{\asetmeetvar}_{\astore,\aheap}$ is equal to
$\set{\semantics{\aterm}_{\astore,\aheap} \mid \aterm \in \asetmeetvar}$.
\item $\remset{\astore,\aheap}{\asetpath} \egdef \domain{\aheap} \setminus \big(\bigcup_{\pair{\aterm_1}{\aterm_2}\in\asetpath} \minpath{\semantics{\aterm_1}_{\astore,\aheap}}{\semantics{\aterm_2}_{\astore,\aheap}}{\aheap}\big)$.
\end{itemize}
Note that the only location in $\pathset{\alocation}{\astore,\aheap}{\asetmeetvar}$ that may correspond
to the interpretation of a term in  $\asetmeetvar$, is $\alocation$ itself.
Moreover, $\pathset{\alocation}{\astore,\aheap}{\asetmeetvar}=\emptyset$ if and only if $\alocation \not\in\domain{\aheap}$.
}

The \defstyle{core formulae} are expressions of the form:
$\aterm_1 = \aterm_2$,
$\seesgeq{\aterm_1}{\aterm_2}{\asetmeetvar}{\inbound{+}1}$ and
$\remgeq{\asetpath}{\inbound}$,
where  $\aterm_1,\aterm_2 \in \atermset{\PVAR}$,  $\asetmeetvar \subseteq_\fin \atermset{\PVAR}$, $\asetpath \subseteq_\fin \atermset{\PVAR}\times \atermset{\PVAR}$ and $\inbound \in \Nat$.
We write $\sees{\aterm_1}{\aterm_2}{\asetmeetvar}$ for $\seesgeq{\aterm_1}{\aterm_2}{\asetmeetvar}{1}$.
The satisfaction relation $\models$ is extended to core formulae:
\begin{itemize}[nosep]
\item $\pair{\astore}{\aheap} \models \aterm_1 = \aterm_2$ $\equivdef$ $\semantics{\aterm_1}_{\astore,\aheap} =
\semantics{\aterm_2}_{\astore,\aheap}$; \hfill\labelitemi\ $\pair{\astore}{\aheap} \models \remgeq{\asetpath}{\inbound}$ $\equivdef$
$\card{\remset{\astore,\aheap}{\asetpath}} \geq \inbound$;
\item {
\addtolength{\jot}{-4pt}
$
\begin{aligned}[t]
\pair{\astore}{\aheap} \models \seesgeq{\aterm_1}{\aterm_2}{\asetmeetvar}{\inbound} \, \equivdef \, &
\text{there is}\ \delta \geq \inbound\ \text{such that}\ \aheap^{\delta}(\semantics{\aterm_1}_{\astore,\aheap}) = \semantics{\aterm_2}_{\astore,\aheap}\
\text{and for }\\ &\text{all}\
\delta' \in \interval{1}{\delta-1},
\aheap^{\delta'}(\semantics{\aterm_1}_{\astore,\aheap})\,{\not\in}\,
\set{\semantics{\aterm_2}_{\astore,\aheap}} \cup \set{\semantics{\aterm}_{\astore,\aheap} \mid \aterm\,{\in}\,\asetmeetvar}.
\end{aligned}$}
\LongVersionOnly{
the conditions
below hold:
\begin{enumerate}
\item $\card{\minpath{\semantics{\aterm_1}_{\astore,\aheap}}{\semantics{\aterm_2}_{\astore,\aheap}}{\aheap}} \geq \inbound$
      (existence of a path of length at least $\inbound$),
\item $\minpath{\semantics{\aterm_1}_{\astore,\aheap}}{\semantics{\aterm_2}_{\astore,\aheap}}{\aheap} \subseteq
      \pathset{\semantics{\aterm_1}_{\astore,\aheap}}{\astore,\aheap}{\asetmeetvar}$ (no location corresponding to the interpretation
      of a term in $\asetmeetvar$ is on the path from $\semantics{\aterm_1}_{\astore,\aheap}$ to $\semantics{\aterm_2}_{\astore,\aheap}$).
\end{enumerate}
}
\LongVersionOnly{
\item $\pair{\astore}{\aheap} \models \remgeq{\asetpath}{\inbound}$ iff
$\card{\remset{\astore,\aheap}{\asetpath}} \geq \inbound$.
}
\end{itemize}
As earlier in Section~\ref{section:PSL},
we write $\coreformulae{\asetvar}{\bound}$ to denote the set of core formulae restricted to terms from
$\atermset{\asetvar}$, where $\asetvar \subseteq_\fin \PVAR$ and $\inbound$ is bounded above by $\bound$.
In order to become more
\begin{minipage}{0.17\textwidth}
   \noindent
   \begin{tikzpicture}
     \coordinate (cir) at (0,0);
     \def\radius{0.45cm}

     \node[highlightnode,label=above:{$\scriptstyle{\avariable}$}] (x) at (0,0) {};
     \node[highlightnode,label=right:{$\scriptstyle{\avariablefour}$}] (u) [below right= 0.6cm and 0.8cm of x] {};
     \node[highlightnode,label=right:{$\scriptstyle{\ameetvar{\avariable}{\avariablebis}{\avariableter}}$}] (m) [below = 0.6cm of u] {};
     \node[highlightnode,label=above:{$\scriptstyle{\avariablebis}$}] (y) [above right = 0.6cm and 0.8cm of m] {};
     \node[highlightnode,label=below:{$\scriptstyle{\avariableter}$}] (z) [below left = 0.6cm and 0.8cm of m] {};

     \draw[pto] (x.center) -- (u);
     \draw[pto] (u.center) -- (m);
     \draw[pto] (y.center) -- (m);
     \draw[pto] (m.center) -- (z);
     \begin{pgfonlayer}{bg}
       \draw[highlightpath] (x.center) -- (u.center);
       \draw[highlightpath] (y.center) -- (m.center);
       \draw[highlightpath] (m.center) -- (z.center);
     \end{pgfonlayer}
   \end{tikzpicture}
   \hfill
\end{minipage}%
\begin{minipage}{0.83\textwidth}
\noindent familiar with these core formulae, let us consider the memory state $\pair{\astore}{\aheap}$ outlined on the left.
Since both $\astore(\avariable)$ and $\astore(\avariablebis)$ reach $\astore(\avariableter)$,  $\semantics{\ameetvar{\avariable}{\avariablebis}{\avariableter}}_{\astore,\aheap}$ is defined, or
alternatively $\pair{\astore}{\aheap} \models \ameetvar{\avariable}{\avariablebis}{\avariableter} = \ameetvar{\avariable}{\avariablebis}{\avariableter}$.
Therefore, we have that $\pair{\astore}{\aheap} \models \sees{\avariable}{\ameetvar{\avariable}{\avariablebis}{\avariableter}}{\emptyset}$.
We also note that $\astore(\avariablefour)$ is a location in the minimal path from $\astore(\avariable)$ to $\semantics{\ameetvar{\avariable}{\avariablebis}{\avariableter}}_{\astore,\aheap}$.
However, as $\astore(\avariablefour)$ is distinct from these two locations, we conclude that
$\pair{\astore}{\aheap} \models \lnot\sees{\avariable}{\ameetvar{\avariable}{\avariablebis}{\avariableter}}{\{\avariablefour\}}$.
Lastly, let us take for example the sets of locations corresponding to the two paths%
\parfillskip=0pt
\end{minipage}
\vskip1pt
\noindent highlighted in yellow:
$\minpath{\astore(\avariable)}{\astore(\avariablefour)}{\aheap}$ and $\minpath{\astore(\avariablebis)}{\astore(\avariableter)}{\aheap}$.
The location $\astore(\avariablefour)$ does not belong to any of these sets.
As it is in $\domain{\aheap}$, we conclude that $\pair{\astore}{\aheap} \models \remgeq{\{\pair{\avariable}{\avariablefour},\pair{\avariablebis}{\avariableter}\}}{1}$.
\cut{
We write $\ameetset{\asetvar}$ to denote the set
$\set{ \ameetvar{\avariable}{\avariablebis}{\avariableter} \mid \avariable,\avariablebis,\avariableter \in \asetvar}$.
We denote with $\MEET \egdef \ameetset{\PVAR}$ the set of all meet-point variables.
Given a Boolean combination of core formulae $\aformula$, we denote with $\terms{\aformula}$ the set of terms appearing in $\aformula$.
We define the set of free variables $\freevars{\aformula}$ of a formula $\aformula$ as follows:
\begin{multicols}{2}
\begin{itemize}
\item $\freevars{\avariable \Ipto \avariablebis} \egdef \freevars{\avariable = \avariablebis} \egdef \{\avariable,\avariablebis\}$
\item $\freevars{\emp} = \emptyset$
\item $\freevars{\lnot \aformulabis} = \freevars{\aformulabis}$
\item $\freevars{\aformulabis_1 \land \aformulabis_2} = \freevars{\aformulabis_1} \cup \freevars{\aformulabis_2}$
\item $\freevars{\aformulabis_1 \separate \aformulabis_2} = \freevars{\aformulabis_1} \cup \freevars{\aformulabis_2}$
\item $\freevars{\inpath{\avariable}{\avariablebis}{\avariableter}\aformulabis} \egdef (\freevars{\aformulabis}\setminus\{\avariableter\}) \cup \{\avariable,\avariablebis\}$
\end{itemize}
\end{multicols}

 Given a Boolean combination of core formulae $\aformula$, we denote with $\terms{\aformula}$ the set of terms appearing in $\aformula$. This set is then recursively defined as
\begin{multicols}{2}
\begin{itemize}
\item $\terms{\aterm_1 = \aterm_2} \egdef \{\aterm_1,\aterm_2\}$
\item $\terms{\lnot \aformulabis} = \terms{\aformulabis}$
\item $\terms{\aformulabis_1 \land \aformulabis_2} = \terms{\aformulabis_1} \cup \terms{\aformulabis_2}$
\item $\terms{\seesgeq{\aterm_1}{\aterm_2}{\asetmeetvar}{\inbound}} \egdef \{\aterm_1,\aterm_2\} \cup \asetmeetvar$
\item $\terms{\remgeq{\asetpath}{\inbound}} \egdef \terms{\asetpath}$
\end{itemize}
\end{multicols}
where, for a given set of pairs of terms $\asetpath$, $\terms{\asetpath}$ is defined as $\pi_1(\asetpath)\cup\pi_2(\asetpath)$.

Given a term $\aterm$ we define the set of its characters $\chars{\aterm}$ as $\chars{\avariable} \egdef \{ \avariable\}$ and $\chars{\ameetvar{\avariable}{\avariablebis}{\avariableter}} \egdef \{\avariable,\avariablebis,\avariableter\}$.
The definition is trivially extended to set of terms $\asetmeetvar$ i.e. $\chars{\asetmeetvar} \egdef \bigcup_{\aterm \in \asetmeetvar} \chars{\aterm}$. Similarly, $\chars{\aformula} \egdef \chars{\terms{\aformula}}$ and $\chars{\asetpath} \egdef \chars{\terms{\asetpath}}$.
Moreover, the set of variables of a set of terms (resp. formula and set of pairs of terms) is defined as $\vars{\asetmeetvar} \egdef \asetmeetvar \cap \PVAR$ (resp. $\vars{\aformula} \egdef \vars{\terms{\aformula}}$ and $\vars{\asetpath} \egdef \vars{\terms{\asetpath}}$).
Lastly, the set of meet-points of a set of terms (resp. formula and set of pairs of terms) is defined as $\meets{\asetmeetvar} \egdef \asetmeetvar \cap \MEET$ (resp. $\meets{\aformula} \egdef \meets{\terms{\aformula}}$ and $\meets{\asetpath} \egdef \meets{\terms{\asetpath}}$).
}

\vspace{3pt}
\noindent
{\bf \textsf{Expressing core formulae in \intervalSL.}}
A crucial point 
for axiomatising 
\slSW
is that every core formula is a mere abbreviation for a formula of the logic.
This is the property that leads to an {\em internal} axiomatisation.
The same holds for \intervalSL as one can show that
every core formula  can be defined in \intervalSL and, in the forthcoming axiomatisation, should be considered as an abbreviation.
For example,
the formula $\seesgeq{\avariable}{\avariablebis}{\emptyset}{\inbound}$ can be shown equivalent to
$(\strict{\reachplus(\avariable,\avariablebis)} \land \sizegeq{\inbound}) \separate \true$,
where $\strict{\aformula}$ is a shortcut for $\aformula \land \lnot (\lnot\emp \separate \aformula)$ and states that $\aformula$ holds in the current model, say $\pair{\astore}{\aheap}$ but does not hold in any submodel
(i.e. in $\pair{\astore}{\aheap'}$ where $\aheap' \sqsubset \aheap$).
Similarly, $\avariable = \ameetvar{\avariablebis}{\avariableter}{\avariablefour}$
is equivalent to
\begin{nscenter}
$
\reach(\avariable,\avariablefour) \land
(\reach(\avariablebis,\avariable)\separate \reach(\avariableter,\avariable)) \land
(\reachplus(\avariable,\avariable) \implies (\reach(\avariablebis,\avariable) \separate \reachplus(\avariable,\avariable))),
$
\end{nscenter}
whereas $\ameetvar{\avariable}{\avariablebis}{\avariableter} {=} \ameetvar{\avariablefour}{\avariablefifth}{\avariablesix}$
is
$
\inpath{\avariable}{\avariableter}{\avariableoct} (\ameetvar{\avariable}{\avariablebis}{\avariableter} = \avariableoct \land \avariableoct = \ameetvar{\avariablefour}{\avariablefifth}{\avariablesix})$,
where $\avariableoct \not \in \set{\avariable,\avariablebis,\avariableter,\avariablefour,\avariablefifth,\avariablesix}$.

\begin{restatable}{lemma}{lemmaintcoreexpress}\label{lemma:coreIntCoreExpress}
Every core formula is logically equivalent to a formula of \intervalSL.
\end{restatable}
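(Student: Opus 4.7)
The plan is to proceed by case analysis on the form of the core formula, leveraging the three explicit translations already exhibited in the excerpt (for $\seesgeq{\avariable}{\avariablebis}{\emptyset}{\inbound}$, for $\avariable = \ameetvar{\avariablebis}{\avariableter}{\avariablefour}$, and for $\ameetvar{\cdot}{\cdot}{\cdot} = \ameetvar{\cdot}{\cdot}{\cdot}$) as building blocks, and showing how the remaining generalisations reduce to them. A recurring trick will be to use the guarded quantifier $\weirdexists$ to bind a fresh variable to a location that the semantics forces to coincide with a given meet-point term, thereby replacing meet-point terms by ordinary variables inside more complex core formulae.

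First I would treat equalities $\aterm_1 = \aterm_2$. The variable/variable case is literally in \intervalSL{}; the variable/meet-point case is the one already spelled out; the meet-point/meet-point case is reduced to the previous two cases by the formula of the excerpt, $\inpath{\avariable}{\avariableter}{\avariableoct}(\ameetvar{\avariable}{\avariablebis}{\avariableter} = \avariableoct \land \avariableoct = \ameetvar{\avariablefour}{\avariablefifth}{\avariablesix})$, whose correctness relies on the fact that a defined meet-point $\ameetvar{\avariable}{\avariablebis}{\avariableter}$ always lies in $\minpath{\astore(\avariable)}{\astore(\avariableter)}{\aheap}\cup\{\astore(\avariableter)\}$, so it is a legitimate witness for $\inpath{\avariable}{\avariableter}{\cdot}$.

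Next I would turn to $\seesgeq{\aterm_1}{\aterm_2}{\asetmeetvar}{\inbound+1}$. The base case with $\aterm_1, \aterm_2 \in \PVAR$ and $\asetmeetvar = \emptyset$ is the translation $(\strict{\reachplus(\avariable,\avariablebis)} \land \sizegeq{\inbound+1}) \separate \true$. To accommodate a nonempty $\asetmeetvar = \{\aterm'_1,\dots,\aterm'_k\}$, I would strengthen the bracketed formula to additionally forbid the interpretations of each $\aterm'_j$ from being internal vertices of the minimal path, by adding conjuncts of the form $\lnot \reachplus(\aterm_1, \aterm'_j)\lor \reach(\aterm_2, \aterm'_j)\lor \aterm'_j = \aterm_1$, expressible via the definability of $\reach$ and $\reachplus$ noted in the excerpt. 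Meet-point terms appearing as $\aterm_1$, $\aterm_2$, or inside $\asetmeetvar$ are removed by prefixing the formula with guarded quantifiers binding fresh variables, each equated to its corresponding meet-point term using the translation of the equality case; here one must also use $\reachplus(\aterm,\aterm)\vee\top$--style guardedness to ensure that, when the meet-point is defined, the corresponding $\inpath{}{}{}$ indeed has a suitable witness.

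Finally, the $\remgeq{\asetpath}{\inbound}$ case. The idea is to express, via separation, the existence of a partition of the heap into a subheap $\aheap_1$ that covers exactly the union $\bigcup_{(\aterm_1,\aterm_2)\in\asetpath}\minpath{\semantics{\aterm_1}}{\semantics{\aterm_2}}{\aheap}$ and a subheap $\aheap_2$ of size at least $\inbound$. I would define a path-covering formula $\pathOf(\aterm_1,\aterm_2)$ asserting that the current subheap is exactly a minimal path from $\aterm_1$ to $\aterm_2$ (obtained again by a $\strict{\cdot}$-style construction on $\reach$), and translate $\remgeq{\asetpath}{\inbound}$ as a formula stating that the heap decomposes as a subheap satisfying $\pathOf(\aterm_1,\aterm_2)\separate\true$ for every $(\aterm_1,\aterm_2)\in\asetpath$ (taken simultaneously via a single $\separate$ with a remainder of size $\geq\inbound$), meet-point terms again eliminated by guarded quantifiers. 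The main obstacle lies precisely here: because minimal paths may overlap, one cannot naively use $\bigast_{(\aterm_1,\aterm_2)\in\asetpath}\pathOf(\aterm_1,\aterm_2)$ as the ``path part''. The technical work is to construct a single formula that specifies a subheap whose domain is exactly the union of these minimal paths, using a combination of $\strict{\cdot}$ and reachability constraints to pin down this union from the outside rather than by additive composition of the individual paths.
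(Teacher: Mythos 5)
Your overall strategy (case analysis on the core formulae, the three translations from the paper as building blocks, guarded quantifiers to replace meet-point terms by fresh variables) is the same as the paper's, but one step is genuinely wrong and another is left unconstructed. The wrong step is your treatment of a non-empty $\asetmeetvar$ in the $\mathtt{sees}$ case: excluding a variable $\avariableter$ from the interior of the minimal path via the purely reachability-based conjunct $\lnot\reachplus(\avariable,\avariableter)\lor\reach(\avariablebis,\avariableter)\lor\avariableter=\avariable$ fails as soon as the path runs into a cycle. Take $\aheap_1=\{\astore(\avariable)\mapsto\alocation,\ \alocation\mapsto\astore(\avariablebis),\ \astore(\avariablebis)\mapsto\alocation',\ \alocation'\mapsto\alocation\}$ with $\astore(\avariableter)=\alocation$: then $\sees{\avariable}{\avariablebis}{\{\avariableter\}}$ is false ($\alocation$ is the intermediate location of the minimal path), yet $\reach(\avariablebis,\avariableter)$ holds through the cycle, so your conjunct, and hence your whole translation, is true. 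Moreover, comparing with $\aheap_2=\{\astore(\avariable)\mapsto\astore(\avariablebis),\ \astore(\avariablebis)\mapsto\alocation,\ \alocation\mapsto\astore(\avariablebis)\}$ (same store), both heaps satisfy exactly the same equality, $\reach$ and $\reachplus$ atoms over $\{\avariable,\avariablebis,\avariableter\}$, while $\sees{\avariable}{\avariablebis}{\{\avariableter\}}$ holds only in the second; so no Boolean combination of such atoms can repair the conjunct. The paper instead localises to a subheap: for a variable it uses $\avariableter=\avariable\lor\avariableter=\avariablebis\lor(\true\separate(\reachplus(\avariable,\avariablebis)\land\lnot\reachplus(\avariable,\avariableter)))$, and for a meet-point member of $\asetmeetvar$ it uses $\lnot\inpath{\avariable}{\avariablebis}{\avariablesept}\,\avariablesept=\ameetvar{\avariableter}{\avariablefour}{\avariablefifth}$; this subheap/guarded-quantifier confinement to the minimal path is exactly the idea your proposal lacks.

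For $\remgeq{\asetpath}{\inbound}$ you correctly identify the crux (a subheap whose domain is exactly the union of possibly overlapping minimal paths) but do not construct it, and this is where the paper does real work. Its translation first branches over which pairs of $\asetpath$ actually carry a path (conjoining $\sees{\aterm_1}{\aterm_2}{\emptyset}$ or its negation accordingly), then, after a case split on whether a meet-point degenerates to its source, names each remaining meet-point by a fresh variable with a guarded quantifier; this renaming is indispensable because the interpretation of a meet-point term need not survive the passage to a subheap, a point your sketch does not identify (your guardedness remark concerns only the existence of a witness). Only then does a single $\strict{\cdot}$ applied to the conjunction of all the selected $\sees{\cdot}{\cdot}{\emptyset}$ formulas, composed with $\separate\ \size\geq\inbound$, pin the path part to exactly the union of the minimal paths. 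So the plan is directionally the paper's, but the two load-bearing constructions are missing or incorrect as stated.
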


\LongVersionOnly{
Below, we provide the definitions for the different core formulae.
For each core formula $\aformulabis$, we define its extension $\extension{\aformulabis}$ in \intervalSL
that is logically equivalent. The notation $\extension{\aformulabis}$ is used for formal statements but by slight abuse,
we avoid it when there is no confusion.
As a rule of thumb, when core formulae
are used to define other core formulae, their expansion in \intervalSL has been defined above. For example,
equalities are defined firstly when at least one variable is involved and then we consider equalities
when two meet-point terms are involved.
A similar principle applies for other core formulae, and moreover, for the core formulae involving a finite set
of terms, our definitions are done inductively involving strictly smaller subsets. Let us start by the equalities.

\begin{itemize}
\itemsep 0 cm
\item $\avariable = \avariablebis$ belongs to \intervalSL whereas $\avariable = \ameetvar{\avariablebis}{\avariableter}{\avariablefour}$ and $\ameetvar{\avariablebis}{\avariableter}{\avariablefour} = \avariable$ are both equal to
\begin{nscenter}
$
\reach(\avariable,\avariablefour) \land
(\reach(\avariablebis,\avariable)\separate \reach(\avariableter,\avariable)) \land
(\reachplus(\avariable,\avariable) \implies (\reach(\avariablebis,\avariable) \separate \reachplus(\avariable,\avariable))).
$
\end{nscenter}
\item $\ameetvar{\avariable}{\avariablebis}{\avariableter} {=} \ameetvar{\avariablefour}{\avariablefifth}{\avariablesix}$
$\equivbydef$
$
\inpath{\avariable}{\avariableter}{\avariableoct} (\ameetvar{\avariable}{\avariablebis}{\avariableter} = \avariableoct \land \avariableoct = \ameetvar{\avariablefour}{\avariablefifth}{\avariablesix})$,
where $\avariableoct \not \in \set{\avariable,\avariablebis,\avariableter,\avariablefour,\avariablefifth,\avariablesix}$.
\end{itemize}

Now, let us define the core formulae with the predicate $\mathtt{sees}$ but only those involving as main arguments
program variables only. We use $\strict{\aformula}$ as a shortcut for $\aformula \land \lnot (\lnot\emp \separate \aformula)$.

\begin{itemize}
\item $\seesgeq{\avariable}{\avariablebis}{\emptyset}{\inbound}$ is equal to
$(\strict{\reachplus(\avariable,\avariablebis)} \land \sizegeq{\inbound}) \separate \true$.
\item $\seesgeq{\avariable}{\avariablebis}{\asetmeetvar \cup \{\avariableter\}}{\inbound}$ is equal to:
\begin{nscenter}
 $\seesgeq{\avariable}{\avariablebis}{\asetmeetvar}{\inbound} \land
\big(\avariableter = \avariable \lor \avariableter = \avariablebis \lor (\true \separate (\reachplus(\avariable,\avariablebis) \land \lnot \reachplus(\avariable,\avariableter)))\big).
$
\end{nscenter}
\item $\seesgeq{\avariable}{\avariablebis}{\asetmeetvar \cup \{\ameetvar{\avariableter}{\avariablefour}{\avariablefifth}\}}{\inbound}$
is equal to
\begin{nscenter}
$\seesgeq{\avariable}{\avariablebis}{\asetmeetvar}{\inbound} \land
 ( \ameetvar{\avariableter}{\avariablefour}{\avariablefifth} = \avariable \lor \ameetvar{\avariableter}{\avariablefour}{\avariablefifth} = \avariablebis \lor
  \lnot \inpath{\avariable}{\avariablebis}{\avariablesept} \ameetvar{\avariableter}{\avariablefour}{\avariablefifth} = \avariablesept
  )$
\end{nscenter}
  where $\avariablesept$ is a program variable that does not appear in $\set{\avariable,\avariablebis,\avariableter,\avariablefour,\avariablefifth} \cup\chars{\asetmeetvar}$.
\end{itemize}
Now, let us define the core formulae with the predicate $\mathtt{sees}$ involving at least one meet-point term.
They can be reduced easily to the previous ones by using the $\weirdexists$ quantifier.
\begin{itemize}
\item $\seesgeq{\avariable}{\ameetvar{\avariablebis}{\avariableter}{\avariablefour}}{\asetmeetvar}{\inbound}$ is equal to
\begin{nscenter}
$(
\avariablebis = \ameetvar{\avariablebis}{\avariableter}{\avariablefour} \land \seesgeq{\avariable}{\avariablebis}{\asetmeetvar}{\inbound}) \lor
 \inpath{\avariablebis}{\avariablefour}{\avariablesept} (\avariablesept = \ameetvar{\avariablebis}{\avariableter}{\avariablefour} \land \seesgeq{\avariable}{\avariablesept}{\asetmeetvar}{\inbound}),
$
\end{nscenter}
where $\avariablesept$ is a program variable that is not in $\set{\avariable,\avariablebis,\avariableter,\avariablefour} \cup \chars{\asetmeetvar}$.
It is worth noting that the atomic predicates involved in these expansions only
contains program variables.
\item Similarly, the formula $\seesgeq{\ameetvar{\avariablebis}{\avariableter}{\avariablefour}}{\avariable}{\asetmeetvar}{\inbound}$ is equal to
\begin{nscenter}
$(
\avariablebis = \ameetvar{\avariablebis}{\avariableter}{\avariablefour} \land \seesgeq{\avariablebis}{\avariable}{\asetmeetvar}{\inbound}) \lor \inpath{\avariablebis}{\avariablefour}{\avariablesept} (\avariablesept = \ameetvar{\avariablebis}{\avariableter}{\avariablefour} \land \seesgeq{\avariablesept}{\avariable}{\asetmeetvar}{\inbound})$,
\end{nscenter}
where $\avariablesept$ is a program variable that does not appear in $\set{\avariable,\avariablebis,\avariableter,\avariablefour} \cup \chars{\asetmeetvar}$.
\item Lastly, $\seesgeq{\ameetvar{\avariable}{\avariablebis}{\avariableter}}{\ameetvar{\avariablefour}{\avariablefifth}{\avariablesix}}{\asetmeetvar}{\inbound}$ is equal to
\begin{nscenter}
$
(\avariable = \ameetvar{\avariable}{\avariablebis}{\avariableter} \land \seesgeq{\avariable}{\ameetvar{\avariablefour}{\avariablefifth}{\avariablesix}}{\asetmeetvar}{\inbound}) \lor
\inpath{\avariable}{\avariableter}{\avariablesept} (\avariablesept = \ameetvar{\avariable}{\avariablebis}{\avariableter} \land
\seesgeq{\avariablesept}{\ameetvar{\avariablefour}{\avariablefifth}{\avariablesix}}{\asetmeetvar}{\inbound}),
$
\end{nscenter}
where $\avariablesept$ is a program variable that does not appear in $\set{\avariable,\avariablebis,\avariableter,\avariablefour,\avariablefifth,\avariablesix} \cup \chars{\asetmeetvar}$.
\end{itemize}
In order to define $\remgeq{\asetpath}{\inbound}$, we need first to identify the
pairs $\pair{\aterm_1}{\aterm_2} \in \asetpath$ such that $\sees{\aterm_1}{\aterm_2}{\emptyset}$ holds.
So, the first idea to express $\remgeq{\asetpath}{\inbound}$ is to identify the maximal subset $\asetpath' \subseteq \asetpath$
such that all the pairs $\pair{\aterm_1}{\aterm_2}$ in $\asetpath'$ satisfies $\sees{\aterm_1}{\aterm_2}{\emptyset}$
and to state that the heap truncated from the paths enforcing $\sees{\aterm_1}{\aterm_2}{\emptyset}$ has at least
$\inbound$ memory cells. This would work if all the terms were variables but with meet-point terms, taking a subheap may change
the existence of a location interpreting a meet-point term.
That is why, when a meet-point term $\ameetvar{\avariable}{\avariablebis}{\avariableter}$ is present in
$\asetpath'$, we first check whether it is equal to $\avariable$ or to $\avariableter$. If it is not the case,
that is $\reachplus(\avariable,\avariableter)$ holds, then we can use a guarded first-order quantification
of the form $\inpath{\avariable}{\avariableter}{\avariablesept}$ to capture the value
$\ameetvar{\avariable}{\avariablebis}{\avariableter}$ in the current heap, and then this value can be referred to
in any subheap thanks to the variable $\avariablesept$. Hence, in the definition of the extension of core formula
 $\remgeq{\asetpath}{\inbound}$ below, we perform a disjunction on the potential subsets
$\asetpath' \subseteq \asetpath$ and on whether if $\ameetvar{\avariable}{\avariablebis}{\avariableter}
\in \meets{\asetpath'}$, then $\avariable = \avariableter$ or not.
The formula $\remgeq{\asetpath}{\inbound}$ is equal to
\begin{nscenter}
$
\begin{aligned}[t]
\bigvee_{\mathclap{\substack{\asetpath' \subseteq \asetpath\\
  \meets{\asetpath'} = \{\ameetvar{\avariable_i}{\avariablebis_i}{\avariableter_i} \mid i \in \interval{1}{n}\}\\
  K \subseteq \interval{1}{n}
  }}}
  \Big(
  {\textstyle\bigwedge_{\pair{\aterm_1}{\aterm_2} \in \asetpath'}} \sees{\aterm_1}{\aterm_2}{\emptyset} \land
    \bigwedge_{\mathclap{\pair{\aterm_1}{\aterm_2} \in \asetpath \setminus \asetpath'}} \lnot \sees{\aterm_1}{\aterm_2}{\emptyset}
     \land
     \bigwedge_{\mathclap{i \in K}} \avariable_i \neq \avariableter_i
     \land
     \bigwedge_{\mathclap{i \in \interval{1}{n}\setminus K}} \avariable_i = \avariableter_i
     \land
     \\
\inpathindex{\avariable_i}{\avariableter_i}{\avariablesept_i}{i \in K}
        \big(\bigwedge_{\mathclap{i \in K}}
          \avariablesept_i = \ameetvar{\avariable_i}{\avariablebis_i}{\avariableter_i}
          \land
(\strict{
            \bigwedge_{
            \mathclap{\substack{
            \asetpath'' = \asetpath'[\ameetvar{\avariable_i}{\avariablebis_i}{\avariableter_i} \gets \avariablesept_i \mid i \in K][\ameetvar{\avariable_i}{\avariablebis_i}{\avariableter_i} \gets \avariable_i \mid i \in \interval{1}{n}\setminus K]\\
            \pair{\avariable}{\avariablebis} \in \asetpath''
            }}} \sees{\avariable}{\avariablebis}{\emptyset}
          } \separate \size \geq \inbound)
        \big)
     \Big),
\end{aligned}$
\end{nscenter}
where $\inpathindex{\avariable_i}{\avariableter_i}{\avariablesept_i}{i \in K}$ is a shortcut for a sequence of
$\card{K}$ applications of $\inpath{\avariable_i}{\avariableter_i}{\avariablesept_i}$ and every $\avariablesept_i$ are program variables not appearing in $\asetpath$.
}

%

\subsection{Axiomatisation of the logic of core formulae}\label{subsection:axiom2CoreFormulae}
As done in Section~\ref{section:PSL}, to axiomatise \intervalSL, we start by extending the axiom system for the propositional calculus in order to obtain the proof system $\coresys$
dedicated to Boolean combinations of core formulae.
The axioms, presented in
System~\ref{axioms2:coreaxioms}, are divided into axioms for equalities between terms, whose name is of the form $=^\corepedix_i$; axioms essentially about the  predicates $\mathtt{sees}$, whose name is of the form $\mathtt{s}^\corepedix_i$;
and axioms essentially about the predicates $\mathtt{rem}$, whose name is of the form $\mathtt{r}^\corepedix_i$.
In order to obtain this axiom system, the two main difficulties (which lead to very technical formulae) are given by the distribution of meet-points within the memory state and the axiomatisation of the predicates $\mathtt{sees}$.
For the former, it is important to distinguish between symmetric and asymmetric meet-points.
For this reason, System~\ref{axioms2:coreaxioms} uses the formulae $\defined{\ameetvar{\avariable}{\avariablebis}{\avariableter}} \egdef \ameetvar{\avariable}{\avariablebis}{\avariableter}{=} \ameetvar{\avariable}{\avariablebis}{\avariableter}$, which checks if a meet-point is defined,
$\symmetric{\ameetvar{\avariable}{\avariablebis}{\avariableter}} \egdef \ameetvar{\avariable}{\avariablebis}{\avariableter}{=}\ameetvar{\avariablebis}{\avariable}{\avariableter}$ for symmetric meet-points, and
$\asymmetric{\ameetvar{\avariable}{\avariablebis}{\avariableter}} \egdef \defined{\ameetvar{\avariable}{\avariablebis}{\avariableter}} \land \lnot \symmetric{\ameetvar{\avariable}{\avariablebis}{\avariableter}}$ for asymmetric ones.
The definition of these formulae, as well as the ones below, is extended on a variable $\avariable \in \PVAR$ simply by replacement with the meet-point
$\ameetvar{\avariable}{\avariable}{\avariable}$ (the two terms are always equivalent, see the axiom~\ref{core2Ax:Self}).
So, for example $\defined{\avariable}$ is defined as $\defined{\ameetvar{\avariable}{\avariable}{\avariable}}$.
For $\mathtt{sees}$ predicates, an important distinction is given by terms corresponding to different locations in the same tree (no cycle is involved) and terms that correspond to different locations in the same cycle. Hence,
we define the abbreviations $\before{\aterm_1}{\aterm_2}$ and $\sameloop{\aterm_1}{\aterm_2}$ with the following meanings:\\
{
\setlength{\tabcolsep}{3pt}
\begin{tabular}{lcl}
$\pair{\astore}{\aheap} \models \before{\aterm_1}{\aterm_2}$ & iff &
$\semantics{\aterm_1}_{\astore,\aheap} {\neq} \semantics{\aterm_2}_{\astore,\aheap}$ and,
there is a path from $\semantics{\aterm_1}_{\astore,\aheap}$ to $\semantics{\aterm_2}_{\astore,\aheap}$ s.t.\\
&&the only location on the path that may belong to a cycle is $\semantics{\aterm_2}_{\astore,\aheap}$.\\
$\pair{\astore}{\aheap} \models \sameloop{\aterm_1}{\aterm_2}$ & iff &
$\semantics{\aterm_1}_{\astore,\aheap} \neq \semantics{\aterm_2}_{\astore,\aheap}$
and
there is a cycle  with both
$\semantics{\aterm_1}_{\astore,\aheap}$ and $\semantics{\aterm_2}_{\astore,\aheap}$.
\end{tabular}}\\
They are defined as follows for meet-points (and extended for $\avariable \in \PVAR$ as shown for $\defined{\avariable}$)
\begin{itemize}[nosep,after=\vspace{1pt}]
\item The formulae $\before{\ameetvar{\avariable}{\avariablebis}{\avariableter}}{\ameetvar{\avariable}{\avariablefour}{\avariablefifth}}$ and $\before{\ameetvar{\avariablebis}{\avariable}{\avariableter}}{\ameetvar{\avariable}{\avariablefour}{\avariablefifth}}$ are both defined as\\
\hfill
$\symmetric{\ameetvar{\avariable}{\avariablebis}{\avariableter}} \land \defined{\ameetvar{\avariable}{\avariablebis}{\avariablefifth}}
  \land \defined{\ameetvar{\avariable}{\avariablefour}{\avariablefifth}}
  \land \ameetvar{\avariable}{\avariablebis}{\avariableter} \neq \ameetvar{\avariable}{\avariablefour}{\avariablefifth}
  \land \ameetvar{\avariable}{\avariablebis}{\avariableter} \neq \ameetvar{\avariablebis}{\avariablefour}{\avariablefifth};
$\hfill\,
\item $
\before{\ameetvar{\avariable}{\avariablebis}{\avariableter}}{\ameetvar{\avariablefour}{\avariablefifth}{\avariablesix}} \egdef
  \bigvee_{a \in \{\avariablefour,\avariablefifth\}} \before{\ameetvar{\avariable}{\avariablebis}{\avariableter}}{\ameetvar{\avariable}{a}{\avariablesix}} \land
    \ameetvar{\avariable}{a}{\avariablesix} = \ameetvar{\avariablefour}{\avariablefifth}{\avariablesix};
$
\item $\sameloop{\ameetvar{\avariable}{\avariablebis}{\avariableter}}{\ameetvar{\avariablefour}{\avariablefifth}{\avariablesix}} \egdef
\ameetvar{\avariable}{\avariablebis}{\avariableter} = \ameetvar{\avariable}{\avariablefour}{\avariablesix} \land \ameetvar{\avariablefour}{\avariablefifth}{\avariablesix} = \ameetvar{\avariablefour}{\avariable}{\avariableter} \land \asymmetric{\ameetvar{\avariable}{\avariablefour}{\avariableter}}$.
\end{itemize}
We write $\aterm \in \asetmeetvar$ (finite set of terms $\asetmeetvar$) to denote $\bigvee_{\aterm_2 \in \asetmeetvar} \aterm {=} \aterm_2$.
Like the axiom~\ref{coreAx:EqSub}, the axiom~\ref{core2Ax:Substitute} performs a substitution of every occurrence of $\aterm_1$ with $\aterm_2$.
We have to be careful here:  when substituting a variable $\avariable$ with a meet-point $\ameetvar{\avariablebis}{\avariableter}{\avariablefour}$, we only substitute the occurrences of $\avariable$ that are not inside meet-point terms.
For example, $\sees{\avariable}{\ameetvar{\avariable}{\avariable}{\avariable}}{\{\avariable,\ameetvar{\avariable}{\avariable}{\avariable}\}}\completesubstitute{ \ameetvar{\avariablebis}{\avariableter}{\avariablefour}}{\avariable}$ is equal to
$\sees{\ameetvar{\avariablebis}{\avariableter}{\avariablefour}}{\ameetvar{\avariable}{\avariable}{\avariable}}{\{\ameetvar{\avariablebis}{\avariableter}{\avariablefour},\ameetvar{\avariable}{\avariable}{\avariable}\}}$.
By way of example, let us explain why all the instances of the  axiom~\ref{core2Ax:EqDef} are valid.
Suppose
 $\pair{\astore}{\aheap} \models  \defined{\ameetvar{\avariable}{\avariablebis}{\avariableter}} \land
  \defined{\ameetvar{\avariable}{\avariablebis}{\avariablefour}}$.
Since $\semantics{\ameetvar{\avariable}{\avariablebis}{\avariableter}}_{\astore,\aheap}$ is defined  (say equal to $\alocation$),
there are $\alength_1,\alength_2 \geq 0$ such that
\begin{itemize}[nosep]
\item $\aheap^{\alength_1}(\astore(\avariable)) = \aheap^{\alength_2}(\astore(\avariablebis)) = \alocation$ and there is $\alength \geq 0$ such that  $\aheap^\alength(\alocation) = \astore(\avariableter)$;
\item for every $\alength_1' \in \interval{0}{\alength_1-1}$ and $\alength_2' \geq 0$,
$\aheap^{\alength_1'}(\astore(\avariable)) \neq \aheap^{\alength_2'}(\astore(\avariablebis))$.
\end{itemize}
Similarly, as $\semantics{\ameetvar{\avariable}{\avariablebis}{\avariablefour}}_{\astore,\aheap}$ is also
defined  (say equal to $\alocation'$) ,
there are also $\gamma_1,\gamma_2 \geq 0$ such that
\begin{itemize}[nosep]
\item $\aheap^{\gamma_1}(\astore(\avariable)) = \aheap^{\gamma_2}(\astore(\avariablebis)) = \alocation'$ and there is $\alength' \geq 0$ such that  $\aheap^{\alength'}(\alocation') = \astore(\avariablefour)$;
\item for every $\gamma_1' \in \interval{0}{\gamma_1-1}$ and $\gamma_2' \geq 0$,
$\aheap^{\gamma_1'}(\astore(\avariable)) \neq \aheap^{\gamma_2'}(\astore(\avariablebis))$.
\end{itemize}
Combining the two types of inequality constraints, we can conclude that $\alength_1 = \gamma_1$ and therefore $\alocation = \alocation'$, i.e.
$\pair{\astore}{\aheap} \models \ameetvar{\avariable}{\avariablebis}{\avariableter} = \ameetvar{\avariable}{\avariablebis}{\avariablefour}$.
Soundness of  $\coresys$ is certainly not immediate but this can be done similarly to the above developments for
the axiom~\ref{core2Ax:EqDef}.

\begin{restatable}{lemma}{lemmaaxiomstwocoresound}\label{lemma:axiomstwocoresound}
$\coresys$ is sound.
\end{restatable}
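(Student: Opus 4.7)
The plan is to verify that every axiom schema of $\coresys$ is valid, since modus ponens (the only inference rule inherited from the propositional base) trivially preserves validity. Fixing a memory state $\pair{\astore}{\aheap}$, each axiom will be checked by unfolding the semantics of the core formulae and of the syntactic shorthands $\defined{\cdot}$, $\symmetric{\cdot}$, $\asymmetric{\cdot}$, $\before{\cdot}{\cdot}$ and $\sameloop{\cdot}{\cdot}$. The authors have already exhibited the pattern in the case of axiom~\ref{core2Ax:EqDef}, where two putative meet-point locations are forced to coincide by the minimality conditions on $\alength_1$ vs.\ $\gamma_1$; the same style of argument will be replayed throughout.

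For the equality group ($=^\corepedix$-axioms), the proof hinges on a handful of recurring facts: (i) $\semantics{\ameetvar{\avariable}{\avariable}{\avariable}}_{\astore,\aheap} = \astore(\avariable)$ by taking $\alength_1 = \alength_2 = 0$, which covers the axiom justifying the extension of $\defined{\cdot}$, $\symmetric{\cdot}$, \ldots\ from meet-points to variables; (ii) once $\semantics{\aterm_1}_{\astore,\aheap} = \semantics{\aterm_2}_{\astore,\aheap}$ is established, a structural induction on core formulae yields the substitution axiom~\ref{core2Ax:Substitute}, with the syntactic restriction that variables inside meet-point subterms are not substituted faithfully reflecting the fact that $\semantics{\ameetvar{\cdot}{\cdot}{\cdot}}_{\astore,\aheap}$ depends on $\astore$ through its three arguments only; and (iii) the uniqueness of meet-points (as in the pattern above) handles the remaining axioms that combine several meet-point terms sharing some arguments, in particular those articulating $\symmetric{\cdot}$ versus $\asymmetric{\cdot}$.

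For the $\mathtt{sees}$ group, the reasoning is driven by a case analysis on whether the endpoints of the path lie in a cycle or in an acyclic portion of the heap, a distinction reflected in $\before{\cdot}{\cdot}$ and $\sameloop{\cdot}{\cdot}$. The validity of each such axiom is obtained by exhibiting the path whose existence is postulated by the right-hand side, either concatenating or restricting the paths given by the antecedent, and using the basic fact that a location can appear only finitely many times on a minimal path. The $\mathtt{rem}$ axioms, in turn, reduce to the set-theoretic definition of $\remset{\astore,\aheap}{\asetpath}$ together with a counting argument over pairwise disjointness (or inclusion) of the minimal paths in $\asetpath$, which is decided by the accompanying $\mathtt{sees}$ and equality literals.

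The main obstacle is not conceptual but combinatorial: the axioms involving meet-point terms split into many subcases according to whether the meet-points are defined, symmetric, asymmetric, coincide with their endpoints, or sit on a common cycle. None of these subcases is individually difficult, but assembling them uniformly across the schemas of System~\ref{axioms2:coreaxioms} requires some bookkeeping. Since the axioms have been designed precisely to match the semantic identities that characterise the core formulae of \intervalSL, each schema admits a short semantic justification in the style sketched for~\ref{core2Ax:EqDef}, and the lemma follows.
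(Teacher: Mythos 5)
Your proposal matches the paper's argument: soundness is established by checking semantically that every axiom schema of $\coresys$ is valid (modus ponens trivially preserving validity), replaying for each schema the style of reasoning the paper itself spells out for axiom~\ref{core2Ax:EqDef}. The case distinctions you identify (defined/symmetric/asymmetric meet-points, cycle versus acyclic portions for the $\mathtt{sees}$ axioms, counting over $\remset{\astore,\aheap}{\asetpath}$ for the $\mathtt{rem}$ axioms) are exactly the bookkeeping the paper defers, so this is essentially the same route.
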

\begin{figure*}
\vspace{-0.5cm}
\begin{axiombox}[label=axioms2:coreaxioms]{$\coresys$: Axioms for Boolean combinations of core formulae}
\begin{footnotesize}
\begin{enumerate}[align=left,leftmargin=*,labelsep=2pt]
\begin{minipage}{0.375\linewidth}
\item[\axlab{=^\corepedix}{core2Ax:Self}] $\avariable = \ameetvar{\avariable}{\avariable}{\avariable}$
\item[\axlab{=^\corepedix}{core2Ax:EqSymm}] $\aterm_1 = \aterm_2 \Rightarrow \aterm_2 = \aterm_1$
\item[\axlab{=^\corepedix}{core2Ax:Substitute}] $\aformula \land \aterm_1 = \aterm_2 \Rightarrow \aformula\completesubstitute{\aterm_2}{\aterm_1}$
\item[\axlab{=^\corepedix}{core2Ax:Self2}] $\defined{\ameetvar{\avariable}{\avariablebis}{\avariable}} \Rightarrow \avariable = \ameetvar{\avariable}{\avariablebis}{\avariable}$
  \item[\axlab{=^\corepedix}{core2Ax:Bothdef}] $\defined{\ameetvar{\avariable}{\avariablebis}{\avariableter}} \Rightarrow \defined{\ameetvar{\avariablebis}{\avariable}{\avariableter}}$
\end{minipage}
\begin{minipage}{0.65\linewidth}
  \item[\axlab{=^\corepedix}{core2Ax:EqDef}] $\defined{\ameetvar{\avariable}{\avariablebis}{\avariableter}} \land
  \defined{\ameetvar{\avariable}{\avariablebis}{\avariablefour}} \Rightarrow \ameetvar{\avariable}{\avariablebis}{\avariableter} = \ameetvar{\avariable}{\avariablebis}{\avariablefour}$
  \item[\axlab{=^\corepedix}{core2Ax:PedInv}] $\ameetvar{\avariable}{\avariablebis}{\avariableter} = \ameetvar{\avariablefour}{\avariablefifth}{\avariablesix} \Rightarrow \defined{\ameetvar{\avariable}{\avariablebis}{\avariablesix}}$
  \item[\axlab{=^\corepedix}{core2Ax:Pedix}] $\defined{\ameetvar{\avariable}{\avariablebis}{\avariableter}} \land \defined{\ameetvar{\avariableter}{\avariableter}{\avariablefifth}} \Rightarrow \defined{\ameetvar{\avariable}{\avariablebis}{\avariablefifth}}$
  \item[\axlab{=^\corepedix}{core2Ax:PedLinear}] $\defined{\ameetvar{\avariable}{\avariablebis}{\avariableter}} \land \defined{\ameetvar{\avariable}{\avariablefour}{\avariablefifth}} \Rightarrow \defined{\ameetvar{\avariableter}{\avariableter}{\avariablefifth}} \lor \defined{\ameetvar{\avariablefifth}{\avariablefifth}{\avariableter}}$
  \item[\axlab{=^\corepedix}{core2Ax:PedDefines}] $\defined{\ameetvar{\avariable}{\avariablebis}{\avariableter}} \land \defined{\ameetvar{\avariablefour}{\avariablefifth}{\avariableter}} \Rightarrow \defined{\ameetvar{\avariable}{\avariablefour}{\avariableter}}$
  \end{minipage}

  \item[\axlab{=^\corepedix}{core2Ax:Before}] $\symmetric{\ameetvar{\avariable}{\avariablebis}{\avariableter}} \land \defined{\ameetvar{\avariable}{\avariablefour}{\avariableter}} \land
  \ameetvar{\avariable}{\avariablefour}{\avariableter} \neq \ameetvar{\avariablebis}{\avariablefour}{\avariableter}
   \Rightarrow (
   \ameetvar{\avariable}{\avariablebis}{\avariableter} = \ameetvar{\avariable}{\avariablefour}{\avariableter}
   \lor
   \ameetvar{\avariable}{\avariablebis}{\avariableter} = \ameetvar{\avariablebis}{\avariablefour}{\avariableter}
   )$
  \item[\axlab{=^\corepedix}{core2Ax:Symmetric}] $\ameetvar{\avariable}{\avariablebis}{\avariableter} = \ameetvar{\avariablefour}{\avariablefifth}{\avariableter} \Rightarrow \symmetric{\ameetvar{\avariable}{\avariablefour}{\avariableter}} \land
  (
  \ameetvar{\avariable}{\avariablebis}{\avariableter} = \ameetvar{\avariable}{\avariablefour}{\avariableter} \lor
  \ameetvar{\avariable}{\avariablebis}{\avariableter} = \ameetvar{\avariable}{\avariablefifth}{\avariableter}
  )$
  \item[\axlab{=^\corepedix}{core2Ax:OneAsym}] $\symmetric{\ameetvar{\avariable}{\avariablebis}{\avariableter}} \land \asymmetric{\ameetvar{\avariable}{\avariablefour}{\avariablefifth}}
  \Rightarrow \ameetvar{\avariablebis}{\avariablefour}{\avariablefifth} = \ameetvar{\avariable}{\avariablefour}{\avariablefifth} \land \ameetvar{\avariablefour}{\avariablebis}{\avariablefifth} = \ameetvar{\avariablefour}{\avariable}{\avariablefifth}$
  \item[\axlab{=^\corepedix}{core2Ax:BothAsym}] $\asymmetric{\ameetvar{\avariable}{\avariablebis}{\avariableter}} \land \asymmetric{\ameetvar{\avariable}{\avariablefour}{\avariablefifth}} \Rightarrow \ameetvar{\avariable}{\avariablebis}{\avariableter} = \ameetvar{\avariable}{\avariablefour}{\avariablefifth}$
\end{enumerate}

\hrule
\begin{enumerate}[align=left,leftmargin=*,labelsep=2pt]
\addtocounter{enumi}{14}
\item[\axlab{\mathtt{s}^\corepedix}{core2Ax:SeesTermEq}]
$\aterm{=}\aterm' \land \sees{\aterm_1}{\aterm_2}{\{\aterm\}\cup\asetmeetvar} \Rightarrow \sees{\aterm_1}{\aterm_2}{\{\aterm,\aterm'\}\cup\asetmeetvar}$
\hfill
\axlab{\mathtt{s}^\corepedix}{core2Ax:SeesMono1} $\seesgeq{\aterm_1}{\aterm_2}{\asetmeetvar\cup\{\aterm\}}{\inbound} \Rightarrow \seesgeq{\aterm_1}{\aterm_2}{\asetmeetvar}{\inbound}$
\item[\axlab{\mathtt{s}^\corepedix}{core2Ax:SeesRef}]
$\seesgeq{\aterm_1}{\aterm_2}{\asetmeetvar}{\inbound} \Rightarrow \seesgeq{\aterm_1}{\aterm_2}{\asetmeetvar\cup\{\aterm_1,\aterm_2\}}{\inbound}$
\hfill
\axlab{\mathtt{s}^\corepedix}{core2Ax:SeesMono2}
$\seesgeq{\aterm_1}{\aterm_2}{\asetmeetvar}{\inbound{+}2} \Rightarrow\! \seesgeq{\aterm_1}{\aterm_2}{\asetmeetvar}{\inbound{+}1}$
\item[\axlab{\mathtt{s}^\corepedix}{core2Ax:SeesFunc}]
$\sees{\aterm_1}{\aterm_2}{\{\aterm_3\}} \land \sees{\aterm_1}{\aterm_3}{\{\aterm_2\}} \Rightarrow \aterm_2 = \aterm_3$
\hfill
\axlab{\mathtt{s}^\corepedix}{core2Ax:SeesDef}
$
\sees{\aterm_1}{\aterm_2}{\asetmeetvar} \implies \defined{\aterm_1} \land \defined{\aterm_2}
$
\item[\axlab{\mathtt{s}^\corepedix}{core2Ax:SeesEWCycl}]
$\sees{\aterm_1}{\aterm_1}{\emptyset} {\land} \lnot \sees{\aterm_1}{\aterm_1}{\{\aterm_2\}}  \iff \sameloop{\aterm_1}{\aterm_2}$
\hfill
\axlab{\mathtt{s}^\corepedix}{core2Ax:SeesBefore}
$\before{\aterm_1}{\aterm_2} \Rightarrow \sees{\aterm_1}{\aterm_2}{\emptyset}$
\item[\axlab{\mathtt{s}^\corepedix}{core2Ax:SeesMax}]
$\seesgeq{\aterm_1}{\aterm_2}{\asetmeetvar}{\inbound} \land \seesgeq{\aterm_1}{\aterm_2}{\asetmeetvar'}{\inbound'} \Rightarrow \seesgeq{\aterm_1}{\aterm_2}{\asetmeetvar\cup\asetmeetvar'}{\max(\inbound,\inbound')}$
\item[\axlab{\mathtt{s}^\corepedix}{core2Ax:SeesSum}]
$\seesgeq{\aterm_1}{\aterm_2}{\asetmeetvar}{\inbound_1}\!\land \seesgeq{\aterm_2}{\aterm_3}{\asetmeetvar}{\inbound_2} \land \aterm_2 {\not\in}\asetmeetvar \land \aterm_3 {\in} \asetmeetvar
  \Rightarrow \seesgeq{\aterm_1}{\aterm_3}{\asetmeetvar}{\inbound_1{+}\inbound_2} \land \lnot \sees{\aterm_1}{\aterm_3}{\{\aterm_2\}}$
\item[\axlab{\mathtt{s}^\corepedix}{core2Ax:SeesNegSum}]
$\seesgeq{\aterm_1}{\aterm_3}{\asetmeetvar}{\inbound} \land \lnot \sees{\aterm_1}{\aterm_3}{\{\aterm_2\}} \Rightarrow
{\bigvee_{\mathrlap{
  \raisebox{-0.2cm}[0pt][0pt]{$\kern-1em\scriptstyle{\inbound_1+\inbound_2=\max(2,\beta)-2}$}}}
}
(\seesgeq{\aterm_1}{\aterm_2}{\asetmeetvar}{\inbound_1{+}1} \land \seesgeq{\aterm_2}{\aterm_3}{\asetmeetvar}{\inbound_2{+}1})$
\item[\axlab{\mathtt{s}^\corepedix}{core2Ax:SeesSubscript}]
$\sees{\ameetvar{\avariable}{\avariablebis}{\avariableter}}{\ameetvar{\avariablefour}{\avariablefifth}{\avariablesix}}{\asetmeetvar} \implies \defined{\ameetvar{\avariable}{\avariablebis}{\avariablesix}}$
\item[\axlab{\mathtt{s}^\corepedix}{core2Ax:SeesToLoop}]
$
\sees{\ameetvar{\avariable}{\avariablebis}{\avariableter}}{\ameetvar{\avariablefour}{\avariablefifth}{\avariablesix}}{\asetmeetvar} \land \asymmetric{\ameetvar{\avariable}{\avariablefour}{\avariablesix}}
\implies \ameetvar{\avariablefour}{\avariablefifth}{\avariablesix} = \ameetvar{\avariablefour}{\avariable}{\avariablesix}
$
\item[\axlab{\mathtt{s}^\corepedix}{core2Ax:SeesToBefore}]
$\sees{\ameetvar{\avariable}{\avariablebis}{\avariableter}}{\ameetvar{\avariablefour}{\avariablefifth}{\avariablesix}}{\asetmeetvar} \land \symmetric{\ameetvar{\avariable}{\avariablefour}{\avariablesix}}
\land \ameetvar{\avariable}{\avariablebis}{\avariableter} \neq \ameetvar{\avariablefour}{\avariablefifth}{\avariablesix}
\Rightarrow \before{\ameetvar{\avariable}{\avariablebis}{\avariableter}}{\ameetvar{\avariablefour}{\avariablefifth}{\avariablesix}}$
\item[\axlab{\mathtt{s}^\corepedix}{core2Ax:SeesElsewhere}]
$\before{\aterm_1}{\aterm_2} \land \lnot \sees{\aterm_1}{\aterm_2}{\{\aterm_3\}} \implies \lnot \sees{\aterm_2}{\aterm_3}{\emptyset} \land \before{\aterm_1}{\aterm_3}$
\item[\axlab{\mathtt{s}^\corepedix}{core2Ax:SeesLoopOrder}]
$\sameloop{\aterm_1}{\aterm_2} \land \sameloop{\aterm_2}{\aterm_3} \land \aterm_1\neq\aterm_3 \Rightarrow
(\sees{\aterm_1}{\aterm_3}{\{\aterm_2\}} \iff \lnot \sees{\aterm_3}{\aterm_1}{\{\aterm_2\}})
$
\end{enumerate}
\hrule
\begin{enumerate}[align=left,leftmargin=*,labelsep=2pt]
\addtocounter{enumi}{30}
\begin{minipage}{0.37\linewidth}
\item[\axlab{\mathtt{r}^\corepedix}{core2Ax:RemPos}] $\remgeq{\asetpath}{0}$
\item[\axlab{\mathtt{r}^\corepedix}{core2Ax:RemMono1}] $\remgeq{\asetpath}{\inbound+1} \implies \remgeq{\asetpath}{\inbound}$
\item[\axlab{\mathtt{r}^\corepedix}{core2Ax:RemMono2}] $\remgeq{\{\pair{\aterm_1}{\aterm_2}\}\cup\asetpath}{\inbound} \implies \remgeq{\asetpath}{\inbound}$
\end{minipage}%
\begin{minipage}{0.63\linewidth}
\item[\axlab{\mathtt{r}^\corepedix}{core2Ax:RemTEq1}] $\aterm_1 = \aterm_2 \land \remgeq{\{\pair{\aterm_1}{\aterm_3}\} \cup \asetpath}{\inbound} \implies \remgeq{\{\pair{\aterm_1}{\aterm_3},\pair{\aterm_2}{\aterm_3}\} \cup \asetpath}{\inbound}$
\item[\axlab{\mathtt{r}^\corepedix}{core2Ax:RemTEq2}] $\aterm_1 = \aterm_2 \land \remgeq{\{\pair{\aterm_3}{\aterm_1}\} \cup \asetpath}{\inbound} \implies \remgeq{\{\pair{\aterm_3}{\aterm_1},\pair{\aterm_3}{\aterm_2}\} \cup \asetpath}{\inbound}$
\item[\axlab{\mathtt{r}^\corepedix}{core2Ax:RemNotSees}] $\lnot \seesgeq{\aterm_1}{\aterm_2}{\emptyset}{\beta_2{+}1} \land \remgeq{\asetpath}{\inbound_1}
  \implies
    \remgeq{\asetpath \cup \{\pair{\aterm_1}{\aterm_2}\}}{\inbound_1{\dotminus}\inbound_2}$
\end{minipage}
\item[\axlab{\mathtt{r}^\corepedix}{core2Ax:RemBetween}]
  $
    \sees{\aterm_1}{\aterm_2}{\asetmeetvar} \land \lnot \sees{\aterm_1}{\aterm_2}{\{\aterm_3\}} \land
    \remgeq{\{\pair{\aterm_1}{\aterm_3},\pair{\aterm_3}{\aterm_2}\}\cup\asetpath}{\inbound}
    \implies \remgeq{\{\pair{\aterm_1}{\aterm_2}\}\cup\asetpath}{\inbound}
  $
\item[\axlab{\mathtt{r}^\corepedix}{core2Ax:RemBetweenTwo}]
  $
    \sees{\aterm_1}{\aterm_2}{\asetmeetvar} \land \lnot \sees{\aterm_1}{\aterm_2}{\{\aterm_3\}} \land
    \remgeq{\{\pair{\aterm_1}{\aterm_2}\}\cup\asetpath}{\inbound} \implies \remgeq{\{\pair{\aterm_1}{\aterm_2},\pair{\aterm_1}{\aterm_3},\pair{\aterm_3}{\aterm_2}\}\cup\asetpath}{\inbound}
  $

\item[\axlab{\mathtt{r}^\corepedix}{core2Ax:RemShort}]$\!\!
\begin{aligned}[t]
&\big(\seesgeq{\aterm_1}{\aterm_2}{\emptyset}{\beta_2}
    \land
      \textstyle\bigwedge_{
      \scriptstyle{\pair{\aterm_3}{\aterm_4} \in \asetpath}}
      (\sees{\aterm_3}{\aterm_4}{\emptyset} \implies \sees{\aterm_1}{\aterm_2}{\{\aterm_3,\aterm_4\}} \land \sees{\aterm_3}{\aterm_4}{\{\aterm_1,\aterm_2\}} \land \aterm_3 {\neq} \aterm_1 )\\
& \land \remgeq{\asetpath \cup \{\pair{\aterm_1}{\aterm_2}\}}{\inbound_1} \big) \implies
    \remgeq{\asetpath}{\inbound_1{+}\inbound_2}
\end{aligned}$
\end{enumerate}
\end{footnotesize}
\end{axiombox}
\vspace{-5pt}
\begin{axiombox}[label=axioms3:Exists]{Axioms and inference rule for the guarded quantification $\weirdexists$}
\begin{footnotesize}
\begin{enumerate}[align=left,leftmargin=*,labelsep=2pt]
\setcounter{enumi}{39}
\begin{minipage}{0.55\linewidth}
\item[\axlab{\exists}{existsAx:Renaming}] $\inpath{\avariable}{\avariablebis}{\avariableter} \aformula \implies \inpath{\avariable}{\avariablebis}{\avariablefour} (\aformula\completesubstitute{\avariablefour}{\avariableter}) \assuming{\avariablefour \not\in \chars{\aformula}}$
\item[\axlab{\exists}{existsAx:Conjunction}] $\inpath{\avariable}{\avariablebis}{\avariableter} (\aformula \land \aformulabis)
\iff (\inpath{\avariable}{\avariablebis}{\avariableter} \aformula) \land \aformulabis \assuming{\avariableter\not\in\chars{\aformulabis}}$
\item[\axlab{\exists}{existsAx:Disjunction}] $\inpath{\avariable}{\avariablebis}{\avariableter} (\aformula_1 \lor \aformula_2) \iff (\inpath{\avariable}{\avariablebis}{\avariableter} \aformula_1) \lor (\inpath{\avariable}{\avariablebis}{\avariableter} \aformula_2)$
\end{minipage}%
\begin{minipage}{0.45\linewidth}
\hfill
\begin{itemize*}[align=left,leftmargin=*]
\item[\rulelab{\textbf{$\exists$-Intro}}{rule:wierdexistsinference}]
 $\inference{\aformula \implies \aformulabis}{\inpath{\avariable}{\avariablebis}{\avariableter} \aformula \implies \inpath{\avariable}{\avariablebis}{\avariableter} \aformulabis}{}
$
\end{itemize*}
\end{minipage}
\item[\axlab{\exists}{existsAx:ZeroStep}] $\sees{\avariable}{\avariablebis}{\emptyset} \implies \inpath{\avariable}{\avariablebis}{\avariableter}\ \avariableter = \avariable \assuming{\avariableter\not\in\{\avariable,\avariablebis\}}$
\hfill\axlab{\exists}{existsAx:Bottom}  $\lnot\inpath{\avariable}{\avariablebis}{\avariableter} \false$
\item[\axlab{\exists}{existsAx:AtLeastOneStep}] $\sees{\avariable}{\avariablebis}{\emptyset} \land \sees{\avariable}{\aterm_1}{\{\avariablebis\}}
\implies
\inpath{\avariable}{\avariablebis}{\avariableter}\ \avariableter = \aterm_1 \assuming{\avariableter\not\in\chars{\{\avariable,\avariablebis,\aterm_1\}}}$
\item[\axlab{\exists}{existsAx:Split}]
$
\begin{aligned}[t]
&({\avariable} {=} {\aterm_1} \lor \sees{\avariable}{\aterm_1}{\asetmeetvar'} ) \land \seesgeq{\aterm_1}{\aterm_2}{\asetmeetvar}{\inbound_1{+}\inbound_2} \land
({\aterm_2}{=}{\avariablebis} \lor \sees{\aterm_2}{\avariablebis}{\asetmeetvar''})\land(\avariablebis {=} \aterm_1 \implies \avariable{=} \avariablebis)\\
&\implies
\inpath{\avariable}{\avariablebis}{\avariableter} (\sees{\aterm_1}{\avariableter}{\asetmeetvar} \sim_1 \inbound_1
\land
\sees{\avariableter}{\aterm_2}{\asetmeetvar} \sim_2 \inbound_2 \land \avariableter\not\in \{\aterm_1, \aterm_2\})\\
&\blacktriangleleft\kern-0.45pc\left[\{\avariable,\avariablebis,\aterm_1,\aterm_2\} \subseteq \asetmeetvar,\asetmeetvar',\asetmeetvar'',\ \avariableter\not\in\chars{\asetmeetvar},\ \inbound_1,\inbound_2 \in \Nat^+\!,\ \geq\ \in \{\sim_1,\sim_2\} \subseteq \{\geq,=\}\right]
\end{aligned}$
\item[\axlab{\exists}{existsAx:SeesSem}]
$\lnot \inpath{\avariable}{\avariablebis}{\avariableter}((\avariable \neq \avariableter \land \avariablebis \neq \avariableter \land \sees{\avariable}{\avariablebis}{\{\avariable,\avariableter,\avariablebis\}})
\lor \lnot \sees{\avariable}{\avariablebis}{\emptyset}) \assuming{\avariableter \not\in \set{\avariable,\avariablebis}}$
\end{enumerate}
\margindown
\end{footnotesize}
\end{axiombox}
\vspace{-5pt}
\begin{axiombox}[label=axioms3:Star]{Axioms and inference rule for the separating conjunction}
\begin{footnotesize}
\begin{enumerate}[align=left,leftmargin=*,labelsep=2pt]
\addtocounter{enumi}{47}
\begin{minipage}{0.62\linewidth}
\item[\axlab{\separate}{starAx2:StarElim}]
$
\begin{aligned}[t]
&\charsymbform(\asms_1) \separate \charsymbform(\asms_2) \iff {\textstyle\bigvee_{\asms\ \text{s.t.}\
\symbunion{\asms_1}{\asms_2}{\asms}}}
{\charsymbform(\asms)}\\
&\assuming{\asms_1, \asms_2\ \text{resp. over}\
\pair{\asetvar}{\bound_1}\ \text{and}\ \pair{\asetvar}{\bound_2}}
\end{aligned}
$
\end{minipage}%
\begin{minipage}{0.38\linewidth}
\begin{itemize}[align=left,leftmargin=*]
\item[\rulelab{\textbf{$\separate$-Intro}}{rule:star2inference}]
$\inference{\aformula \implies \aformulater}{\aformula \separate \aformulabis \implies \aformulater \separate \aformulabis }{}$
\end{itemize}
\end{minipage}
\item[\axlab{\separate}{starAx2:DistrOr}] $(\aformula \lor \aformulabis) \separate \aformulater \implies (\aformula \separate \aformulater) \lor (\aformulabis \separate \aformulater)$
\hfill \axlab{\separate}{starAx2:StarCommute} $(\aformula \separate \aformulabis) \iff (\aformulabis \separate \aformula)$
\hfill \axlab{\separate}{starAx2:False} $(\bot \separate \aformula) \iff \bot$
\end{enumerate}
\margindown
\end{footnotesize}
\end{axiombox}
\end{figure*}

As done in Section~\ref{section:PSL}, in order to establish that $\coresys$ is complete, we first show its completeness with respect to \emph{core types}, where $\coretype{\asetvar}{\bound}$ is here defined as the set of formulae
$
\scaledformulasubset{
  \aformula \in \conjcomb{\coreformulae{\asetvar}{\bound}}
}{
  \bmat[
    \forall\aformulabis{\in} \coreformulae{\asetvar}{\bound}, \orliterals{\aformulabis}{\lnot \aformulabis} \inside \aformula, \text{ and } (\aformulabis \land \lnot \aformulabis) \not\inside \aformula
  ]
}{0.9}{1}$.

\begin{restatable}{lemma}{lemmaaxiomtwoRCct}\label{lemma:axiomtwoRCct}
Let $\aformula \in \coretype{\asetvar}{\bound}$.
We have $\lnot\aformula$ is valid iff $\vdash_{\coresys} \neg \aformula$. If $\vdash_{\coresys} \neg \aformula$ is provable then it has a proof
where all derivation steps only have formulae from $\boolcomb{\coreformulae{\asetvar}{\bound}}$.
\end{restatable}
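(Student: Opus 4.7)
The plan is to prove the nontrivial (left-to-right) direction by contraposition: assuming $\coresys \nvdash \neg\aformula$ for a core type $\aformula \in \coretype{\asetvar}{\bound}$, I would build a memory state $\pair{\astore}{\aheap}$ such that $\pair{\astore}{\aheap} \models \aformula$, which witnesses that $\neg\aformula$ is not valid. Soundness (the reverse direction) is already Lemma~\ref{lemma:axiomstwocoresound}. The final clause about derivations staying inside $\boolcomb{\coreformulae{\asetvar}{\bound}}$ follows from the shape of the axiom schemata and the rule~\ref{rule:wierdexistsinference}-free, $\separate$-free structure of $\coresys$: every axiom instance and every conclusion of modus ponens that occurs in a natural witness of $\vdash_{\coresys}\neg\aformula$ can be produced with intermediate formulae drawn from $\coreformulae{\asetvar}{\bound}$ by carefully instantiating meta-variables only with terms in $\atermset{\asetvar}$ and numeric parameters in $\interval{0}{\bound}$.

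First I would use the equality axioms (\ref{core2Ax:Self}, \ref{core2Ax:EqSymm}, \ref{core2Ax:Substitute}, \ref{core2Ax:EqDef}, \ref{core2Ax:Bothdef}, \ref{core2Ax:PedInv}) together with consistency of $\aformula$ to extract an equivalence relation $\approx$ on a subset of $\atermset{\asetvar}$ consisting of those terms declared defined by $\aformula$ (via $\defined{\cdot}$ literals). Distinct $\approx$-classes will be realised by distinct locations. The meet-point axioms \ref{core2Ax:Pedix}-\ref{core2Ax:BothAsym} then partition these classes into a ``tree-with-lollipops'' skeleton: the $\symmetric{\cdot}$/$\asymmetric{\cdot}$ status of each meet-point, together with \ref{core2Ax:Before} and \ref{core2Ax:Symmetric}, dictates which classes sit on the tree part and which sit on cycles. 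Next, the positive $\seesgeq{\aterm_1}{\aterm_2}{\asetmeetvar}{\inbound}$ literals of $\aformula$ tell me which classes are linked by heap paths, while the $\asetmeetvar$ guards and axioms \ref{core2Ax:SeesFunc}, \ref{core2Ax:SeesMax}, \ref{core2Ax:SeesSum}, \ref{core2Ax:SeesNegSum} impose on those paths the correct lengths and the correct intermediate classes. I would then materialise the heap $\aheap_0$ by inserting, for each sees literal, the prescribed number of fresh intermediate cells between the canonical locations for $\aterm_1$ and $\aterm_2$. Finally, each $\remgeq{\asetpath}{\inbound}$ literal is honoured by padding $\aheap_0$ with $\inbound$ further disjoint cells (small cycles) outside all the pinned paths; the axioms~\ref{core2Ax:RemMono1}--\ref{core2Ax:RemShort} guarantee that the required counts are jointly realisable.

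The hard part will be the second step: proving that the local data read from $\aformula$ assembles into a single well-defined partial function on locations. In particular, one must rule out the pathological configurations where two sees-literals force incompatible orderings on a common cycle (use \ref{core2Ax:SeesLoopOrder} together with $\sameloop{\cdot}{\cdot}$), where a meet-point on a tree arc is also claimed to lie in a cycle (use \ref{core2Ax:SeesToLoop} and \ref{core2Ax:SeesToBefore}), or where a $\before{\cdot}{\cdot}$ forces a nested chain incompatible with a negative sees literal (use \ref{core2Ax:SeesElsewhere} combined with \ref{core2Ax:PedLinear} and \ref{core2Ax:PedDefines}). Symmetrically, the joint realisability of the $\remgeq{\cdot}{\cdot}$ counts relies on \ref{core2Ax:RemBetween}, \ref{core2Ax:RemBetweenTwo} and \ref{core2Ax:RemShort} to show that a cell counted by one rem-literal cannot be inadvertently recounted by another: the axioms let me push every extraneous cell outside of all mentioned minimal paths. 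Once all these consistency facts are derived \emph{within} $\coresys$ from $\aformula$ (so that any failure would exhibit $\vdash_{\coresys}\neg\aformula$), the memory state built above satisfies $\aformula$, completing the contrapositive. The sub-fragment clause is preserved throughout because every auxiliary formula introduced in these derivations is, by construction, a literal over $\coreformulae{\asetvar}{\bound}$ or a Boolean combination thereof.
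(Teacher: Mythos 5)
Your overall strategy --- contraposition, reading off from a $\coresys$-consistent core type an equivalence relation on the defined terms, a path skeleton with lengths and guards, and garbage counts, then realising these by a concrete memory state, with every failure of the construction converted into a refutation of $\aformula$ --- is the natural route for this lemma and presumably the one the paper follows. But as written the proposal has a genuine gap: the entire mathematical content of the statement is that the axioms of $\coresys$ are \emph{enough}, i.e.\ that every obstruction to assembling the literals of a core type into a single well-defined heap is provably refutable, and you only name plausible axioms for a handful of anticipated pathologies (incompatible cycle orderings, tree-versus-cycle clashes, recounted garbage) without any argument that this case analysis is exhaustive. Nothing in the proposal shows these are the only ways a core type can be unsatisfiable. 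Symmetrically, you never verify that the constructed memory state satisfies \emph{all} literals of $\aformula$: a core type decides every formula of $\coreformulae{\asetvar}{\bound}$, so besides the positive literals you must validate every negative literal $\lnot\seesgeq{\aterm_1}{\aterm_2}{\asetmeetvar}{\inbound}$ (for every guard set $\asetmeetvar$) and every $\lnot\remgeq{\asetpath}{\inbound}$ (for every set of pairs $\asetpath$); the fresh intermediate cells and padding cycles you insert can create new meet-points, change which location a meet-point term denotes, or inflate a rem-count relative to some $\asetpath$, and ruling this out is exactly the delicate verification that is missing.

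The final clause of the lemma is also not established by your appeal to ``the shape of the axiom schemata''. Even when instantiated with terms from $\atermset{\asetvar}$ and numeric parameters in $\interval{0}{\bound}$, several schemata leave the fragment: \ref{core2Ax:SeesSum} and \ref{core2Ax:SeesMax} produce $\seesgeq{\aterm_1}{\aterm_3}{\asetmeetvar}{\inbound_1{+}\inbound_2}$ where $\inbound_1{+}\inbound_2$ may exceed $\bound$, and \ref{core2Ax:RemShort} behaves analogously for rem-counts. Hence membership of all derivation steps in $\boolcomb{\coreformulae{\asetvar}{\bound}}$ must be read off from the explicit refutations built in the failure cases of the model construction (so that validity yields a derivation that by construction stays in the fragment, and the clause for arbitrary derivations then follows via soundness); since those refutations are precisely what your proposal leaves unspecified, this clause too remains unproven.
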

%
%
%
%
%
\LongVersionOnly{
Indeed, the Core Type Lemma (Lemma~\ref{prop:corePSLone}) for \slSW still holds for the new notion of core formulae and core types (the proof is the same).
}
\noindent
Then, the proof of completeness of $\coresys$ follows with the same arguments used for Theorem~\ref{theo:corePSLcompl}.
\begin{restatable}{theorem}{theocoretwoPSLcompl}\label{theo:core2PSLcompl}
A Boolean combination of core formulae $\aformula$ is valid iff $\prove_{\coresys} \aformula$.
\end{restatable}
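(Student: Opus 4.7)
The plan is to mirror the proof of Theorem~\ref{theo:corePSLcompl} from Section~\ref{section:PSL}, pivoting on core types and Lemma~\ref{lemma:axiomtwoRCct}. Soundness is the easy direction and follows immediately from Lemma~\ref{lemma:axiomstwocoresound} together with the soundness of modus ponens and of the propositional calculus axioms: every rule instance preserves validity, hence every derivable formula is valid.

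For completeness, I would argue as follows. Fix a valid $\aformula \in \boolcomb{\coreformulae{\asetvar}{\bound}}$ where $\asetvar$ and $\bound$ are chosen large enough to accommodate all terms and all numerical bounds occurring in $\aformula$. By purely propositional reasoning (which is available in $\coresys$ as it extends the axiomatisation of propositional calculus), $\lnot \aformula$ is provably equivalent to a disjunction $\bigvee_{i\in I} \achi_i$, where each $\achi_i$ is a conjunction of literals over $\coreformulae{\asetvar}{\bound}$. Conjuncts can then be completed into full core types: for every $\aformulabis \in \coreformulae{\asetvar}{\bound}$ that is not already decided by $\achi_i$, we use the tautology $\aformulabis \lor \lnot \aformulabis$ to split $\achi_i$ into $(\achi_i \land \aformulabis) \lor (\achi_i \land \lnot \aformulabis)$, and we drop any disjunct containing a literal and its negation using $(\aformulabis \land \lnot \aformulabis) \Rightarrow \bot$. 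The result is a provable equivalence $\lnot \aformula \iff \bigvee_{j\in J} \aformulater_j$ with each $\aformulater_j \in \coretype{\asetvar}{\bound}$.

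Since $\aformula$ is valid, each core type $\aformulater_j$ is unsatisfiable. By Lemma~\ref{lemma:axiomtwoRCct}, $\prove_{\coresys} \lnot \aformulater_j$ for every $j \in J$. Combining these with the provable equivalence above and classical propositional reasoning (in particular, from $\vdash \lnot \aformulater_j$ for all $j\in J$ we derive $\vdash \lnot \bigvee_{j \in J} \aformulater_j$, hence $\vdash \lnot\lnot\aformula$, and then $\vdash \aformula$ by double negation elimination) yields $\prove_{\coresys} \aformula$.

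The step I expect to be essentially trivial here is the reduction to core types: it is a routine disjunctive-normal-form argument combined with exhaustive case splitting on the finitely many atoms in $\coreformulae{\asetvar}{\bound}$, exactly as in the proof of Theorem~\ref{theo:corePSLcompl}. All of the real conceptual work has been pushed into Lemma~\ref{lemma:axiomtwoRCct}, whose proof must verify that the rich axiomatisation of equalities between terms, of the $\mathtt{sees}$ predicates (including the subtle interaction between symmetric and asymmetric meet-points), and of the $\mathtt{rem}$ predicates, is strong enough to refute every unsatisfiable complete type. Thus, modulo Lemma~\ref{lemma:axiomtwoRCct}, the present theorem is a direct consequence of propositional completeness; the main obstacle genuinely lies one level below, in establishing that lemma.
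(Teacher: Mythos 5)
Your proposal is correct and takes essentially the same route as the paper: the paper also establishes soundness via Lemma~\ref{lemma:axiomstwocoresound} and obtains completeness by provably rewriting the formula (via purely propositional reasoning) into a disjunction of core types from $\coretype{\asetvar}{\bound}$ and then invoking Lemma~\ref{lemma:axiomtwoRCct}, explicitly mirroring the argument of Theorem~\ref{theo:corePSLcompl}. The only conceptual work is indeed delegated to Lemma~\ref{lemma:axiomtwoRCct}, as you note.
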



\subsection{Constructive elimination of $\exists{:}{\rightsquigarrow}$}\label{section-exists-elimination}
We write $\coresys(\weirdexists)$ to denote the system $\coresys$ augmented by the axioms and the inference
rule from System~\ref{axioms3:Exists}.
In System~\ref{axioms3:Exists},
given an arbitrary object $\mathtt{O}$ (this can be a term, a set of terms, a formula etc.), we write $\chars{\mathtt{O}}$ to denote the set of program
variables occurring in $\mathtt{O}$. For instance,
$\chars{\ameetvar{\avariable}{\avariablebis}{\avariableter}} = \{\avariable,\avariablebis,\avariableter\}$.
Axioms from~\ref{existsAx:Renaming} to~\ref{existsAx:Disjunction} and the introduction rule are classical tautologies of 
first-order quantification,
whereas the other axioms characterise the peculiar semantics of $\weirdexists$.
By way of example, let us explain why the axiom~\ref{existsAx:AtLeastOneStep}, equal to $\sees{\avariable}{\avariablebis}{\emptyset} 
\land \sees{\avariable}{\aterm_1}{\{\avariablebis\}} \implies
\inpath{\avariable}{\avariablebis}{\avariableter}\ \avariableter = \aterm_1$ ($\avariableter\not\in\chars{\{\avariable,\avariablebis,\aterm_1\}}$) 
is sound. Suppose $\pair{\astore}{\aheap} \models \sees{\avariable}{\avariablebis}{\emptyset} \land \sees{\avariable}{\aterm_1}{\{\avariablebis\}}$.
By the semantics of core formulae, we have 
$\emptyset \neq \minpath{\astore(\avariable)}{\semantics{\aterm_1}_{\astore,\aheap}}{\aheap} \subseteq 
\minpath{\astore(\avariable)}{\astore(\avariablebis)}{\aheap}$ and therefore
$\semantics{\aterm_1}_{\astore,\aheap}$ is defined.
Given $\avariableter \not \in\chars{\{\avariable,\avariablebis,\aterm_1\}}$, we have 
$\pair{\astore[\avariableter \gets \semantics{\aterm_1}_{\astore,\aheap}]}{\aheap} \models \avariableter = \aterm_1$.
This holds because  $\avariableter \not \in \chars{\aterm_1}$ as we want to guarantee  
$\semantics{\aterm_1}_{\astore,\aheap} = \semantics{\aterm_1}_{\astore[\avariableter \gets \semantics{\aterm_1}_{\astore,\aheap}],\aheap}$.
From $\emptyset \neq \minpath{\astore(\avariable)}{\semantics{\aterm_1}_{\astore,\aheap}}{\aheap} \subseteq \minpath{\astore(\avariable)}{\astore(\avariablebis)}{\aheap}$, 
we conclude that
 $\minpath{\astore(\avariable)}{\astore(\avariablebis)}{\aheap} \neq \emptyset$ and  $\semantics{\aterm_1}_{\astore,\aheap} \in 
\minpath{\astore(\avariable)}{\astore(\avariablebis)}{\aheap} \cup \{\astore(\avariablebis)\}$.
Therefore, $\pair{\astore}{\aheap} \models \inpath{\avariable}{\avariablebis}{\avariableter}\ \avariableter = \aterm_1$.
\cut{
By way of example, let us explain why the axiom~\ref{existsAx:SeesSem} is sound. By propositional reasoning,
the axiom~\ref{existsAx:SeesSem}
is equivalent to
$
\sees{\avariable}{\avariablebis}{\emptyset} \implies
\forallinpath{\avariable}{\avariablebis}{\avariableter}
(\avariable =  \avariableter \lor \avariablebis =  \avariableter \lor \neg \sees{\avariable}{\avariablebis}{\{\avariable,\avariableter,\avariablebis\}}),
$
where $\avariableter \not\in \set{\avariable,\avariablebis}$.
Assume that $\pair{\astore}{\aheap} \models \sees{\avariable}{\avariablebis}{\emptyset}$, which means that there is a path from $\astore(\avariable)$
to $\astore(\avariablebis)$ of length at least one. Pick some location $\alocation \in
\minpath{\astore(\avariable)}{\astore(\avariablebis)}{\aheap}$. Then, either $\alocation = \astore(\avariable)$, or
$\alocation = \astore(\avariablebis)$, or $\alocation$ is strictly between  $\astore(\avariable)$ and
$\astore(\avariablebis)$, and therefore
$\pair{\astore[\avariableter \leftarrow \alocation]}{\aheap} \models \neg
\sees{\avariable}{\avariablebis}{\{\avariable,\avariableter,\avariablebis\}}$.
Hence,
$
\pair{\astore}{\aheap} \models \forallinpath{\avariable}{\avariablebis}{\avariableter}
(\avariable =  \avariableter \lor \avariablebis =  \avariableter \lor \neg \sees{\avariable}{\avariablebis}{\{\avariable,\avariableter,\avariablebis\}})
$.
}
As done in Section~\ref{section:PSL} for 
$\separate$ and $\magicwand$, given a formula $\inpath{\avariable}{\avariablebis}{\avariableter} \aformula$, where $\aformula$ is in $\coretype{\asetvar}{\bound}$, we can show within $\coresys(\weirdexists)$ that there is a conjunction  $\aformulater$ from $\conjcomb{\coreformulae{\asetvar}{2\bound}}$ equivalent to it.
By the 
axiom~\ref{existsAx:Disjunction}, this applies when $\aformula$ is a
Boolean combination of core formulae.

\begin{restatable}{lemma}{lemmaaxiomseliminateexists}\label{lemma:axiomseliminateexists}
  Let $\aformula \in \boolcomb{\coreformulae{\asetvar\cup\{\avariableter\}}{\bound}}$ with $\avariableter \not\in \asetvar \supseteq \{\avariable,\avariablebis\}$. There is a Boolean combination of core formulae $\aformulater \in \boolcomb{\coreformulae{\asetvar}{2\bound}}$ such that $\prove_{\coresys(\weirdexists)} \inpath{\avariable}{\avariablebis}{\avariableter}\aformula \iff \aformulater$.
\end{restatable}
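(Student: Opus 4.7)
The plan is to follow the pattern of Lemmas~\ref{lemma:starPSLelim} and~\ref{lemma:magicwandPSLelim}: reduce the problem to a single core type, and then eliminate the guarded quantifier by a case analysis on the role of $\avariableter$. First I would apply the completeness of $\coresys$ on core types (Theorem~\ref{theo:core2PSLcompl}) to rewrite $\aformula$ as a disjunction $\bigvee_i \aformula_i$ with $\aformula_i \in \coretype{\asetvar\cup\{\avariableter\}}{\bound}$. Axiom~\ref{existsAx:Disjunction} together with the introduction rule~\ref{rule:wierdexistsinference} then give $\prove_{\coresys(\weirdexists)} \inpath{\avariable}{\avariablebis}{\avariableter}\aformula \iff \bigvee_i \inpath{\avariable}{\avariablebis}{\avariableter}\aformula_i$, so it suffices to handle a single core type $\aformulabis$. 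I would then split $\aformulabis$ as $\aformulabis^- \land \aformulabis^+$ where $\aformulabis^-$ collects the literals of $\aformulabis$ that are free of $\avariableter$ and $\aformulabis^+$ collects the rest; repeated use of axiom~\ref{existsAx:Conjunction} yields $\inpath{\avariable}{\avariablebis}{\avariableter}\aformulabis \iff \aformulabis^- \land \inpath{\avariable}{\avariablebis}{\avariableter}\aformulabis^+$, so that only the second conjunct needs to be eliminated.

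The next step is a case analysis on $\aformulabis^+$. If $\aformulabis^+$ is unsatisfiable, Lemma~\ref{lemma:axiomtwoRCct} combined with rule~\ref{rule:wierdexistsinference} and axiom~\ref{existsAx:Bottom} yields $\prove \inpath{\avariable}{\avariablebis}{\avariableter}\aformulabis^+ \iff \bot$. Otherwise, inspecting the equality literals of $\aformulabis^+$ we are in one of two situations: (a) $\avariableter$ is forced to be equal to some term $\aterm \in \atermset{\asetvar}$, or (b) $\avariableter$ is asserted distinct from every term built over $\asetvar$. In case (a), axiom~\ref{core2Ax:Substitute} rewrites $\aformulabis^+$ into a formula free of $\avariableter$, and axioms~\ref{existsAx:ZeroStep} (when $\aterm=\avariable$) or~\ref{existsAx:AtLeastOneStep} (when $\aformulabis^-$ forces $\sees{\avariable}{\aterm}{\{\avariablebis\}}$ or $\aterm=\avariablebis$) discharge the quantifier. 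In case (b), the $\sees$-literals of $\aformulabis^+$ pinpoint a consistent position for $\avariableter$ strictly between $\avariable$ and $\avariablebis$; axiom~\ref{existsAx:Split}, instantiated with the lengths read off $\aformulabis^+$ (whose sum is bounded by $2\bound$), produces the witness, while axiom~\ref{existsAx:SeesSem} rules out any spurious intermediate location and forces meet-point terms of the form $\ameetvar{\cdot}{\avariableter}{\cdot}$ present in $\aformulabis^+$ to coincide with the named witness.

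The main obstacle will be case (b). The core type $\aformulabis^+$ fixes, for every relevant set $\asetmeetvar$ and every length $\inbound \leq \bound$, whether $\seesgeq{\avariable}{\avariableter}{\asetmeetvar}{\inbound}$ and $\seesgeq{\avariableter}{\avariablebis}{\asetmeetvar}{\inbound}$ hold, and in addition constrains which meet-point terms involve $\avariableter$ as a subscript or as a leading term. Translating this data into an equivalent Boolean combination of core formulae over $\atermset{\asetvar}$ requires the full strength of axioms~\ref{core2Ax:SeesSum} and~\ref{core2Ax:SeesNegSum} to split and recompose the path from $\avariable$ to $\avariablebis$, together with the meet-point machinery (axioms~\ref{core2Ax:EqDef},~\ref{core2Ax:Before} and~\ref{core2Ax:SeesSubscript}) to absorb meet-point terms that happen to mention $\avariableter$. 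The jump from $\bound$ to $2\bound$ reflects exactly the concatenation of the $\avariable$-$\avariableter$ and $\avariableter$-$\avariablebis$ sub-paths, each of length at most $\bound$, into a single $\avariable$-$\avariablebis$ path in the translated formula, mirroring the analogous growth observed in Lemma~\ref{lemma:starPSLelim}.
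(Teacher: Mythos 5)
Your proposal follows essentially the same route as the paper: rewrite $\aformula$ as a provably equivalent disjunction of core types, distribute the guarded quantifier using axiom~\ref{existsAx:Disjunction} and rule~\ref{rule:wierdexistsinference}, and then eliminate the quantifier core type by core type with the System~\ref{axioms3:Exists} axioms (axioms~\ref{existsAx:ZeroStep},~\ref{existsAx:AtLeastOneStep} and~\ref{existsAx:Split} providing the witness, and the passage from $\bound$ to $2\bound$ coming from concatenating the $\avariable$-to-$\avariableter$ and $\avariableter$-to-$\avariablebis$ halves), which is exactly the paper's per-core-type reduction to a conjunction in $\conjcomb{\coreformulae{\asetvar}{2\bound}}$ followed by recombination. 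Just note that the technical core must also translate the $\mathtt{rem}$-literals and the meet-point terms mentioning $\avariableter$ (this touches your case (a) as well when the equated term is a meet-point term, since axiom~\ref{core2Ax:Substitute} does not replace occurrences of $\avariableter$ inside meet-point terms), but this is bookkeeping of the same kind you already describe and does not change the approach.
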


\subsection{Eliminating  $\separate$ with a big-step axiom}\label{section:axiom2starelim}
The proof system $\coresys(\separate,\weirdexists)$ for \intervalSL is defined as
$\coresys(\weirdexists)$ augmented by the axioms and the rule from System~\ref{axioms3:Star}.
Its main ingredient is given by the axiom~\ref{starAx2:StarElim} which, following the description in
Section~\ref{subsection:ingredients}, is clearly a big-step axiom. Indeed, as much as we would like to give a set of small-step axioms as we did for \slSW, we argue that producing such an axiomatisation for \intervalSL is
unfeasible.
In the proof system for \slSW, we found out that given two core types $\aformula$ and $\aformulabis$,
$\aformula \separate \aformulabis$ is equivalent to a conjunction of core formulae literals (see the proof
sketch of Lemma~\ref{lemma:starPSLelim}). Similar results hold for the separating implication $\magicwand$
(Lemma~\ref{lemma:magicwandPSLelim}) and the $\weirdexists$ quantifier. This property of being
equivalent to a simple conjunction of core formulae literals facilitates the design of
small-step axioms.
This is not the case for $\separate$ within \intervalSL: given two core types $\aformula$ and $\aformulabis$,
the formula $\aformula \separate \aformulabis$ is equivalent to a non-trivial disjunction of possibly exponentially many conjunctions.
Because of this, small-step axioms are hard to obtain and  some technical developments are needed in order to produce
an adequate
axiom system.
These developments are centered around the notions of symbolic memory states and characteristic formulae.
A \defstyle{symbolic memory state} is an abstraction on the memory state $\pair{\astore}{\aheap}$ that
is guided by the definition of core formulae, essentially highlighting the properties of
$\pair{\astore}{\aheap}$ that are expressible through these formulae, while removing the ones that are not expressible.
Given $\asetvar \subseteq_{\fin} \PVAR$ and $\bound \in \Nat^+$,
a \defstyle{symbolic memory states} $\asms$ over $\pair{\asetvar}{\bound}$ is
defined as a finite structure $\triple{\symbterms}{\amap}{\symbrem}$ such that
\begin{itemize}[nosep]
\item $\symbterms$ is a partition of a subset of $\atermset{\asetvar}$, encoding (dis)equalities. We introduce
the partial function $\equivclass{\ .\ }{\symbterms}: \atermset{\asetvar} \to \symbterms$ such that
given $\aterm \in \atermset{\asetvar}$ returns $\asymbterm \in \symbterms$ and $\aterm \in \asymbterm$, if it exists;
\item $\amap: \symbterms \to \symbterms\times\interval{1}{\bound}$ is a partial function encoding paths between terms and their length;
\item $\symbrem \in \interval{0}{\bound}$, encoding the number of memory cells (up to $\bound$)
not in paths between terms.
\end{itemize}
We denote with $\symbdomain{\asetvar}{\bound}$ the set of these structures.
The \defstyle{abstraction} $\symbms{\astore}{\aheap}{\asetvar}{\bound}$ of a memory state $\pair{\astore}{\aheap}$ is defined as the symbolic memory state $\triple{\symbterms}{\amap}{\symbrem}$ over $\pair{\asetvar}{\bound}$ such that
\begin{itemize}[nosep]
\item $\symbterms \egdef \{ \{ \aterm_1 \in \atermset{\asetvar} \mid \pair{\astore}{\aheap} \models \aterm_1 = \aterm_2 \} \mid \aterm_2 \in \atermset{\asetvar} \}$;
\item {
\addtolength{\jot}{-4pt}
$
\begin{aligned}[t]
\amap(\asymbterm) = (\asymbterm',\inbound) \, \equivdef \,
& \text{there are}\ \aterm_1 {\in} \asymbterm\ \text{and}\
\aterm_2 {\in} \asymbterm'\ \text{such that}\ \pair{\astore}{\aheap} \models \seesgeq{\aterm_1}{\aterm_2}{\atermset{\asetvar}}{\inbound}\ \text{and}\\
&\text{if}\ \inbound < \bound\ \text{then}\
\pair{\astore}{\aheap} \models \lnot \seesgeq{\aterm_1}{\aterm_2}{\atermset{\asetvar}}{\inbound{+}1};
\end{aligned}
$}
\item $\symbrem = \inbound$ $\equivdef$ $\pair{\astore}{\aheap} \models \remgeq{\atermset{\asetvar}\times\atermset{\asetvar}}{\inbound}$ and if $\inbound < \bound$ then
$\pair{\astore}{\aheap} \models \lnot\remgeq{\atermset{\asetvar}\times\atermset{\asetvar}}{\inbound{+}1}$.
\end{itemize}
Thus, a symbolic memory state $\triple{\symbterms}{\amap}{\symbrem}$ over $\pair{\asetvar}{\bound}$ simply stores the truth values for
 equalities, $\mathtt{sees}$ and $\mathtt{rem}$ predicates with respect to a memory state.
Its semantics
is best given through
the \defstyle{characteristic formula} $\charsymbform\triple{\symbterms}{\amap}{\symbrem}$ defined below (sets understood as conjunctions):
\begin{nscenter}
$
\begin{aligned}[t]
&
\formulasubset{\rem{\atermset{\asetvar}\times\atermset{\asetvar}}{\sim}{\symbrem}}{\bmat[\text{if } \symbrem {\neq} \bound \text{ then } (\sim \text{ is } =) \text{ else } (\sim \text{ is } \geq)]}
\land\formulasubset{\aterm_1 \neq \aterm_2}{\bmat[\equivclass{\aterm_1}{\symbterms}\ \text{or}\ \equivclass{\aterm_2}{\symbterms}\ \text{undefined, or}\ \equivclass{\aterm_1}{\symbterms}\neq\equivclass{\aterm_2}{\symbterms}]}
 \\
& {\land}\formulasubset{\aterm_1 = \aterm_2}{\bmat[\equivclass{\aterm_1}{\symbterms}=\equivclass{\aterm_2}{\symbterms} \ \text{defined}]}
  \land
  \formulasubset{\lnot\sees{\aterm_1}{\aterm_2}{\atermset{\asetvar}}}{\bmat[
  \equivclass{\aterm_1}{\symbterms}\ \text{undefined or}\
  \forall\inbound\in\interval{1}{\bound}: \amap(\equivclass{\aterm_1}{\symbterms}) \neq (\equivclass{\aterm_2}{\symbterms},\inbound)]}
  \\
&{\land}
\formulasubset{\sees{\aterm_1}{\aterm_2}{\atermset{\asetvar}}{=}{\inbound}}{\bmat[
\amap(\equivclass{\aterm_1}{\symbterms}) = (\equivclass{\aterm_2}{\symbterms},\inbound)\ \text{and}\ \inbound < \bound]}
\land
\formulasubset{\seesgeq{\aterm_1}{\aterm_2}{\atermset{\asetvar}}{\inbound}}{\bmat[
\amap(\equivclass{\aterm_1}{\symbterms}) = (\equivclass{\aterm_2}{\symbterms},\inbound)\ \text{and}\ \inbound = \bound]}
\end{aligned}
$
\end{nscenter}
From the definitions of $\charsymbform(\asms)$ and $\symbms{\astore}{\aheap}{\asetvar}{\bound}$, we can easily prove the following result.
\begin{restatable}{lemma}{lemmamsmodelsabs}\label{lemma:msmodelsabs}
For every $\pair{\astore}{\aheap}$ and every
$\asms \in \symbdomain{\asetvar}{\bound}$,
$\pair{\astore}{\aheap} \models \charsymbform(\asms)$
iff $\asms = \symbms{\astore}{\aheap}{\asetvar}{\bound}$.
\end{restatable}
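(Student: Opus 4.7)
The plan is to prove both directions by inspection of how $\charsymbform(\asms)$ is built: every conjunct in the characteristic formula corresponds directly to one of the three pieces of data packaged into a symbolic memory state. Write $\asms = \triple{\symbterms}{\amap}{\symbrem}$ throughout, and let $\asms' = \symbms{\astore}{\aheap}{\asetvar}{\bound} = \triple{\symbterms'}{\amap'}{\symbrem'}$.

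For the $(\Leftarrow)$ direction, assume $\asms = \asms'$ and walk through the five families of conjuncts in $\charsymbform(\asms)$. The equalities $\aterm_1 = \aterm_2$ appearing in the formula hold because, by the definition of $\symbterms'$, two terms lie in the same class exactly when $\pair{\astore}{\aheap} \models \aterm_1 = \aterm_2$; the disequality clause covers both the case where the two terms end up in different classes and the case of undefined meet-point terms, which by definition lie in no class of $\symbterms'$ since the empty subset is not retained in the partition. The $\mathtt{sees}$ and $\mathtt{rem}$ conjuncts are immediate from the defining conditions of $\amap'$ and $\symbrem'$: they simply read back the exact value (when smaller than $\bound$) or a $\geq$-lower bound (when the value saturates at $\bound$).

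For the $(\Rightarrow)$ direction, assume $\pair{\astore}{\aheap} \models \charsymbform(\asms)$ and show $\asms = \asms'$ component by component. The equality and disequality conjuncts force $\equivclass{\aterm_1}{\symbterms}$ and $\equivclass{\aterm_1}{\symbterms'}$ to coincide as partial functions on $\atermset{\asetvar}$: a term $\aterm$ is placed in the class of $\aterm'$ in $\symbterms$ iff $\aterm = \aterm'$ occurs as a conjunct of $\charsymbform(\asms)$, iff $\pair{\astore}{\aheap} \models \aterm = \aterm'$, iff $\aterm$ lies in the class of $\aterm'$ in $\symbterms'$; if $\aterm$ does not belong to any class of $\symbterms$, then the conjunction of $\aterm \neq \aterm_1$ for every $\aterm_1 \in \atermset{\asetvar}$ appearing in $\charsymbform(\asms)$ forces $\aterm$ to lie outside every class of $\symbterms'$ as well. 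Knowing $\symbterms = \symbterms'$, the $\mathtt{sees}$ conjuncts pin $\amap$ to $\amap'$ exactly: the $\lnot\sees{\aterm_1}{\aterm_2}{\atermset{\asetvar}}$ conjuncts rule out all pairs where $\amap$ is either undefined or has a different target, while the remaining $\sees{}{}{}{=}{\inbound}$ and $\seesgeq{}{}{}{\inbound}$ conjuncts specify the target and the value. Finally, the single $\mathtt{rem}$ conjunct pins $\symbrem$ to $\symbrem'$ by the same reasoning.

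The main obstacle is the careful bookkeeping around the two boundary phenomena built into $\charsymbform$: first, the switch between the exact form ($=$) and the saturated form ($\geq$) at the threshold $\bound$, which one must verify still determines $\amap$ and $\symbrem$ uniquely in $\interval{1}{\bound}$ and $\interval{0}{\bound}$ respectively; and second, the handling of meet-point terms whose interpretation $\semantics{\aterm}_{\astore,\aheap}$ is undefined and which therefore do not appear in any class of $\symbterms'$. Both phenomena only require care in verifying the bijective correspondence between conjuncts of $\charsymbform(\asms)$ and the defining clauses of $\symbms{\astore}{\aheap}{\asetvar}{\bound}$; once that correspondence is spelled out, the lemma is essentially a tautology about definitions.
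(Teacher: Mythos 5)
Your proposal is correct and follows essentially the same route as the paper, which also obtains this lemma as a direct verification from the definitions of $\charsymbform(\asms)$ and $\symbms{\astore}{\aheap}{\asetvar}{\bound}$, matching each family of conjuncts (equalities/disequalities, $\mathtt{sees}$, $\mathtt{rem}$, with the exact-vs-saturated split at $\bound$) against the three components of the symbolic memory state. The only detail worth making explicit is that for a term $\aterm$ outside every class of $\symbterms$ the decisive conjunct is the self-disequality $\aterm \neq \aterm$, which under the paper's semantics of equality between meet-point terms is exactly what forces $\semantics{\aterm}_{\astore,\aheap}$ to be undefined.
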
%
\noindent Thanks to this lemma, it is easy to see that every satisfiable characteristic formula $\charsymbform(\asms)$ of a symbolic memory state $\asms$ over
$\pair{\asetvar}{\bound}$ is equivalent to exactly one core type in $\coretype{\asetvar}{\bound}$.
Indeed, by definition of core types, the conjunction $\aformula \land \aformulabis$ of two core types $\aformula$ and $\aformulabis$ that are not syntactically equivalent up to associativity and commutativity of $\land$ is unsatisfiable.
Hence, by Lemma~\ref{lemma:msmodelsabs}, if a core type $\aformula \in \coretype{\asetvar}{\bound}$ is satisfied by a memory state $\pair{\astore}{\aheap}$, it must be equivalent to $\charsymbform(\symbms{\astore}{\aheap}{\asetvar}{\bound})$.
By Theorem~\ref{theo:core2PSLcompl} this equivalence is provable in $\coresys$.

The fundamental reason for taking symbolic memory states over memory states is that, given $\asetvar$ and $\bound$, there are finitely many symbolic memory states in $\symbdomain{\asetvar}{\bound}$. This leads to the definition of the
axiom~\ref{starAx2:StarElim}, which given two characteristic formulae $\aformula$ and $\aformulabis$ computes a finite disjunction of characteristic formulae that is equivalent to $\aformula \separate \aformulabis$. This disjunction is defined over a
new composition operator $\asymbunion$ on symbolic memory states that mimicks the
disjoint union $+$ on memory states.
More precisely, the following property
shall be satisfied.
\begin{nscenter}
For all $\pair{\astore}{\aheap}$ and all
$\asms_1,\asms_2$ resp. over $\pair{\asetvar}{\bound_1}$ and $\pair{\asetvar}{\bound_2}$,
$\symbunion{\asms_1}{\asms_2}{\symbms{\astore}{\aheap}{\asetvar}{\bound_1+\bound_2}}$
iff there are $\aheap_1$ and $\aheap_2$ such that
$\aheap_1 + \aheap_2 = \aheap$,
$\asms_1 = \symbms{\astore}{\aheap_1}{\asetvar}{\bound_1}$ and
$\asms_2 = \symbms{\astore}{\aheap_2}{\asetvar}{\bound_2}$,
\end{nscenter}
where $\asymbunion \subseteq \sum_{\asetvar,\bound_1,\bound_2} \symbdomain{\asetvar}{\bound_1}\times\symbdomain{\asetvar}{\bound_2}\times\symbdomain{\asetvar}{\bound_1+\bound_2}$,
and $\asms_1$, $\asms_2$  have satisfiable characteristic formulae.
Defining  $\asymbunion$ is clearly challenging.
Unlike the disjoint union of memory states,
$\asymbunion$ is not functional on its first two components. For instance, let
$\asms = (\{\avariable,\ameetvar{\avariable}{\avariable}{\avariable}\},\emptyset,1)$
and let us determine for which  $\asms'$, we have
$\symbunion{\asms}{\asms}{\asms'}$:
\begin{enumerate}[nosep]
\item As $\asms$ is the abstraction of the memory states $(\astore,\{\alocation_1\pto\alocation_2\})$ and $(\astore,\{\alocation_2\pto\alocation_1)\})$ where $\astore(\avariable) = \alocation_1 \neq \alocation_2$,
the abstraction of $(\astore,\{\alocation_1\pto\alocation_2,\alocation_2\pto\alocation_1\})$ must be a solution for $\asms'$.
More precisely, this abstraction is
$(\asymbterm,\{\asymbterm \pto (\asymbterm,2) \},0)$ where $\asymbterm = \{\avariable,\ameetvar{\avariable}{\avariable}{\avariable}\}$.
\item $\asms$ is however also the abstraction of $(\astore,\{\alocation_1\pto\alocation_2\})$ and $(\astore,\{\alocation_3\pto\alocation_4)\})$ such that $\astore(\avariable) \not\in \{\alocation_1,\alocation_3\}$.
Then, the abstraction $\triple{\{\avariable,\ameetvar{\avariable}{\avariable}{\avariable}\}}{\emptyset}{2}$ must also be a solution for $\asms'$.
\end{enumerate}
The main challenge for defining $\asymbunion$ is the composition of the two  ``garbage'':
memory cells that are abstracted with $\symbrem_1$ and $\symbrem_2$ in $\symbms{\astore}{\aheap_1}{\asetvar_1}{\bound_1}$ and $\symbms{\astore}{\aheap_2}{\asetvar_2}{\bound_2}$ may generate new paths between program variables in $\aheap_1+\aheap_2$.
This possibility was depicted in the first case above.
The definition of $\asymbunion$ can be found in
\ifCSLProc \cite{Demri&Lozes&Mansutti19bis}
\else
Appendix~\ref{appendix-definition-symbolic-composition}
\fi
and is too long
to be presented herein.
Roughly speaking,  for $\triple{\triple{\symbterms}{\amap_1}{\symbrem_1}}{\triple{\symbterms}{\amap_2}{\symbrem_2}}{\triple{\symbterms}{\amap}{\symbrem}}$ being in $\asymbunion$,
one needs to witness
two graph homomorphisms from the graphs $\pair{\symbterms_1}{\amap_1}$ and $\pair{\symbterms_2}{\amap_2}$ to $\pair{\symbterms}{\amap}$,
together with the existence of a partition that guarantees that paths that do not belong to the homomorphisms can be generated using the
memory cells from the garbage (abstracted by $\symbrem_1$ and $\symbrem_2$).
\LongVersionOnly{
Before introducing the axioms of $\separate$ and complete the axiom system of \intervalSL, it is necessary to do a digression on symbolic memory states.
As we clarified in Section~\ref{subsection:axiom2CoreFormulae}, symbolic memory states link together memory states and core types.
Then, it is natural to ask ourselves how the composition heaps lifts to symbolic memory states.
More explicitly we want to characterise the relation $\asymbunion \subseteq \sum_{\asetvar,\bound_1,\bound_2} \symbdomain{\asetvar_1}{\bound_1}\times\symbdomain{\asetvar_2}{\bound_2}\times\symbdomain{\asetvar}{\bound_1+\bound_2}$ on symbolic memory states having satisfiable characteristic formulae, satisfying the following property:
\begin{description}
\item[] For every $\pair{\astore}{\aheap}$ and every
$\asms_1,\asms_2$ resp. over $\pair{\asetvar}{\bound_1}$ and $\pair{\asetvar}{\bound_2}$,
$\symbunion{\asms_1}{\asms_2}{\symbms{\astore}{\aheap}{\asetvar}{\bound_1+\bound_2}}$
iff there are $\aheap_1$ and $\aheap_2$ such that
$\aheap_1 + \aheap_2 = \aheap$,
$\asms_1 = \symbms{\astore}{\aheap_1}{\asetvar}{\bound_1}$ and
$\asms_2 = \symbms{\astore}{\aheap_2}{\asetvar}{\bound_2}$.
\end{description}
The definition of such a relation is challenging. First of, differently from the union of memory state,
$\asymbunion$ cannot be functional on its first two components. Indeed, let $\asms = (\{\avariable,\ameetvar{\avariable}{\avariable}{\avariable}\},\emptyset,1)$. Lets ask ourselves for which symbolic memory states $\asms'$ it should holds that $\symbunion{\asms}{\asms}{\asms'}$:
\begin{enumerate}
\item As $\asms$ is the abstraction of the memory states $(\astore,\{\alocation_1\pto\alocation_2\})$ and $(\astore,\{\alocation_2\pto\alocation_1)\})$ where $\astore(\avariable) = \alocation_1 \neq \alocation_2$,
the abstraction of $(\astore,\{\alocation_1\pto\alocation_2,\alocation_2\pto\alocation_1\})$ must be a solution for $\asms'$.
More precisely, this abstraction is
$(\asymbterm,\{\asymbterm \pto (\asymbterm,2) \},0)$ where $\asymbterm = \{\avariable,\ameetvar{\avariable}{\avariable}{\avariable}\}$.
\item However, $\asms$ is  also the abstraction of $(\astore,\{\alocation_1\pto\alocation_2\})$ and $(\astore,\{\alocation_3\pto\alocation_4)\})$ such that $\astore(\avariable) \not\in \{\alocation_1,\alocation_3\}$.
Then, the abstraction $\triple{\{\avariable,\ameetvar{\avariable}{\avariable}{\avariable}\}}{\emptyset}{2}$ must also be a solution for $\asms'$.
\end{enumerate}
It is clear what is the main challenge in defining $\asymbunion$: memory cells that are abstracted with $\symbrem_1$ and $\symbrem_2$ in $\symbms{\astore}{\aheap_1}{\asetvar_1}{\bound_1}$ and $\symbms{\astore}{\aheap_2}{\asetvar_2}{\bound_2}$ can generate new paths between program variables in $\aheap_1+\aheap_2$.
This possibility was depicted in the first case above.
\emph{The definition of $\asymbunion$ can be found in Appendix~\ref{appendix:CompositionMemoryState}}.
Informally, solving the inclusion problem of a triple of symbolic memory states $\triple{\triple{\symbterms_1}{\amap_1}{\symbrem_1}}{\triple{\symbterms_2}{\amap_2}{\symbrem_2}}{\triple{\symbterms}{\amap}{\symbrem}}$ in $\asymbunion$ can be reduced to the computation of
two graph homomorphisms from the graphs $\pair{\symbterms_1}{\amap_1}$ and $\pair{\symbterms_2}{\amap_2}$ to $\pair{\symbterms}{\amap}$, together with the computation of a partition problem that checks if paths that do not belongs to the homomorphisms can be generated using the additional memory cells represented with the quantities $\symbrem_1$ and $\symbrem_2$.
The definition of $\asymbunion$ lead us to a concise set of axioms (System~\ref{axioms3:Star}) that allows us to handle the $\separate$ conjunction and therefore obtain an axiomatisation of \intervalSL.
}

Together with the other axioms in System~\ref{axioms3:Star}, which essentially allows to rewrite every formula 
into a disjunction of $\aformula \separate \aformulabis$ where $\aformula$ and $\aformulabis$ are 
characteristic formulae, the axiom~\ref{starAx2:StarElim} 
allows us to 
eliminate 
$\separate$, as done in Lemma~\ref{lemma:starPSLelim} for \slSW.

\begin{restatable}{lemma}{lemmaaxiomtwoStarElimination}\label{lemma:axiomtwoStarElimination}
  Let $\aformula \in \boolcomb{\coreformulae{\asetvar}{\bound_1}}$ and $\aformulabis \in \boolcomb{\coreformulae{\asetvar}{\bound_2}}$. There is a Boolean combination of core formulae $\aformulater \in \boolcomb{\coreformulae{\asetvar}{\bound_1+\bound_2}}$ such that $\prove_{\coresys(\separate,\weirdexists)} \aformula \separate \aformulabis \iff \aformulater$.
\end{restatable}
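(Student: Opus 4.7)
The strategy is to mirror the argument used in Lemma~\ref{lemma:starPSLelim} for \slSW: reduce the problem to the case where both arguments of $\separate$ are characteristic formulae of symbolic memory states, and then apply the big-step axiom~\ref{starAx2:StarElim} to finish. The key observation is that every satisfiable core type in $\coretype{\asetvar}{\bound}$ is provably equivalent in $\coresys$ to the characteristic formula $\charsymbform(\asms)$ of a unique symbolic memory state $\asms \in \symbdomain{\asetvar}{\bound}$, as noted directly after Lemma~\ref{lemma:msmodelsabs}, whereas every unsatisfiable core type is provably equivalent to $\bottom$ by Lemma~\ref{lemma:axiomtwoRCct}.

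First I would invoke Theorem~\ref{theo:core2PSLcompl} (together with classical reasoning on propositional tautologies) to obtain
$\vdash_{\coresys} \aformula \iff \bigvee_{i \in I} \aformula_i$
and
$\vdash_{\coresys} \aformulabis \iff \bigvee_{j \in J} \aformulabis_j$
where each $\aformula_i \in \coretype{\asetvar}{\bound_1}$ and each $\aformulabis_j \in \coretype{\asetvar}{\bound_2}$. Discarding unsatisfiable disjuncts using axiom~\ref{starAx2:False} (together with~\ref{rule:star2inference} and~\ref{starAx2:StarCommute}), we may assume each $\aformula_i$ and $\aformulabis_j$ is satisfiable, and replace each by the characteristic formula of its abstraction, so that $\vdash_{\coresys} \aformula_i \iff \charsymbform(\asms_i^1)$ and $\vdash_{\coresys} \aformulabis_j \iff \charsymbform(\asms_j^2)$ for some $\asms_i^1 \in \symbdomain{\asetvar}{\bound_1}$ and $\asms_j^2 \in \symbdomain{\asetvar}{\bound_2}$.

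Next I would push $\separate$ inside the disjunctions. The axiom~\ref{starAx2:DistrOr} gives one direction of distributivity, and the converse direction is routinely derivable from the congruence rule~\ref{rule:star2inference} applied to the tautologies $\aformula \implies \aformula \lor \aformulabis$ and $\aformulabis \implies \aformula \lor \aformulabis$. Combined with the commutativity axiom~\ref{starAx2:StarCommute}, this yields
$\vdash_{\coresys(\separate,\weirdexists)} \aformula \separate \aformulabis \iff \bigvee_{(i,j)\in I\times J} \charsymbform(\asms_i^1) \separate \charsymbform(\asms_j^2)$.
Finally, the big-step axiom~\ref{starAx2:StarElim} rewrites each disjunct $\charsymbform(\asms_i^1) \separate \charsymbform(\asms_j^2)$ as $\bigvee_{\asms :\, \symbunion{\asms_i^1}{\asms_j^2}{\asms}} \charsymbform(\asms)$, which is a Boolean combination of core formulae from $\coreformulae{\asetvar}{\bound_1+\bound_2}$ by definition of characteristic formulae over $\pair{\asetvar}{\bound_1+\bound_2}$. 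Taking $\aformulater$ to be the resulting (finite) disjunction of characteristic formulae concludes the proof.

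\textbf{Main obstacle.} The bulk of the genuine difficulty is already encapsulated in axiom~\ref{starAx2:StarElim}; what remains for this lemma is essentially bookkeeping. The one subtle point I would watch is the reverse distributivity of $\separate$ over $\lor$, which is not listed as an axiom but needs to be derived from~\ref{rule:star2inference} (and propositional reasoning), and the systematic use of~\ref{starAx2:StarCommute} to apply axioms that are stated asymmetrically on only one argument of $\separate$.
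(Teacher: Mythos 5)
Your proposal is correct and follows essentially the same route as the paper: reduce $\aformula$ and $\aformulabis$ to disjunctions of core types via Theorem~\ref{theo:core2PSLcompl}, identify the satisfiable core types with characteristic formulae of symbolic memory states (discarding the unsatisfiable ones via Lemma~\ref{lemma:axiomtwoRCct} and~\ref{starAx2:False}), distribute $\separate$ over the disjunctions using~\ref{starAx2:DistrOr}, \ref{rule:star2inference} and~\ref{starAx2:StarCommute}, and finish with the big-step axiom~\ref{starAx2:StarElim}, whose output is by construction in $\boolcomb{\coreformulae{\asetvar}{\bound_1+\bound_2}}$. The bookkeeping points you flag (reverse distributivity via the congruence rule, symmetric use of commutativity) are exactly the ones the paper handles, so no gap remains.
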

\noindent The adequacy of $\coresys(\separate,\weirdexists)$ then stems from Theorem~\ref{theo:core2PSLcompl} and Lemmata~\ref{lemma:axiomseliminateexists} and~\ref{lemma:axiomtwoStarElimination}.

\begin{restatable}{theorem}{theoaxiomstwosoundcomplete}\label{theo:axioms2soundcomplete}
$\coresys(\separate,\weirdexists)$ is sound and complete for \intervalSL.
\end{restatable}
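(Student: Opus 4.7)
The plan is to mirror the completeness argument used for \slSW in Theorem~\ref{theo:PSLcompleteAx}: reduce every \intervalSL formula to a provably equivalent Boolean combination of core formulae, and then invoke the completeness of $\coresys$ (Theorem~\ref{theo:core2PSLcompl}). Soundness is handled component by component. $\coresys$ is sound by Lemma~\ref{lemma:axiomstwocoresound}; the axioms for $\weirdexists$ in System~\ref{axioms3:Exists} are checked semantically, in the spirit of the sketch already given for~\ref{existsAx:AtLeastOneStep}; and the axioms and rule for $\separate$ in System~\ref{axioms3:Star} are sound thanks to the defining property of $\asymbunion$ combined with Lemma~\ref{lemma:msmodelsabs}, which identifies characteristic formulae with abstractions of memory states. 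The two inference rules~\ref{rule:wierdexistsinference} and~\ref{rule:star2inference} obviously preserve validity.

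For completeness, I would show by structural induction on $\aformula \in \intervalSL$ that there exist $\asetvar \subseteq_{\fin} \PVAR$, $\bound \in \Nat$ and $\aformula^\star \in \boolcomb{\coreformulae{\asetvar}{\bound}}$ such that $\prove_{\coresys(\separate,\weirdexists)} \aformula \iff \aformula^\star$. The atomic cases express the primitives of \intervalSL as core formulae (for instance $\emp$ becomes $\lnot \remgeq{\emptyset}{1}$ and $\avariable \Ipto \avariablebis$ a short conjunction of $\sees$/$\rem$-literals of length one), and the Boolean cases are immediate by propositional reasoning. For $\aformula_1 \separate \aformula_2$, the induction hypothesis together with the rule~\ref{rule:star2inference} and the commutativity axiom~\ref{starAx2:StarCommute} yields $\prove \aformula_1 \separate \aformula_2 \iff \aformula_1^\star \separate \aformula_2^\star$, and Lemma~\ref{lemma:axiomtwoStarElimination} then collapses the right-hand side into a Boolean combination of core formulae. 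The quantifier case is analogous: the rule~\ref{rule:wierdexistsinference} propagates the equivalence under $\weirdexists$, and Lemma~\ref{lemma:axiomseliminateexists} discharges it.

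Once this reduction is established, completeness follows: if $\aformula$ is valid then so is $\aformula^\star$ by soundness, Theorem~\ref{theo:core2PSLcompl} provides $\prove_{\coresys} \aformula^\star$, and the provable equivalence transports derivability from $\aformula^\star$ back to $\aformula$. The genuine difficulty of this theorem does not lie in the final assembly but inside Lemma~\ref{lemma:axiomtwoStarElimination}: its proof must derive, within $\coresys(\separate,\weirdexists)$, that every formula of $\boolcomb{\coreformulae{\asetvar}{\bound}}$ is provably equivalent to a finite disjunction of characteristic formulae (a step that leans on Theorem~\ref{theo:core2PSLcompl} together with Lemma~\ref{lemma:msmodelsabs}), and then to justify the big-step axiom~\ref{starAx2:StarElim}, whose correctness depends on the defining property of the composition $\asymbunion$---a subtle, non-functional relation built from graph homomorphisms between symbolic memory states plus a partition argument that has to correctly redistribute the ``garbage'' abstracted by $\symbrem_1$ and $\symbrem_2$.
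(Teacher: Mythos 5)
Your proposal is correct and follows essentially the same route as the paper: the adequacy of $\coresys(\separate,\weirdexists)$ is obtained by checking soundness axiom by axiom (Lemma~\ref{lemma:axiomstwocoresound} plus the semantic arguments for Systems~\ref{axioms3:Exists} and~\ref{axioms3:Star} via $\asymbunion$ and Lemma~\ref{lemma:msmodelsabs}), and by a bottom-up provable reduction of every \intervalSL formula to a Boolean combination of core formulae using Lemmata~\ref{lemma:axiomseliminateexists} and~\ref{lemma:axiomtwoStarElimination}, after which Theorem~\ref{theo:core2PSLcompl} concludes. Your observation that the real work sits inside the $\separate$-elimination lemma and the correctness of the axiom~\ref{starAx2:StarElim} with respect to $\asymbunion$ matches the paper's own emphasis.
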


\subsection{A \pspace upper bound for checking \intervalSL satisfiability}\label{subsection:PSpace}
In this short section, we  explain why the satisfiability problem for \intervalSL is in \pspace.
The \defstyle{memory size} of a formula $\aformula$, written $\msize{\aformula}$, is defined inductively as:
\begin{itemize*}
\item[] $\msize{\avariable = \avariablebis} \egdef \msize{\emp} {=} 1$,
\item[] $\msize{\avariable \Ipto \avariablebis} \egdef 2$,
\item[] $\msize{\inpath{\avariable}{\avariablebis}{\avariableter}\aformula} \egdef 2\times\msize{\aformula}$,
\item[] $\msize{\lnot \aformulabis} \egdef \msize{\aformulabis}$,
\item[] $\msize{\aformulabis_1 \separate \aformulabis_2} \egdef \msize{\aformulabis_1} + \msize{\aformulabis_2}$ and
\item[] $\msize{\aformulabis_1 \land \aformulabis_2} \egdef \max(\msize{\aformulabis_1},\msize{\aformulabis_2})$.
\end{itemize*}
Given  $\aformula$ with tree height $\alength$, $\msize{\aformula} \leq 2^{\alength+1}$.
Intuitively, $\msize{\aformula}$ provides an upper bound on the path length between  terms
and on the size of the garbage on models for $\aformula$ (above $\msize{\aformula}$, $\aformula$
cannot see the difference).
As a consequence of the proofs for the elimination of the connectives $\weirdexists$ and $\separate$ in the calculus,
for each $\aformula$  in \intervalSL,  there is a Boolean combination  of core formulae
from $\coreformulae{\chars{\aformula}}{\msize{\aformula}}$  logically equivalent to $\aformula$.
\cut{
As a consequence of
Lemma~\ref{lemma:axiomseliminateexists} and Lemma~\ref{lemma:axiomtwoStarElimination} (Appendix~\ref{appendix-starelimination})
about elimination of $\weirdexists$ and $\separate$,
for each $\aformula$  in \intervalSL,  there is a Boolean combination  of core formulae
from $\coreformulae{\chars{\aformula}}{\msize{\aformula}}$  logically equivalent to $\aformula$.
}
\LongVersionOnly{
we get the result below.
\begin{corollary}
\label{corollary:small-core-formulae}
Let $\aformula$ be in \intervalSL. There is a Boolean combination  of core formulae
from $\coreformulae{\chars{\aformula}}{\msize{\aformula}}$ that is logically equivalent to $\aformula$.
\end{corollary}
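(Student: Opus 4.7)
The plan is to proceed by structural induction on $\aformula$, relying on the two elimination lemmata (Lemma~\ref{lemma:axiomseliminateexists} for $\weirdexists$ and Lemma~\ref{lemma:axiomtwoStarElimination} for $\separate$) for the compound cases, and verifying the atomic cases by direct translation. Throughout, I will use that $\coreformulae{\asetvar}{\bound}$ is monotone in both $\asetvar$ and $\bound$, so in each induction step it suffices to exhibit \emph{some} witness lying in the required set.

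For the atomic cases, one checks: (i) $\avariable=\avariablebis$ is itself a core formula in $\coreformulae{\{\avariable,\avariablebis\}}{1}$, matching $\msize{\avariable=\avariablebis}=1$; (ii) $\emp$ is logically equivalent to $\lnot\remgeq{\emptyset}{1}$, which lies in $\boolcomb{\coreformulae{\emptyset}{1}}$, matching $\msize{\emp}=1$; (iii) $\avariable\Ipto\avariablebis$ is equivalent to $\seesgeq{\avariable}{\avariablebis}{\emptyset}{1}\land\lnot\seesgeq{\avariable}{\avariablebis}{\emptyset}{2}$, which lies in $\boolcomb{\coreformulae{\{\avariable,\avariablebis\}}{2}}$, matching $\msize{\avariable\Ipto\avariablebis}=2$. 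The Boolean cases $\lnot\aformulabis$ and $\aformulabis_1\land\aformulabis_2$ are immediate from the inductive hypotheses together with the identities $\msize{\lnot\aformulabis}=\msize{\aformulabis}$ and $\msize{\aformulabis_1\land\aformulabis_2}=\max(\msize{\aformulabis_1},\msize{\aformulabis_2})$.

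The two interesting cases are the compound constructors. For $\aformula=\aformulabis_1\separate\aformulabis_2$, the induction hypothesis supplies Boolean combinations $\aformulater_i\in\boolcomb{\coreformulae{\chars{\aformulabis_i}}{\msize{\aformulabis_i}}}$ with $\aformulabis_i\iff\aformulater_i$ valid. Two applications of the rule \ref{rule:star2inference} (combined with commutativity, \ref{starAx2:StarCommute}) yield a proof of $\aformulabis_1\separate\aformulabis_2\iff\aformulater_1\separate\aformulater_2$, and then Lemma~\ref{lemma:axiomtwoStarElimination} collapses $\aformulater_1\separate\aformulater_2$ to a Boolean combination in $\boolcomb{\coreformulae{\chars{\aformula}}{\msize{\aformulabis_1}+\msize{\aformulabis_2}}}$, which by the definition $\msize{\aformulabis_1\separate\aformulabis_2}=\msize{\aformulabis_1}+\msize{\aformulabis_2}$ is exactly $\boolcomb{\coreformulae{\chars{\aformula}}{\msize{\aformula}}}$. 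The case $\aformula=\inpath{\avariable}{\avariablebis}{\avariableter}\aformulabis$ is analogous: the induction hypothesis reduces $\aformulabis$ to a formula in $\boolcomb{\coreformulae{\chars{\aformulabis}}{\msize{\aformulabis}}}$, the rule \ref{rule:wierdexistsinference} lifts this equivalence under the guarded quantifier, and Lemma~\ref{lemma:axiomseliminateexists} eliminates the quantifier, producing a Boolean combination in $\boolcomb{\coreformulae{\chars{\aformula}}{2\cdot\msize{\aformulabis}}}=\boolcomb{\coreformulae{\chars{\aformula}}{\msize{\aformula}}}$.

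The main obstacle is really only bookkeeping: one must check at every inductive step that the character set and the size parameter of the witness core fragment align exactly with the values produced by the elimination lemmata. The factor $2$ appearing in $\msize{\inpath{\avariable}{\avariablebis}{\avariableter}\aformulabis}=2\cdot\msize{\aformulabis}$ and the additivity $\msize{\aformulabis_1\separate\aformulabis_2}=\msize{\aformulabis_1}+\msize{\aformulabis_2}$ are engineered precisely so that Lemma~\ref{lemma:axiomseliminateexists} and Lemma~\ref{lemma:axiomtwoStarElimination} respectively yield the right bounds; the only mild subtlety is in the atomic case $\avariable\Ipto\avariablebis$, where a bound of $2$ (not $1$) on sees-predicates is needed to pin the path length to exactly one step, which is reflected in the slightly larger memory-size of a points-to atom.
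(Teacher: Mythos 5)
Your proof is correct and takes essentially the same route as the paper: structural induction where the atoms are translated directly into core formulae and the cases for $\separate$ and the guarded quantifier are discharged by Lemmata~\ref{lemma:axiomtwoStarElimination} and~\ref{lemma:axiomseliminateexists}, the definition of $\msize{\cdot}$ being engineered exactly to absorb their bounds. The only detail you leave implicit is the corner case where the quantified variable coincides with $\avariable$ or $\avariablebis$ (Lemma~\ref{lemma:axiomseliminateexists} assumes $\avariableter \not\in \asetvar \supseteq \{\avariable,\avariablebis\}$), which is resolved by first renaming the bound variable to a fresh one (axiom~\ref{existsAx:Renaming}); this keeps the resulting core formulae within $\coreformulae{\chars{\aformula}}{\msize{\aformula}}$.
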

}

\intervalSL may require small memory states whose heap has an exponential
amount of memory cells, as shown in Section~\ref{section-introduction-to-guarded-logic} with the formula $\mbox{\sc R}^n(\avariable,\avariablebis)$.
So, to establish a \pspace
bound, we cannot rely on an algorithm that guesses a polynomial-size memory state
and performs model-checking on it without further refinements.
\cut{
Assuming that $\forallinpath{\avariable}{\avariablebis}{\avariableter} \aformula \egdef
      \neg \forallinpath{\avariable}{\avariablebis}{\avariableter} \neg \aformula$ and
$\mbox{\sc R}^0(\avariable,\avariablebis) \egdef \avariable \neq \avariablebis \wedge
\inpath{\avariable}{\avariablebis}{\avariableter} \top$,
\LongVersionOnly{
(existence of a path of length at least one between $\avariable$ and $\avariablebis$),
}
let  $\mbox{\sc R}^{n+1}(\avariable,\avariablebis)$
be
\cut{
{\small $
 \avariable \neq \avariablebis \wedge
\inpath{\avariable}{\avariablebis}{\avariableter}   \
\forallinpath{\avariable}{\avariablebis}{\avariableter'} \ \forallinpath{\avariable}{\avariablebis}{\avariableter''}
\ \left((\avariableter' = \avariable \wedge \avariableter'' = \avariableter) \vee
      (\avariableter' = \avariableter \wedge \avariableter'' = \avariablebis)\right) \Rightarrow \mbox{\sc R}^n(\avariableter',\avariableter'')$}.
}
$$
 \avariable \neq \avariablebis \wedge
\inpath{\avariable}{\avariablebis}{\avariableter}   \
\forallinpath{\avariable}{\avariablebis}{\avariableter'} \ \forallinpath{\avariable}{\avariablebis}{\avariableter''}
\ \left((\avariableter' = \avariable \wedge \avariableter'' = \avariableter) \vee
      (\avariableter' = \avariableter \wedge \avariableter'' = \avariablebis)\right) \Rightarrow \mbox{\sc R}^n(\avariableter',\avariableter'').$$

Note that $\mbox{\sc R}^n(\avariable,\avariablebis)$ is of linear size in $n$,
$\msize{\aformula} \in 2^{\mathcal{O}(n)}$
 and $\mbox{\sc R}^n(\avariable,\avariablebis)$ enforces the existence of a path of length at least $2^n$ between
$\avariable$ and $\avariablebis$.
}
Nevertheless, polynomial-size symbolic memory states are able to abstract a garbage of exponential size or a path between
terms
of exponential length by encoding these quantities in binary, which leads to \pspace.

\begin{restatable}{theorem}{theorempspace}\label{theorem:pspace}
The satisfiability problem for \intervalSL is \pspace-complete.
\end{restatable}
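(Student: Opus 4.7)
The plan is to handle the two bounds separately. For \pspace-hardness, I would invoke the fact that \intervalSL contains \slSRp{} (and already the symbolic heap fragment with $\ls$), for which satisfiability is known to be \pspace-hard, so hardness is inherited.

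For the \pspace{} upper bound, the strategy is a symbolic model-checking procedure that operates on symbolic memory states rather than on concrete memory states. Given $\aformula$, let $\asetvar = \chars{\aformula}$ and $\bound = \msize{\aformula}$; note that $\card{\asetvar}$ is linear in the size of $\aformula$ while $\bound \leq 2^{\length{\aformula}+1}$, so $\log \bound$ is polynomial in $\length{\aformula}$. A symbolic memory state $\asms = \triple{\symbterms}{\amap}{\symbrem} \in \symbdomain{\asetvar}{\bound}$ has a partition $\symbterms$ of a subset of $\atermset{\asetvar}$ (polynomial in $\card{\asetvar}$), a partial function $\amap$ whose integer components can be written in binary (polynomial size), and a garbage counter $\symbrem \leq \bound$ also stored in binary. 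Hence any such $\asms$ can be stored in polynomial space.

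The key step is then to design a recursive procedure $\mathtt{check}(\asms,\aformulabis)$ that, given $\asms \in \symbdomain{\asetvar}{\bound}$ with satisfiable characteristic formula and a subformula $\aformulabis$ of $\aformula$, decides whether $\charsymbform(\asms) \models \aformulabis$. Using the elimination lemmas for $\weirdexists$ (Lemma~\ref{lemma:axiomseliminateexists}) and $\separate$ (Lemma~\ref{lemma:axiomtwoStarElimination}), together with Lemma~\ref{lemma:msmodelsabs}, every subformula of $\aformula$ is captured by the information stored in a symbolic memory state over $\pair{\asetvar}{\bound}$, so the procedure proceeds by structural induction: Boolean cases are immediate; core atoms are read off directly from $\asms$; for $\inpath{\avariable}{\avariablebis}{\avariableter}\aformulabis'$, the algorithm nondeterministically guesses a term $\aterm \in \atermset{\asetvar \cup \{\avariableter\}}$ representing the witness and a refined symbolic memory state $\asms'$ over $\pair{\asetvar \cup \{\avariableter\}}{\bound}$ compatible with $\asms$, then calls $\mathtt{check}(\asms',\aformulabis')$; for $\aformulabis_1 \separate \aformulabis_2$, the algorithm nondeterministically guesses $\asms_1$ and $\asms_2$ with $\symbunion{\asms_1}{\asms_2}{\asms}$ (using the composition relation described in Section~\ref{section:axiom2starelim}) and recursively checks both conjuncts.

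The main obstacle will be showing that the $\separate$ case stays in polynomial space: the guess of $\pair{\asms_1}{\asms_2}$ uses polynomial space and checking $\symbunion{\asms_1}{\asms_2}{\asms}$ reduces to verifying two graph homomorphisms between the graphs $\pair{\symbterms_i}{\amap_i}$ and $\pair{\symbterms}{\amap}$ plus a partition condition on the garbage counters; all of these checks are feasible in polynomial time on symbolic memory states. Since recursion depth is bounded by the tree height of $\aformula$ and each activation record takes polynomial space, the whole procedure runs in \apspace{} $=$ \pspace. To decide satisfiability it then suffices to nondeterministically guess a symbolic memory state $\asms$ with satisfiable characteristic formula (checkable in polynomial space via a realisability condition derived from the axioms of $\coresys$) and verify $\mathtt{check}(\asms,\aformula)$. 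Combined with the matching lower bound, this yields \pspace-completeness.
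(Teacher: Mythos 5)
Your proposal follows essentially the same route as the paper: \pspace-hardness inherited from a sublogic, and a \pspace{} upper bound obtained by nondeterministically guessing a satisfiable symbolic memory state over $\pair{\chars{\aformula}}{\msize{\aformula}}$ (with counters encoded in binary, so of polynomial size despite $\msize{\aformula}$ being exponential) and performing symbolic model-checking in which $\separate$ is handled through the symbolic composition $\asymbunion$ and the guarded quantifier through guessing a symbolic witness. One small caution: the symbolic heap fragment with $\ls$ is not a valid source of \pspace-hardness (its satisfiability is tractable), so the lower bound should rest on \slSRp{} or, as the paper does, on quantifier-free separation logic via~\cite{Calcagno&Yang&OHearn01}.
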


\pspace-hardness is inherited from~\cite{Calcagno&Yang&OHearn01}.
To establish the \pspace upper bound, there is a nondeterministic
polynomial-space algorithm that guesses a satisfiable symbolic memory state over $\pair{\chars{\aformula}}{\msize{\aformula}}$
and that performs a symbolic model-checking on it against $\aformula$.
The symbolic approach works fine as separating conjunction and guarded quantification have symbolic counterparts, and
these symbolic operators can be decided in polynomial space.
\LongVersionOnly{
Hence, working on the symbolic memory states amounts to evaluate
the formulae on concrete memory states and the other way around. This is apart from the fact, that the \pspace upper bound
can be guaranteed by the use of small core formulae (see e.g. Corollary~\ref{corollary:small-core-formulae}).}

\section{Conclusion}\label{section:conclusion}
We  presented a method to axiomatise internally separation logics based
on the axiomatisation of Boolean combinations of core formulae
(based on the even more restricted fragment of core types).
We designed the first proof system for \slSW
that is completely internal and highlights the essential ingredients
of the heaplet semantics.
\LongVersionOnly{
 and therefore whose axioms and rules
involve schemas instantiated by formulae in \slSW.
}
To further illustrate our method,
we  provided an internal
Hilbert-style axiomatisation for the new  separation logic \intervalSL.
It contains the ``list quantifier'' $\inpath{\avariable}{\avariablebis}{\avariableter}$
that, we believe, is of interest for its own sake as it allows to quantify over elements of a list.
The completeness proof, following our general pattern,
still reveals to be very complex as not only we had to invent the adequate family of core formulae
but their axiomatisation was
challenging. As far as we know, this is the first Hilbert-style axiomatisation of a separation logic
having $\ls$ and a guarded form of quantification. Moreover, through a small model property derived from its proof system, we proved that \intervalSL has a \pspace-complete satisfiability problem. Finally, both  completeness proofs are constructive, see e.g.~\cite{Doumane17,Luck18}, as we provide means to transform
formulae into Boolean combinations of core formulae and then into disjunctions of core types.

Obviously, Hilbert-style proof systems for separation logics, as presented in the paper,
are of theoretical interest, at least to grasp what are the essential features of the logics. Still, it remains
to be seen whether applications are possible for designing decision procedures, for instance
to feed provers with appropriate axiom instances to accelerate the proof search.\\[2pt]
{\bf Acknowledgements:} We would like to thank the anonymous CSL'20 reviewers for their suggestions and remarks that help us to improve
the quality of this paper.
\LongVersionOnly{
For instance, we have
seen that a small model property has been derived from the Hilbert-style proof system for \intervalSL, which
allowed us to conclude a \pspace upper bound for  \intervalSL satisfiability.
It remains open to identify what more can be concluded for  mechanisation purposes.
}

\newpage

\appendix
 \newpage
\section{A derivation in the proof system for \slSW}
\label{appendix-paragraphe-an-example}
We develop the proof of $\emp \implies (\alloc{\avariable} \wedge \size = 1 \magicwand \size = 1)$ as a more complete example with respect to the one shown in the body of the paper.
We use the following theorem and rules, which can be shown admissible in the calculus:
\begin{nscenter}
  \scalebox{0.9}{
  \lemmalab{\textbf{($\magicwand\land$-DistrL)}}{axiom:magicsep}\,
  $
  (\aformula \magicwand \aformulabis) \land (\aformula \magicwand \aformulater) \implies (\aformula \magicwand \aformulabis \land \aformulater)
  $}%
  \hfill
  \scalebox{0.9}{
  \lemmalab{\textbf{($\land\true$IL)}}{axiom:andtrue}\,
  $
  \aformula \implies \true \land \aformula
  $}%
  \hfill
  \scalebox{0.9}{
  \rulelab{\textbf{$\land$-InfL}}{rule:andinf}
  $
  \inference{\aformula \implies \aformulater}{\aformula \land \aformulabis \implies \aformulater \land \aformulabis}
  $}%
\end{nscenter}
Recall that $\size = \inbound$ is a shortcut for $\size \geq \inbound \land \lnot \size \geq \inbound{+}1$.\\
\scalebox{0.9}{
$
\begin{nd}
\have {5} {\true \separate (\alloc{\avariable} \land \size = 1) \implies (\alloc{\avariable} \land \size = 1) \separate \true} \by{\ref{starAx:Commute}}{}
\have {1} {\alloc{\avariable} \land \size = 1 \implies \size \geq 1} \by{\ref{axiom:andelim}}{}
\have {2} {\alloc{\avariable} \land \size = 1 \separate \true \implies \size \geq 1 \separate \true} \by{\ref{rule:starinference} rule}{1}
\have {3} {\size \geq 1 \separate \true \implies \size \geq 1} \by{\ref{starAx:MonoCore} ($\size \geq 1 \egdef \lnot \emp$)}{}
\have {6} {\true \separate (\alloc{\avariable} \land \size = 1) \implies \size \geq 1} \by{\ref{rule:imptr} twice}{5,2,3}
\have {7} {\true \implies (\alloc{\avariable} \land \size = 1 \magicwand \size \geq 1)} \by{\ref{rule:staradj} rule}{6}
\have {12} {\emp \implies (\alloc{\avariable} \land \size = 1 \magicwand \lnot \size \geq 2)} \by{\small{derived in the body of the paper}}{}
\have {14} {(\alloc{\avariable} \land \size = 1 \magicwand \lnot \size \geq 2) \implies\\
 \true \land (\alloc{\avariable} \land \size = 1 \magicwand \lnot \size \geq 2)} \by{\ref{axiom:andtrue}}{}
\have {15} { \true \land (\alloc{\avariable} \land \size = 1 \magicwand \lnot \size \geq 2) \implies\\
\big((\alloc{\avariable} \land \size = 1 \magicwand \size \geq 1) \land\\ (\alloc{\avariable} \land \size = 1 \magicwand \lnot \size \geq 2)\big)} \by{\ref{rule:andinf}}{7}
\have {16} {\big((\alloc{\avariable} \land \size = 1 \magicwand \size \geq 1) \land\\ (\alloc{\avariable} \land \size = 1 \magicwand \lnot \size \geq 2)\big) \implies\\
(\alloc{\avariable} \land \size = 1 \magicwand \size = 1)
} \by{\ref{axiom:magicsep}}{}
\have {17} {(\alloc{\avariable} \land \size = 1 \magicwand \lnot \size \geq 2) \implies\\ (\alloc{\avariable} \land \size = 1 \magicwand \size = 1)} \by{\ref{rule:imptr} twice}{14,15,16}
\have {18} {\emp \implies
(\alloc{\avariable} \land \size = 1 \magicwand \size = 1)
} \by{\ref{rule:imptr}}{12,17}
\end{nd}
$}

\section{Definition of the symbolic composition $\asymbunion$}
\label{appendix-definition-symbolic-composition}
In this section we provide useful information on the symbolic composition $\asymbunion$ evoked in Section~\ref{section:axiom2starelim}.
\ifLongVersionWithAppendix
We refer the reader to this section for the definitions (central in this appendix) of symbolic memory state, abstraction
of a memory state and characteristic formula.
\fi
We denote with $\pi_i$ the projection map, so that
given a symbolic memory state $\triple{\symbterms}{\amap}{\symbrem}$ over $\pair{\asetvar}{\bound}$, the function
$\pi_1 \circ \amap$ goes from $\symbterms$ to $\symbterms$, whereas $\pi_2 \circ \amap$ goes from $\symbterms$ to  $\interval{1}{\bound}$
($\pi_i$ is the projection on the $i$th component).

We want to characterise the relation $\asymbunion \subseteq \sum_{\asetvar,\bound_1,\bound_2}
\symbdomain{\asetvar}{\bound_1}\times\symbdomain{\asetvar}{\bound_2}\times\symbdomain{\asetvar}{\bound_1+\bound_2}$, on symbolic memory states having satisfiable characteristic formulae, such that
\begin{description}
\item[\lemmalab{(+$_{\mathsf{sms}}$-adequacy)}{compositionproperty}] For all $\pair{\astore}{\aheap}$ and all
$\asms_1$ and $\asms_2$,
respectively over $\pair{\asetvar}{\bound_1}$ and $\pair{\asetvar}{\bound_2}$,
$\triple{\asms_1}{\asms_2}{\symbms{\astore}{\aheap}{\asetvar}{\bound_1+\bound_2}} \in \asymbunion$
iff
$\aheap_1 + \aheap_2 = \aheap$,
$\asms_1 = \symbms{\astore}{\aheap_1}{\asetvar}{\bound_1}$ and
$\asms_2 = \symbms{\astore}{\aheap_2}{\asetvar}{\bound_2}$ for some $\aheap_1$ and $\aheap_2$.
\end{description}
We write $\symbunion{\triple{\symbterms_1}{\amap_1}{\symbrem_1}}{\triple{\symbterms_2}{\amap_2}{\symbrem_2}}{\triple{\symbterms}{\amap}{\symbrem}}$ for
$\triple{\triple{\symbterms_1}{\amap_1}{\symbrem_1}}{\triple{\symbterms_2}{\amap_2}{\symbrem_2}}{\triple{\symbterms}{\amap}{\symbrem}} \in \asymbunion$.
By Lemma~\ref{lemma:msmodelsabs}, if $\asymbunion$ satisfies \ref{compositionproperty}, then it correctly behaves with respect 
to the separating conjunction $\separate$, as follows:
\begin{nscenter}
$\symbunion{\triple{\symbterms_1}{\amap_1}{\symbrem_1}}{\triple{\symbterms_2}{\amap_2}{\symbrem_2}}{\triple{\symbterms}{\amap}{\symbrem}}\quad
\text{iff}\quad
\models \charsymbform{\triple{\symbterms}{\amap}{\symbrem}} \implies \charsymbform{\triple{\symbterms_1}{\amap_1}{\symbrem_1}} \separate \charsymbform{\triple{\symbterms_2}{\amap_2}{\symbrem_2}}$
\end{nscenter}
As shown in Section~\ref{section:axiom2starelim}, $\asymbunion$ cannot be functional in its first two components.
Moreover, in Section~\ref{section:axiom2starelim} we described the membership problem $\triple{\triple{\symbterms}{\amap_1}{\symbrem_1}}{\triple{\symbterms}{\amap_2}{\symbrem_2}}{\triple{\symbterms}{\amap}{\symbrem}} \in \asymbunion$ to be roughly equivalent to the problem of finding
two graph homomorphisms from the graphs $\pair{\symbterms_1}{\amap_1}$ and $\pair{\symbterms_2}{\amap_2}$ to $\pair{\symbterms}{\amap}$,
together with the existence of a partition that guarantees that paths that do not belong to the homomorphisms can be generated using the
memory cells from the garbage (abstracted by $\symbrem_1$ and $\symbrem_2$) -- see the details below.
Now, we  define this problem and we give an informal explanation to its components.
\cut{
Notice that $\asymbunion$ cannot be functional on its first two components.
Take for example the symbolic memory state $\asms = (\{\avariable,\ameetvar{\avariable}{\avariable}{\avariable}\},\emptyset,1)$ and let us
ask ourselves for which symbolic memory states $\asms$,  it should hold that $\symbunion{\asms}{\asms}{\asms'}$
There are (at least) two possibilities.
\begin{enumerate}
\item As $\asms$ is the abstraction of the memory states $(\astore,\{\alocation_1\pto\alocation_2\})$ and $(\astore,\{\alocation_2\pto\alocation_1)\})$ where $\astore(\avariable) = \alocation_1 \neq \alocation_2$,
the abstraction of $(\astore,\{\alocation_1\pto\alocation_2,\alocation_2\pto\alocation_1\})$ must be a solution for $\asms'$.
More precisely, this abstraction is
$(\asymbterm,\{\asymbterm \pto (\asymbterm,2) \},0)$ where $\asymbterm = \{\avariable,\ameetvar{\avariable}{\avariable}{\avariable}\}$.
\item However, $\asms$ is  also the abstraction of $(\astore,\{\alocation_1\pto\alocation_2\})$ and $(\astore,\{\alocation_3\pto\alocation_4)\})$ such that $\astore(\avariable) \not\in \{\alocation_1,\alocation_3\}$.
Then, the abstraction $\triple{\{\avariable,\ameetvar{\avariable}{\avariable}{\avariable}\}}{\emptyset}{2}$ must also be a solution for $\asms'$.
\end{enumerate}
From the above example, it is clear what is the main challenge in defining
This possibility was depicted in the first case above.
}

Let $\asms_1 = {\triple{\symbterms_1}{\amap_1}{\symbrem_1}}$, $\asms_2 = {\triple{\symbterms_2}{\amap_2}{\symbrem_2}}$ and $\asms = {\triple{\symbterms}{\amap}{\symbrem}}$ be three symbolic memory states
respectively over $\pair{\asetvar}{\bound_1}$, $\pair{\asetvar}{\bound_2}$ and
$\pair{\asetvar}{\bound_1+\bound_2}$
We have $\symbunion{\asms_1}{\asms_2}{\asms}$ $\equivdef$ \\

{
\small
\noindent $\charsymbform(\asms_1)$, $\charsymbform(\asms_2)$ and $\charsymbform(\asms)$ are satisfiable
and
there are injections $\ainj_1:\symbterms_1 \to \symbterms$ and $\ainj_2:\symbterms_2 \to \symbterms$
satisfying the following 4 conditions:

\begin{enumerate}
\item[\lemmalab{1}{C-functionality}.] for each $\asymbterm \in \domain{\amap}$ then it cannot be that
$\ainj_1^{-1}(\asymbterm) \in \domain{\amap_1}$ and $\ainj_2^{-1}(\asymbterm) \in \domain{\amap_2}$;
\end{enumerate}

\begin{enumerate}
\item[\lemmalab{2}{C-meet-points}.] for all $i \in \{1,2\}$, for each $\asymbterm \in \symbterms_i$,
  \begin{enumerate}[nosep]
  \item for each $\avariable \in \asetvar$, \ $\avariable \in \asymbterm \iff \avariable \in \ainj_i(\asymbterm)$;
  \item for each meet-point term $\ameetvar{\avariable}{\avariablebis}{\avariableter} \in \asymbterm$,
  if $\ameetvar{\avariablebis}{\avariable}{\avariableter} \not\in \asymbterm$ then
  $\ameetvar{\avariable}{\avariablebis}{\avariableter} \in \ainj_i(\asymbterm)$, otherwise
  $\ameetvar{\avariable}{\avariablebis}{\avariableter} \in \ainj_i(\asymbterm)$ or $\ameetvar{\avariablebis}{\avariable}{\avariableter} \in \ainj_i(\asymbterm)$.
  \end{enumerate}
\end{enumerate}

\begin{enumerate}
\item[\lemmalab{3}{C-homomorphism}.] for all $i \in \{1,2\}$, for all $\aterm_1,\aterm_2 \in \atermset{\asetvar}$ and 
for all $\inbound \in \interval{1}{\bound_i}$, if $\amap_i(\equivclass{\aterm_1}{\symbterms_i}) = (\equivclass{\aterm_2}{\symbterms_i},\inbound)$ then
there is $n \in \Nat$ and a set $\{\asymbterm_0,\dots,\asymbterm_{n+1}\} \subseteq \symbterms$ such that
\begin{enumerate}[nosep]
\item $\asymbterm_0 {=} \ainj_i(\equivclass{\aterm_1}{\symbterms_i})$ and $\asymbterm_{n+1} {=} \ainj_i(\equivclass{\aterm_2}{\symbterms_i})$;
\item for each $j \in \interval{1}{n}$, $\asymbterm_j \not\in\range{\ainj_1} \cup \range{\ainj_2}$;
\item for each $j \in \interval{0}{n}$, $(\pi_1 \circ \amap)(\asymbterm_j) = \asymbterm_{j+1}$;
\item if $\inbound < \bound_i$ then $\sum_{j\in\interval{0}{n}} (\pi_2 \circ \amap)(\asymbterm_j) = \inbound$,
else $\sum_{j\in\interval{0}{n}} (\pi_2 \circ \amap)(\asymbterm_j) \geq \bound_i$.
\end{enumerate}
\end{enumerate}

\noindent Let $\symbsources{\asms_1}{\asms_2}{\asms}$, $\symbtargets{\asms_i}{\asms}$ ($i \in \{1,2\}$) and
$\symbinternal{\asms_1}{\asms_2}{\asms}$ be the sets defined below
\begin{itemize}[nosep]
\item $\symbsources{\asms_1}{\asms_2}{\asms}$ $\egdef$
$\{ \asymbterm \in \symbterms \mid$ $\asymbterm {\cap} \asetvar \neq \emptyset$ and for all $\avariable \in \asymbterm {\cap} \asetvar$ and $i {\in} \{1,2\}$,
$\equivclass{\avariable}{\symbterms_i} {\not \in} \domain{\amap_i}\}$.
\item $\symbtargets{\asms_i}{\asms}$ is the set of $\asymbterm \in \symbterms$\ such that $\asymbterm \cap \asetvar \neq \emptyset$ or there are
$\aterm_1,\aterm_2 \in \atermset{\asetvar}$, $\alength_1,\alength_2 \geq 0$ such that
\begin{itemize}[nosep]
\item $(\pi_1\circ\amap_i)(\equivclass{\aterm_1}{\symbterms_i}) = \equivclass{\aterm_2}{\symbterms_i}$;
\item $(\pi_1 \circ \amap)^{\alength_1}(\ainj_i(\equivclass{\aterm_1}{\symbterms_i})) = \asymbterm$ and
$(\pi_1 \circ \amap)^{\alength_2}(\asymbterm) = \ainj_i(\equivclass{\aterm_2}{\symbterms_i})$;
\item for every $\alength < \alength_1+\alength_2$, $(\pi_2 \circ \amap)^{\alength}(\equivclass{\aterm_1}{\symbterms_i}) \neq \ainj_i(\equivclass{\aterm_2}{\symbterms_i})$.
\end{itemize}
\item
$\symbinternal{\asms_1}{\asms_2}{\asms} \egdef \domain{\amap} \setminus
((\cup_{i\in\{1,2\}}\symbtargets{\asms}{\asms_i}) \setminus \symbsources{\asms_1}{\asms_2}{\asms})$.
\end{itemize}

\cut{
\begin{enumerate}
\item[\lemmalab{4}{C-stillgarbage}.] If $\symbrem_1 < \bound_1$ and $\symbrem_2 < \bound_2$ then
$R \egdef \symbrem +
\sum_{\asymbterm \in \symbinternal{\asms_1}{\asms_2}{\asms}} (\pi_2 \circ \amap)(\asymbterm)$ equals $\symbrem_1 + \symbrem_2$,
else $R \geq \symbrem_1 + \symbrem_2$.
\end{enumerate}
}

\begin{enumerate}
\item[\lemmalab{4}{C-wasgarbage}.]
there are two functions $\amapbis_1,\amapbis_2: \symbinternal{\asms_1}{\asms_2}{\asms} \to \interval{1}{\bound_1+\bound_2}$ and $\symbrem_1',\symbrem_2' \in \Nat$ such that
\begin{enumerate}
\item $\symbrem = \symbrem_1' + \symbrem_2'$ and for every $\asymbterm \in \symbinternal{\asms_1}{\asms_2}{\asms}$, $(\pi_2 \circ \amap)(\asymbterm) = \amapbis_1(\asymbterm) + \amapbis_2(\asymbterm)$;
\item for all $i \in \{1,2\}$,
 $\min(\symbrem_i' + \sum_{\asymbterm \in \symbinternal{\asms_1}{\asms_2}{\asms}} \amapbis_i(\asymbterm), \bound_i) = \min(\symbrem_i, \bound_i)$;
\item for every $\asymbterm \in \symbsources{\asms_1}{\asms_2}{\asms} \cap \domain{\amap}$ there is
$\{\asymbterm_0 {=} \asymbterm, \asymbterm_1,\dots,\asymbterm_n\} \subseteq \symbinternal{\asms_1}{\asms_2}{\asms}$ such that
\begin{itemize}
\item for every
$j \in \interval{0}{n-1}$, $(\pi_1 \circ \amap)(\asymbterm_j) = \asymbterm_{j+1}$;
\item $(\pi_1 \circ \amap)(\asymbterm_n) \in \bigcup_{i \in \{1,2\}} \symbtargets{\asms_i}{\asms}$;
\item given $i \in \{1,2\}$, if $(\pi_1 \circ \amap)(\asymbterm_n) \in \symbtargets{\asms_i}{\asms}$ then
$\sum_{j \in \interval{0}{n}} \amapbis_{(3-i)}(\asymbterm_j) \geq 1$ .
\end{itemize}
\end{enumerate}
\end{enumerate}
\cut{
\begin{enumerate}

\item[\lemmalab{5}{C-wasgarbage}.]
there are two functions $\amapbis_1,\amapbis_2: \symbinternal{\asms_1}{\asms_2}{\asms} \to \interval{1}{\bound_1+\bound_2}$ such that
\begin{enumerate}
\item for all $i \in \{1,2\}$, the sum $\sum_{\asymbterm \in \symbinternal{\asms_1}{\asms_2}{\asms}} \amapbis_i(\asymbterm)$ is at most $\symbrem_i$ whenever $\symbrem_i < \bound_i$;
\item for every $\asymbterm \in \symbinternal{\asms_1}{\asms_2}{\asms}$, $(\pi_2 \circ \amap)(\asymbterm) = \amapbis_1(\asymbterm) + \amapbis_2(\asymbterm)$;
\item for every $\asymbterm \in \symbsources{\asms_1}{\asms_2}{\asms} \cap \domain{\amap}$ there is
$\{\asymbterm_0 {=} \asymbterm, \asymbterm_1,\dots,\asymbterm_n\} \subseteq \symbinternal{\asms_1}{\asms_2}{\asms}$ such that
\begin{itemize}
\item for every
$j \in \interval{0}{n-1}$, $(\pi_1 \circ \amap)(\asymbterm_j) = \asymbterm_{j+1}$;
\item $(\pi_1 \circ \amap)(\asymbterm_n) \in \bigcup_{i \in \{1,2\}} \symbtargets{\asms_i}{\asms}$;
\item given $i \in \{1,2\}$, if $(\pi_1 \circ \amap)(\asymbterm_n) \in \symbtargets{\asms_i}{\asms}$ then
$\sum_{j \in \interval{0}{n}} \amapbis_{(3-i)}(\asymbterm_j) \geq 1$ .
\end{itemize}
\end{enumerate}
\end{enumerate}
}
}

\noindent Let us dissect this involved definition.
First of, notice that we only consider symbolic memory states that abstract concrete ones. This follows directly from the fact that we require $\charsymbform(\asms_1)$, $\charsymbform(\asms_2)$ and $\charsymbform(\asms)$ to be satisfiable (first line in the definition).
The main objects in this definition are the two injections $\ainj_1$ and $\ainj_2$.
Given $i \in \{1,2\}$, the injection $\ainj_i$ goes from elements of $\symbterms_i$ to elements of $\symbterms$. Recall that, by definition of symbolic memory states, members of these sets represent equivalence classes between terms.
Informally speaking, the role of $\ainj_1$ and $\ainj_2$ is then to explain how the terms in $\asms_1$ and $\asms_2$ are updated when the two symbolic memory states are combined into $\asms$.
For the remaining part of the section, let us fix three memory states
$\pair{\astore}{\aheap_1}$, $\pair{\astore}{\aheap_2}$ and $\pair{\astore}{\aheap}$, and let us assume $\asms_1 = \symbms{\astore}{\aheap_1}{\asetvar}{\bound_1} = {\triple{\symbterms_1}{\amap_1}{\symbrem_1}}$, $\asms_2 = \symbms{\astore}{\aheap_2}{\asetvar}{\bound_2} = {\triple{\symbterms_2}{\amap_2}{\symbrem_2}}$
and $\asms = \symbms{\astore}{\aheap}{\asetvar}{\bound_1+\bound_2} = {\triple{\symbterms}{\amap}{\symbrem}}$ to be their abstractions.
The key property of $\ainj_1$ and $\ainj_2$ is that, if $\triple{\asms_1}{\asms_2}{\asms} \in \asymbunion$ holds, then
for every equivalence class $\asymbterm \in \symbterms_i$ ($i \in \{1,2\}$),
the location $\alocation$ corresponding in $\pair{\astore}{\aheap_i}$ to the terms in $\asymbterm$ also corresponds
to the terms in $\ainj_i(\asymbterm)$ w.r.t. $\pair{\astore}{\aheap}$.
Essentially then, each pair in $\ainj_i$ (seen as a binary relation) uniquely corresponds to a location.
When this correspondence is understood, most of the conditions on these two injections are easy to follow.
When possible, we explain these conditions with easy ``small-step'' tautologies of \intervalSL (e.g.\ in \ref{C-meet-points}(a)).
\begin{enumerate}
\item Condition~\ref{C-functionality} essentially tells us that $\symbunion{\asms_1}{\asms_2}{\asms}$ cannot hold if the two symbolic memory states only correspond to concrete models that do not have a disjoint heap domain.
For example, let us consider the symbolic memory state $\asms' = (\asymbterm,\{\asymbterm \pto (\asymbterm,2) \},0)$ where $\asymbterm = \{ \avariable, \ameetvar{\avariable}{\avariable}{\avariable}\}$ and $\avariable \in \asetvar$.
Trivially, it cannot be that there is a symbolic memory state $\asms''$ such that $\symbunion{\asms'}{\asms'}{\asms''}$.
Indeed, every concrete memory state $\pair{\astore'}{\aheap'}$ having $\symbms{\astore'}{\aheap'}{\asetvar}{\bound} = \asms'$ must present a cycle involving the location corresponding to $\avariable$.
Hence, it cannot be that both $\pair{\astore}{\aheap_1}$ and $\pair{\astore}{\aheap_2}$ are abstracted by $\asms'$, as it implies that $\domain{\aheap_1} \cap \domain{\aheap_2} \neq \emptyset$ and therefore
$\aheap_1 + \aheap_2$ is not defined (which is required by \ref{compositionproperty}).
Alternatively, this condition can be easily explained with the formula $\lnot(\seesgeq{\aterm}{\aterm_1}{\asetmeetvar_1}{\inbound_1} \separate \seesgeq{\aterm}{\aterm_2}{\asetmeetvar_2}{\inbound_2})$, which is tautological in \intervalSL.
\item Let us now consider the Condition~\ref{C-meet-points}.
 \ref{C-meet-points}(a) essentially tells us that (dis)equivalences between program variables is preserved, and is better explained just by considering the tautology $\avariable = \avariablebis \iff (\avariable = \avariablebis \separate \true)$ (together with the commutativity of $\separate$).
To explain  \ref{C-meet-points}(b), consider $\pair{\astore}{\aheap_1}$ such that for a meet-point term $\ameetvar{\avariable}{\avariablebis}{\avariableter}$,
we have $\semantics{\ameetvar{\avariable}{\avariablebis}{\avariableter}}_{\astore,\aheap_1} = \alocation$.
Moreover, let $\aheap$
\begin{minipage}{0.8\linewidth}
be such that $\aheap_1 \subheap \aheap$.
The condition \ref{C-meet-points}(b) essentially
states the following:
\begin{itemize}[nosep]
  \item
  if $\ameetvar{\avariable}{\avariablebis}{\avariableter}$ is an asymmetric meet-point then the location $\alocation$ corresponds to $\ameetvar{\avariable}{\avariablebis}{\avariableter}$ also in $\pair{\astore}{\aheap}$. Formally,
  if $\pair{\astore}{\aheap_1} \models \ameetvar{\avariable}{\avariablebis}{\avariableter} \neq \ameetvar{\avariablebis}{\avariable}{\avariableter}$ then
  $\semantics{\ameetvar{\avariable}{\avariablebis}{\avariableter}}_{\astore,\aheap} = \semantics{\ameetvar{\avariable}{\avariablebis}{\avariableter}}_{\astore,\aheap_1}$.
  \item if instead $\ameetvar{\avariable}{\avariablebis}{\avariableter}$ is a symmetric meet-point, then the location $\alocation$ corresponds in $\pair{\astore}{\aheap}$ to at least one term between $\ameetvar{\avariable}{\avariablebis}{\avariableter}$ and $\ameetvar{\avariablebis}{\avariable}{\avariableter}$.
  Formally, if $\pair{\astore}{\aheap_1} \models \ameetvar{\avariable}{\avariablebis}{\avariableter} = \ameetvar{\avariablebis}{\avariable}{\avariableter}$ then
  $\semantics{\ameetvar{\avariable}{\avariablebis}{\avariableter}}_{\astore,\aheap} = \semantics{\ameetvar{\avariable}{\avariablebis}{\avariableter}}_{\astore,\aheap_1}$ or
  $\semantics{\ameetvar{\avariable}{\avariablebis}{\avariableter}}_{\astore,\aheap} = \semantics{\ameetvar{\avariablebis}{\avariable}{\avariableter}}_{\astore,\aheap_1}$.
\end{itemize}
The validity of this condition is pretty straightforward. If $\ameetvar{\avariable}{\avariablebis}{\avariableter}$ is an
asymmetric meet-point, it must be that $\alocation$ belongs to a cycle, as depicted on the first figure on the right.
Then, in every extension of this heap, $\alocation$ must still be the first location reachable from $\avariable$ that belongs to~the~cycle.%
\parfillskip=0pt
\end{minipage}%
\begin{minipage}{0.2\linewidth}
  \begin{flushright}
    \begin{minipage}{0.9\linewidth}
    \begin{tikzpicture}
      \coordinate (cir) at (0,0);
      \def\radius{0.45cm}

      \draw (cir) ++(90:\radius) node[highlightnode,label={[xshift=-15pt,yshift=-6pt]{$\scriptstyle{\ameetvar{\avariable}{\avariablebis}{\avariableter}}$}}] (k) {};

      \draw (cir) ++(270:\radius) node[dot,label={[yshift=-15pt]{$\scriptstyle{{\avariableter}}$}}] (u) {};

      \node[dot,label={[xshift=0pt]$\scriptstyle{\avariable}$}] (m) [above = 0.5cm of k] {};

      \draw (cir) ++(180:\radius) node[highlightnode,label={[xshift=17pt,yshift=-10pt]$\scriptstyle{
      \ameetvar{\avariablebis}{\avariable}{\avariableter}}$}] (ze) {};

      \node[dot,label=below:{$\scriptstyle{\avariablebis}$}] (z) [left=0.6cm of u] {};

      \draw[pto] (m.center) -- (k);
      \draw[pto] (z.center) -- (ze);

      \draw[pto] (u) .. controls +(0:1) and +(0:1) .. (k);
      \draw[pto] (k) .. controls +(180:0.3) and +(90:0.3) .. (ze);
      \draw[pto] (ze) .. controls +(-90:0.3) and +(180:0.3) .. (u);
    \end{tikzpicture}

    \begin{tikzpicture}
      \coordinate (cir) at (0,0);
      \def\radius{0.45cm}

      \draw (cir) ++(90:\radius) node[dot] (k) {};

      \draw (cir) ++(270:\radius) node[dot,label={[yshift=-15pt]{$\scriptstyle{{\avariableter}}$}}] (u) {};

      \node[dot,label={[xshift=0pt]$\scriptstyle{\avariable}$}] (m) [above = 0.5cm of k] {};

      \draw (cir) ++(180:\radius) node[highlightnode,label={[xshift=32pt,yshift=-10pt]$\scriptstyle{
      \ameetvar{\avariable}{\avariablebis}{\avariableter}} = \ameetvar{\avariablebis}{\avariable}{\avariableter}$}] (ze) {};

      \node[dot,label=below:{$\scriptstyle{\avariablebis}$}] (z) [left=0.6cm of u] {};

      \draw[pto] (m.center) -- (k);
      \draw[pto] (z.center) -- (ze);

      \draw[pto] (k) .. controls +(180:0.3) and +(90:0.3) .. (ze);
      \draw[pto] (ze) .. controls +(-90:0.3) and +(180:0.3) .. (u);
    \end{tikzpicture}
    \end{minipage}
    \end{flushright}
\end{minipage}

\noindent
Hence, by definition of meet-points, $\alocation$ still corresponds to $\ameetvar{\avariable}{\avariablebis}{\avariableter}$.
If instead $\ameetvar{\avariable}{\avariablebis}{\avariableter}$ is a symmetric meet-point (as depicted on the second figure) then there are essentially three possibilities.
First, it could be that $\ameetvar{\avariable}{\avariablebis}{\avariableter}$ is still a symmetric meet-point in $\pair{\astore}{\aheap}$. Then, as shortest paths are preserved when taking extensions of a heap, this meet-point must correspond to $\alocation$.
If instead $\ameetvar{\avariable}{\avariablebis}{\avariableter}$ is no longer a symmetric meet-point, $\alocation$ now belongs to a cycle. There are essentially two distinct ways to introduce such a cycle.
\begin{itemize}
  \item It could be that the cycle is closed on a location in the path from $\astore(\avariable)$ to $\semantics{\ameetvar{\avariable}{\avariablebis}{\avariableter}}_{\astore,\aheap_1}$ (excluded). Essentially, this means moving from the memory state depicted in the second figure to the one depicted in the first one.
  Then, $\alocation$ corresponds to $\ameetvar{\avariablebis}{\avariable}{\avariableter}$.
  \item It could be that the cycle is closed on a location in the path from $\astore(\avariablebis)$ to $\semantics{\ameetvar{\avariable}{\avariablebis}{\avariableter}}_{\astore,\aheap_1}$ (excluded).
  For instance, in the memory state depicted in the second figure, this can be done by adding a path from $\astore(\avariableter)$ to $\astore(\avariablebis)$. This case is symmetrical to the previous one, and $\alocation$ corresponds to $\ameetvar{\avariable}{\avariablebis}{\avariableter}$.
\end{itemize}

\item
For Condition~\ref{C-homomorphism},
let us start by reasoning on the two concrete memory states $\pair{\astore}{\aheap_1}$ and $\pair{\astore}{\aheap_2}$, and suppose that $\aheap_1 + \aheap_2$ is defined as $\aheap$. By looking at $\aheap$, it could be that new paths between program variables are generated. An example of this is depicted in the following figure, where a non-empty path from $\astore(\avariablebis)$ to itself is introduced.

\noindent\hfill%
\begin{tikzpicture}[baseline]

\node[dot,label={[xshift=0pt]$\scriptstyle{\avariable}$}] (x) at (0,0) {};
\node[dot,label={[yshift=-15pt]{$\scriptstyle{{\alocation}}$}}] (l) [right = 0.7cm of x] {};
\node[dot,label={[yshift=-15pt]{$\scriptstyle{{\avariablebis}}$}}] (y) [right = 0.7cm of l] {};

\draw[pto] (x.center) -- (l);
\draw[pto] (l.center) -- (y);

\end{tikzpicture}
\qquad\heapsum\qquad
\begin{tikzpicture}[baseline]

\node[dot,label={[xshift=0pt]$\scriptstyle{\avariable}$}] (x) at (0,0) {};
\node[dot,label={[yshift=-15pt]{$\scriptstyle{{\alocation}}$}}] (l) [right = 0.7cm of x] {};
\node[dot,label={[yshift=-15pt]{$\scriptstyle{{\avariablebis}}$}}] (y) [right = 0.7cm of l] {};

\draw[pto] (y) .. controls +(90:0.6) and +(90:0.6) .. (l);
\end{tikzpicture}
\qquad=\qquad
\begin{tikzpicture}[baseline]

\node[dot,label={[xshift=0pt]$\scriptstyle{\avariable}$}] (x) at (0,0) {};
\node[highlightnode,label={[yshift=-15pt]{$\scriptstyle{{\alocation}}$}}] (l) [right = 0.7cm of x] {};
\node[dot,label={[yshift=-15pt]{$\scriptstyle{{\avariablebis}}$}}] (y) [right = 0.7cm of l] {};

\draw[pto] (x.center) -- (l);
\draw[pto] (l.center) -- (y);

\draw[pto] (y) .. controls +(90:0.6) and +(90:0.6) .. (l);
\end{tikzpicture}
\hfill\,

Because of this, the truth values of $\mathtt{sees}$ predicates change. Indeed, it is clear that the memory state on the left satisfies $\seesgeq{\avariable}{\avariablebis}{\atermset{\{\avariable,\avariablebis\}}}{2}$.
However, this core formula is  not satisfied in the memory state on the right (the disjoint union), as the location $\alocation$ corresponds to the meet-point term $\ameetvar{\avariable}{\avariablebis}{\avariablebis}$. However, this memory state still satisfies
$\seesgeq{\avariable}{\avariablebis}{\emptyset}{2}$.
Then, Condition~\ref{C-homomorphism} essentially states that paths between locations corresponding to terms are preserved when the heap is extended.
With respect to program variables, this condition can therefore be understood from the tautology
$\seesgeq{\avariable}{\avariablebis}{\asetmeetvar}{\inbound} \separate \true \implies \seesgeq{\avariable}{\avariablebis}{\emptyset}{\inbound}$.
\end{enumerate}
As we just saw, Condition~\ref{C-homomorphism} deals with existing paths between terms, and states that they still exist when a heap is extended. The last condition (Condition~\ref{C-wasgarbage}) mainly deals instead with the new paths, created by the union of two heaps.
This condition uses the sets $\symbsources{\asms_1}{\asms_2}{\asms}$, $\symbtargets{\asms_i}{\asms}$ ($i \in \{1,2\}$) and
$\symbinternal{\asms_1}{\asms_2}{\asms}$, whose role is explained below.
\begin{itemize}
\item The set $\symbsources{\asms_1}{\asms_2}{\asms}$ contains the set of equivalence classes of $\symbterms$ having variables from which it is possible to start new paths.
An example of such a variable is given by $\avariablebis$ in the three memory states introduced in order to explain Condition~\ref{C-homomorphism}.
The key property is that, when considering the two leftmost memory states, $\astore(\avariablebis)$ does not reach
(in at least one step) any location corresponding to terms (formally, for every $i \in \{1,2\}$, $\equivclass{\avariablebis}{\symbterms_i} {\not \in} \domain{\amap_i}\}$).
\item The definition of $\symbtargets{\asms_i}{\asms}$ is more involved. Essentially, it contains the set of equivalence classes in $\symbterms$ corresponding to locations where new paths end.
By considering the example in Condition~\ref{C-homomorphism}, these locations are $\astore(\avariable)$, $\astore(\avariablebis)$ or $\alocation$, where this last location is taken into account since it belongs to the path from $\astore(\avariable)$ to $\astore(\avariablebis)$ of the leftmost memory state.
\item Lastly, $\symbinternal{\asms_1}{\asms_2}{\asms}$ is the set of equivalence classes in $\symbterms$ that corresponds to locations inside new paths. In particular, a member $\asymbterm$ of $\symbinternal{\asms_1}{\asms_2}{\asms}$ satisfies two 
properties.
\begin{itemize}
  \item In the concrete memory state abstracted with $\asms$, $\asymbterm$ corresponds to a location $\alocation$ in the shortest path from a location corresponding to some element in $\symbsources{\asms_1}{\asms_2}{\asms}$ to a location corresponding to an element in $\symbtargets{\asms_i}{\asms}$ (this latter element excluded);
  \item In the concrete memory states abstracted with $\asms_1$ and $\asms_2$, $\alocation$ does not belong to any non-empty path between location corresponding to terms.
\end{itemize}
As an example, let us consider the following three memory states:

\noindent\hfill%
\begin{tikzpicture}[baseline]

\node[dot,label={[xshift=0pt,yshift=-15pt]$\scriptstyle{\alocation}$}] (l) at (0,0) {};
\node[dot,label={[yshift=0pt]{$\scriptstyle{{\avariable}}$}}] (x) [above left = 0.7cm of l] {};
\node[dot,label={[yshift=-15pt]{$\scriptstyle{{\avariablebis}}$}}] (y) [below left = 0.7cm of l] {};
\node[dot,label={[yshift=-15pt]{$\scriptstyle{{\avariableter}}$}}] (z) [right = 0.7cm of l] {};

\draw[pto] (x.center) -- (l);
\draw[pto] (y.center) -- (l);

\end{tikzpicture}
\qquad\heapsum\qquad
\begin{tikzpicture}[baseline]

\node[dot,label={[yshift=-15pt]{$\scriptstyle{{\alocation}}$}}] (l) at (0,0) {};
\node[dot,label={[yshift=0pt]{$\scriptstyle{{\avariable}}$}}] (x) [above left = 0.7cm of l] {};
\node[dot,label={[yshift=-15pt]{$\scriptstyle{{\avariablebis}}$}}] (y) [below left = 0.7cm of l] {};
\node[dot,label={[yshift=-15pt]{$\scriptstyle{{\avariableter}}$}}] (z) [right = 0.7cm of l] {};

\draw[pto] (l.center) -- (z);

\end{tikzpicture}
\qquad=\qquad
\begin{tikzpicture}[baseline]

\node[highlightnode,label={[xshift=0pt,yshift=-15pt]$\scriptstyle{\alocation}$}] (l) at (0,0) {};
\node[dot,label={[yshift=0pt]{$\scriptstyle{{\avariable}}$}}] (x) [above left = 0.7cm of l] {};
\node[dot,label={[yshift=-15pt]{$\scriptstyle{{\avariablebis}}$}}] (y) [below left = 0.7cm of l] {};
\node[dot,label={[yshift=-15pt]{$\scriptstyle{{\avariableter}}$}}] (z) [right = 0.7cm of l] {};

\draw[pto] (x.center) -- (l);
\draw[pto] (y.center) -- (l);
\draw[pto] (l.center) -- (z);
\end{tikzpicture}
\hfill\,

\noindent Suppose $\asms_1$, $\asms_2$ and $\asms$ to be the abstractions of these three memory states (from the left to the right).
By definition, both $\symbsources{\asms_1}{\asms_2}{\asms}$ and $\symbtargets{\asms_i}{\asms}$ (for $i \in \{1,2\}$) contain the equivalence classes
$\equivclass{\avariable}{\symbterms}$, $\equivclass{\avariablebis}{\symbterms}$ and $\equivclass{\avariableter}{\symbterms}$.
The equivalence class $\equivclass{\ameetvar{\avariable}{\avariablebis}{\avariableter}}{\symbterms}$ is instead in $\symbinternal{\asms_1}{\asms_2}{\asms}$ (together with $\equivclass{\avariable}{\symbterms}$ and $\equivclass{\avariablebis}{\symbterms}$), as the location $\alocation$ that corresponds to it does not belong to any path between terms in the two subheaps abstracted with $\asms_1$ and $\asms_2$, whereas it belongs to the path from $\equivclass{\avariable}{\symbterms}$ to $\equivclass{\avariableter}{\symbterms}$ in the memory state abstracted with $\asms$.
\end{itemize}
We are now ready to explain Condition~\ref{C-wasgarbage}, which mainly deals with quantitative aspects of the \emph{new paths} generated by the union of memory states.
More precisely, supposing that $\pair{\astore}{\aheap_1}$ and $\pair{\astore}{\aheap_2}$ are such that $\aheap_1 + \aheap_2$ is defined, these paths must be created solely from location in $\domain{\aheap_1}$ or $\domain{\aheap_2}$ that (in both memory states) are not inside paths between locations corresponding to program variables.
From the definition of the symbolic memory states $\asms_1 = {\triple{\symbterms_1}{\amap_1}{\symbrem_1}}$ and $\asms_2 = {\triple{\symbterms_2}{\amap_2}{\symbrem_2}}$, these locations are abstracted with the two quantities $\symbrem_1$ and $\symbrem_2$. These two quantities give precise bounds and constraints on the lengths and the types of paths that can occur in $\asms$. Condition~\ref{C-wasgarbage} faithfully reflects these constraints.
\begin{enumerate}[start=4]
\item  Condition~\ref{C-wasgarbage} makes use of two functions $\amapbis_1$ and $\amapbis_2$, and two quantities $\symbrem_1'$ and $\symbrem_2'$. The role of these objects is, roughly speaking, to explain how the quantities $\symbrem_1$ and $\symbrem_2$ are distributed in $\asms$. Let us be more precise.

Given $i \in \{1,2\}$ and an element $\asymbterm$ in $\symbinternal{\asms_1}{\asms_2}{\asms}$,
$\amapbis_i(\asymbterm)$ reflects the number of locations abstracted with $\symbrem_i$ that are in the path from the location corresponding to $\asymbterm$ to the one corresponding to $\pi_1 \circ \amap(\asymbterm)$ (see \ref{C-wasgarbage}(a)).
Similarly, $\symbrem_i'$ reflects the number of locations abstracted with $\symbrem_i$ that are still garbage locations in $\asms$ (i.e. they are abstracted by $\symbrem$).
Condition~\ref{C-wasgarbage} then distinguishes two cases (both treated by \ref{C-wasgarbage}(b)): $\symbrem_i < \bound_i$ and $\symbrem_i = \bound_i$.
\begin{itemize}
\item If $\symbrem_i < \bound_i$ then, by definition of symbolic memory states, it holds that $\pair{\astore}{\aheap_i}$ (which is abstracted by $\asms_i$) contains exactly $\symbrem_i$ memory cells that are not inside paths between locations corresponding to program variables. Hence, exactly this number of locations must be retrieved from $\symbrem$ and the paths generated by terms in $\symbinternal{\asms_1}{\asms_2}{\asms}$.
This is where $\amapbis_i$ and $\symbrem_i$ come into play, as we require $\symbrem_i' {+} \sum_{\asymbterm \in \symbinternal{\asms_1}{\asms_2}{\asms}} \amapbis_i(\asymbterm)$ to be equal~to~$\symbrem_i$.
\item If instead $\symbrem_i = \bound_i$, by definition of symbolic memory states it holds that $\pair{\astore}{\aheap_i}$ contains at least $\symbrem_i$ memory cells that are not inside paths between locations corresponding to program variables.
Hence, we require $\symbrem_i' + \sum_{\asymbterm \in \symbinternal{\asms_1}{\asms_2}{\asms}} \amapbis_i(\asymbterm)$ to be at least $\symbrem_i$.
\end{itemize}
Lastly, let us briefly explain \ref{C-wasgarbage}(c).
Let us consider a non-empty \emph{new} shortest path from a location corresponding to an equivalence class of $\symbsources{\asms_1}{\asms_2}{\asms}$ to a location corresponding to an equivalence class of $\symbtargets{\asms_i}{\asms}$.
Essentially, as it reaches an element in $\symbtargets{\asms_i}{\asms}$, \ref{C-wasgarbage}(c) states that there must be at least one location in this path that belongs to $\domain{\aheap_{3-i}}$ (i.e. the other heap).
Indeed, if this was not the case, the whole path would belong to the heap $\aheap_i$, and hence it is not generated by the union of the two memory states.
\end{enumerate}

\end{document}